\pdfoutput=1
\ifx\synctex\undefined\else\synctex=1\fi
\documentclass[final]{llncs}

\usepackage[utf8x]{inputenc}

\usepackage{microtype}

\usepackage{amsmath} 
\usepackage{amsxtra} 
\usepackage{amssymb}
\usepackage{mathabx}
\usepackage{mathtools}
\usepackage{textcomp}
\usepackage{bm}

\usepackage{url}

\usepackage{commenting}

\usepackage{paralist}

\usepackage{graphicx}

\usepackage{algorithm}
\usepackage{algorithmicx}
\usepackage[noend]{algpseudocode}
\usepackage{caption}

\spnewtheorem{ass}{Assumption}{\bfseries}{\itshape}
\spnewtheorem{fact}{Fact}{\bfseries}{\itshape}

\renewcommand{\iff}{\Leftrightarrow}

\newcommand{\rbr}{{\bf ]\!\!]}}
\newcommand{\lbr}{{\bf [\!\![}}
\newcommand{\sem}[1]{\lbr #1 \rbr}

\renewcommand{\vec}[1]{{\bf {#1}}}

\newcommand{\arrow}[2]{\xrightarrow{{\scriptscriptstyle #1}}}
\newcommand{\Arrow}[1]{\xRightarrow{{\scriptscriptstyle #1}}}
\newcommand{\xArrow}[2]{\xRightarrow[{\scriptscriptstyle #2}]{{\scriptscriptstyle #1}}}

\newcommand{\nat}{{\bf \mathbb{N}}}

\def\age#1{\left[#1\right]}
\def\set#1{{\left\{ #1 \right\}}}

\def\tuple#1{{\langle #1 \rangle}}
\def\nats{{\mathbb{N}}}
\def\zed{\mathbb{Z}}
\def\card#1{{|\!|{#1}|\!|}}
\def\len#1{{\vert{#1}\vert}}
\def\prod{\Delta}
\def\pat{{\mathbf{b}}}
\def\patt{{\widetilde{\mathbf{b}}}}
\def\patg{{\Gamma_\pat}}
\def\pattg{{\Gamma_\patt}}
\def\df#1{\scriptscriptstyle\mathbf{df}(#1)}

\def\Vars{\ensuremath{\Xi}}
\def\rank#1{\langle\!\langle#1\rangle\!\rangle}

\def\Varsi{\ensuremath{\Xi_{\widecheck{\text{\tiny 1..d}}}}}
\def\Varse{\ensuremath{\Xi_{\widehat{\text{\tiny 1..d}}}}}

\def\Varsilj{\ensuremath{\Xi_{\widecheck{{\tiny j_1{.}{.}j_s}}}}}
\def\Varselj{\ensuremath{\Xi_{\widehat{{\tiny j_1{.}{.}j_s}}}}}

\newcommand{\cycles}[2]{\Omega_{{#1}}({#2})}
\newcommand{\cyclestar}[2]{(\Omega_{{#1}}({#2}))^*}

\def\bdwords{\Upsilon}

\renewcommand{\vec}[1]{{\mathbf {#1}}}

\def\proj{\mathbin{\downarrow}}

\def\foreach{\mathrm{REACH}_{\mathit{fo}}}

\declareauthor{ri}{Radu}{blue}
\declareauthor{pg}{Pierre}{red}
\declareauthor{er}{oldtext}{black!20!white}

\renewcommand{\proj}[2]{{#1}\mathclose{\downarrow}_{{#2}}}
\newcommand{\projpatt}[1]{{#1}\mathclose{\downdownarrows}_{{\patt}}}

\title{Interprocedural Reachability for Flat Integer Programs}
\author{Pierre Ganty\inst{1} \and Radu Iosif\inst{2}}
\institute{\({}^1\)IMDEA Software Institute \quad \({}^2\) CNRS/VERIMAG, Grenoble, France}

\pagestyle{plain}

\begin{document}

\maketitle

\begin{abstract}
We study programs with integer data, procedure calls and arbitrary
call graphs. We show that, whenever the guards and updates are given
by octagonal relations, the reachability problem along control flow
paths within some language \(w_1^* \ldots w_d^*\) over program
statements is decidable in \textsc{Nexptime}. To achieve this upper
bound, we combine a program transformation into the same class of
programs but without procedures, with an \textsc{Np}-completeness
result for the reachability problem of procedure-less
programs. Besides the program, the expression \(w_1^* \ldots w_d^*\)
is also mapped onto an expression of a similar form but this time over
the transformed program statements. Several arguments involving
context-free grammars and their generative process enable us to give
tight bounds on the size of the resulting expression. The currently
existing gap between \textsc{Np}-hard and \textsc{Nexptime} can be
closed to \textsc{Np}-complete when a certain parameter of the
analysis is assumed to be constant.
\end{abstract}

\section{Introduction}

This paper studies the complexity of the reachability problem for a
class of programs featuring procedures and local/global variables
ranging over integers. In general, the reachability problem for this
class is undecidable \cite{minsky67}. Thus, we focus on a special case
of the reachability problem which restricts both the class of input
programs and the set of executions considered. The class of input
programs is restricted by considering that all updates to the integer
variables $\vec{x}$ are defined by \emph{octagonal constraints}, that
are conjunctions of atoms of the form $\pm x \pm y \leq c$, with
$x,y\in\vec{x}\cup\vec{x}'$, where $\vec{x}'$ denote the future values
of the program variables. The reachability problem is restricted by
limiting the search to program executions conforming to a regular
expression of the form \(w_1^*\ldots w_d^*\) where the \(w_i\)'s are
finite sequences of program statements.

We call this problem \emph{flat-octagonal reachability}
(fo-reachability, for short). Concretely, given:
\begin{inparaenum}[\upshape(\itshape i\upshape)]
\item a program \(\mathcal{P}\) with procedures and local/global variables, 
      whose statements are specified by octagonal constraints, and
\item a bounded expression \(\pat=w_1^* \ldots w_d^*\), where \(w_i\)'s 
are sequences of statements of \(\mathcal{P}\),
\end{inparaenum}
the fo-reachability problem \(\foreach(\mathcal{P},\pat)\) asks:
can \(\mathcal{P}\) run to completion by executing a sequence of program
statements \(w \in \pat\) ?
Studying the complexity of this problem provides the theoretical
foundations for implementing efficient decision procedures, of
practical interest in areas of software verification, such as
bug-finding \cite{EsparzaG11}, or counterexample-guided abstraction
refinement \cite{KroeningLW13,HojjatIKKR12}.

Our starting point is the decidability of the fo-reachability problem
in the absence of procedures. Recently, the precise complexity of this
problem was coined to \textsc{Np}-complete \cite{bik14}. However, this
result leaves open the problem of dealing with procedures and local
variables, let alone when the graph of procedure calls has
cycles, such as in the example of
Fig. \ref{fig:running-example}~(a). Pinning down the complexity of the
fo-reachability problem in presence of (possibly recursive)
procedures, with local variables ranging over integers, is the
challenge we address here.

The decision procedure we propose in this paper reduces \(\foreach(\mathcal{P},\pat)\), from a program \(\mathcal{P}\)
with arbitrary call graphs, to procedure-less programs as follows:
\begin{compactenum}
\item we apply a source-to-source transformation returning a procedure-less
      program \(\mathcal{Q}\), with statements also defined by
      octagonal relations, such that \(\foreach(\mathcal{P},\pat)\) is
      equivalent to the unrestricted reachability problem
      for \(\mathcal{Q}\), when no particular bounded expression is
      supplied.

\item we compute a bounded expression \(\patg\) over the statements
      of \(\mathcal{Q}\), such that \(\foreach(\mathcal{P},\pat)\) is
      equivalent to \(\foreach(\mathcal{Q},\patg)\).
\end{compactenum}

The above reduction allows us to conclude that the fo-reachability
problem for programs with arbitrary call graphs is decidable and
in \textsc{Nexptime}. Naturally, the \textsc{Np}-hard lower
bound \cite{bik14} for the fo-reachability problem of procedure-less
programs holds in our setting as well.
Despite our best efforts, we did not close the complexity gap yet.
However we pinned down a natural parameter, called \emph{index},
related to programs with arbitrary call graphs, such that, when
setting this parameter to a fixed constant (like \(3\) in \(3\)-SAT),
the complexity of the resulting fo-reachability problem for programs
with arbitrary call graphs becomes \textsc{Np}-complete. Indeed, when
the index is fixed, the aforementioned reduction computing
\(\foreach(\mathcal{Q},\patg)\) runs in polynomial time. Then the
\textsc{Np} decision procedure for the fo-reachability of procedure-less
programs \cite{bik14} shows the rest.

The index parameter is better understood in the context of formal
languages. The control flow of procedural programs is captured
precisely by the language of a context-free grammar. A \(k\)-index (\(k>0\))
underapproximation of this language is obtained by filtering out the derivations 
containing a sentential form with \(k+1\) occurrences of nonterminals.
The key to our results is a toolbox of language theoretic
constructions of independent interest that enables to reason about the
structure of context-free derivations generating words into \(\pat = w_1^* \ldots w_d^*\), that
is, words of the form \(w_1^{i_1} \ldots w_d^{i_d}\) for some integers \(i_1,\ldots, i_d \geq 0\).

\begin{figure}[hbt]
\centering
%
\begin{picture}(0,0)%
\includegraphics{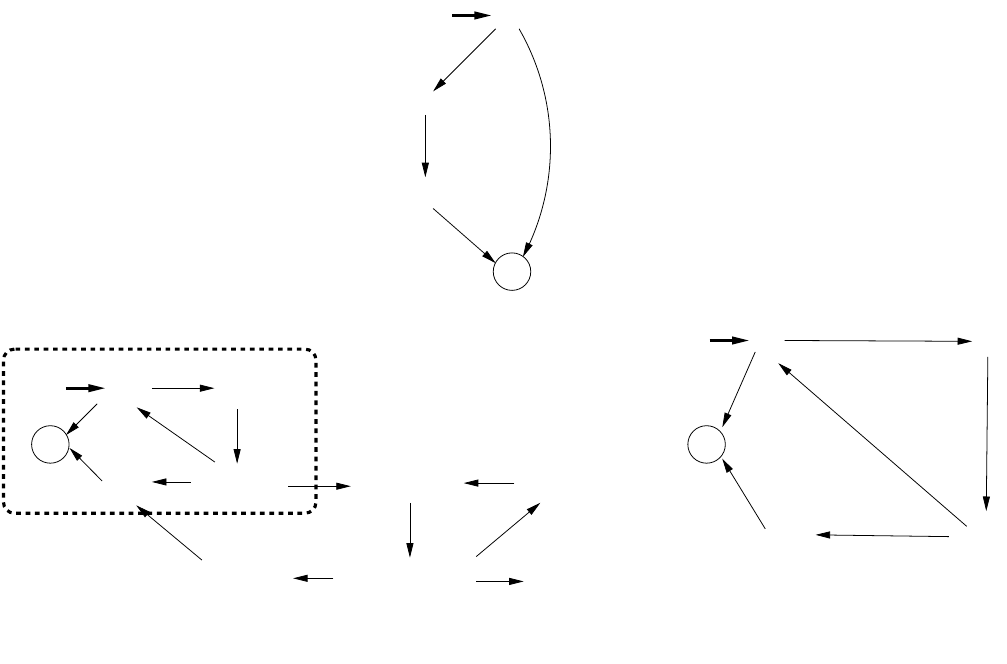}%
\end{picture}%
\setlength{\unitlength}{1973sp}%
\begingroup\makeatletter\ifx\SetFigFont\undefined%
\gdef\SetFigFont#1#2#3#4#5{%
  \reset@font\fontsize{#1}{#2pt}%
  \fontfamily{#3}\fontseries{#4}\fontshape{#5}%
  \selectfont}%
\fi\endgroup%
\begin{picture}(9648,6348)(-32,-5851)
\put(7662,-5144){\makebox(0,0)[lb]{\smash{{\SetFigFont{6}{7.2}{\rmdefault}{\mddefault}{\updefault}{\color[rgb]{0,0,0}$x_J=z_K=0$}%
}}}}
\put(3901,-91){\makebox(0,0)[b]{\smash{{\SetFigFont{6}{7.2}{\rmdefault}{\mddefault}{\updefault}{\color[rgb]{0,0,0}$x'=x$}%
}}}}
\put(9526,-3376){\makebox(0,0)[lb]{\smash{{\SetFigFont{6}{7.2}{\rmdefault}{\mddefault}{\updefault}{\color[rgb]{0,0,0}$x_J=x_I-1$}%
}}}}
\put(9526,-3601){\makebox(0,0)[lb]{\smash{{\SetFigFont{6}{7.2}{\rmdefault}{\mddefault}{\updefault}{\color[rgb]{0,0,0}$z_L=z_K$}%
}}}}
\put(9526,-3826){\makebox(0,0)[lb]{\smash{{\SetFigFont{6}{7.2}{\rmdefault}{\mddefault}{\updefault}{\color[rgb]{0,0,0}$x_L=x_I$}%
}}}}
\put(9526,-4051){\makebox(0,0)[lb]{\smash{{\SetFigFont{6}{7.2}{\rmdefault}{\mddefault}{\updefault}{\color[rgb]{0,0,0}$\mathbf{havoc}()$}%
}}}}
\put(6128,-3392){\makebox(0,0)[lb]{\smash{{\SetFigFont{6}{7.2}{\rmdefault}{\mddefault}{\updefault}{\color[rgb]{0,0,0}$z_O=0$}%
}}}}
\put(6203,-3167){\makebox(0,0)[lb]{\smash{{\SetFigFont{6}{7.2}{\rmdefault}{\mddefault}{\updefault}{\color[rgb]{0,0,0}$x_I=0$}%
}}}}
\put(4126,-886){\makebox(0,0)[lb]{\smash{{\SetFigFont{6}{7.2}{\rmdefault}{\mddefault}{\updefault}{\color[rgb]{0,0,0}$\mathbf{t_2}$}%
}}}}
\put(4351,-1636){\makebox(0,0)[lb]{\smash{{\SetFigFont{6}{7.2}{\rmdefault}{\mddefault}{\updefault}{\color[rgb]{0,0,0}$\mathbf{t_3}$}%
}}}}
\put(4426,-211){\makebox(0,0)[lb]{\smash{{\SetFigFont{6}{7.2}{\rmdefault}{\mddefault}{\updefault}{\color[rgb]{0,0,0}$\mathbf{t_1}$}%
}}}}
\put(2401,-4186){\makebox(0,0)[b]{\smash{{\SetFigFont{5}{6.0}{\rmdefault}{\mddefault}{\updefault}{\color[rgb]{0,0,0}$X_1^{\tuple{0}}X_3^{\tuple{0}}$}%
}}}}
\put(2401,-5086){\makebox(0,0)[b]{\smash{{\SetFigFont{5}{6.0}{\rmdefault}{\mddefault}{\updefault}{\color[rgb]{0,0,0}$X_3^{\tuple{0}}X_3^{\tuple{1}}$}%
}}}}
\put(4726,-3961){\makebox(0,0)[b]{\smash{{\SetFigFont{6}{7.2}{\rmdefault}{\mddefault}{\updefault}{\color[rgb]{0,0,0}$\mathbf{p_1}$}%
}}}}
\put(5101,-4636){\makebox(0,0)[b]{\smash{{\SetFigFont{6}{7.2}{\rmdefault}{\mddefault}{\updefault}{\color[rgb]{0,0,0}$\mathbf{p_3}$}%
}}}}
\put(3901,-5086){\makebox(0,0)[b]{\smash{{\SetFigFont{5}{6.0}{\rmdefault}{\mddefault}{\updefault}{\color[rgb]{0,0,0}$X_3^{\tuple{0}}X_1^{\tuple{1}}X_3^{\tuple{1}}$}%
}}}}
\put(5101,-5161){\makebox(0,0)[lb]{\smash{{\SetFigFont{5}{6.0}{\rmdefault}{\mddefault}{\updefault}{\color[rgb]{0,0,0}$\cdots$}%
}}}}
\put(1801,-3061){\makebox(0,0)[b]{\smash{{\SetFigFont{6}{7.2}{\rmdefault}{\mddefault}{\updefault}{\color[rgb]{0,0,0}$\mathbf{p_1}$}%
}}}}
\put(444,-3811){\makebox(0,0)[b]{\smash{{\SetFigFont{6}{7.2}{\rmdefault}{\mddefault}{\updefault}{\color[rgb]{0,0,0}$\varepsilon$}%
}}}}
\put(1201,-4186){\makebox(0,0)[b]{\smash{{\SetFigFont{5}{6.0}{\rmdefault}{\mddefault}{\updefault}{\color[rgb]{0,0,0}$X_3^{\tuple{0}}$}%
}}}}
\put(699,-4118){\makebox(0,0)[b]{\smash{{\SetFigFont{6}{7.2}{\rmdefault}{\mddefault}{\updefault}{\color[rgb]{0,0,0}$\mathbf{p_3}$}%
}}}}
\put(7876,-1936){\makebox(0,0)[b]{\smash{{\SetFigFont{6}{7.2}{\rmdefault}{\mddefault}{\updefault}{\color[rgb]{0,0,0}(c)}%
}}}}
\put(9601,-2836){\makebox(0,0)[b]{\smash{{\SetFigFont{6}{7.2}{\rmdefault}{\mddefault}{\updefault}{\color[rgb]{0,0,0}$X_2^{\tuple{0}}$}%
}}}}
\put(9601,-4711){\makebox(0,0)[b]{\smash{{\SetFigFont{6}{7.2}{\rmdefault}{\mddefault}{\updefault}{\color[rgb]{0,0,0}$X_1^{\tuple{0}}X_3^{\tuple{0}}$}%
}}}}
\put(7501,-4711){\makebox(0,0)[b]{\smash{{\SetFigFont{6}{7.2}{\rmdefault}{\mddefault}{\updefault}{\color[rgb]{0,0,0}$X_3^{\tuple{0}}$}%
}}}}
\put(6744,-3811){\makebox(0,0)[b]{\smash{{\SetFigFont{6}{7.2}{\rmdefault}{\mddefault}{\updefault}{\color[rgb]{0,0,0}$\epsilon$}%
}}}}
\put(4126,-1411){\makebox(0,0)[b]{\smash{{\SetFigFont{6}{7.2}{\rmdefault}{\mddefault}{\updefault}{\color[rgb]{0,0,0}$X_3$}%
}}}}
\put(4126,-511){\makebox(0,0)[b]{\smash{{\SetFigFont{6}{7.2}{\rmdefault}{\mddefault}{\updefault}{\color[rgb]{0,0,0}$X_2$}%
}}}}
\put(4876,-2161){\makebox(0,0)[b]{\smash{{\SetFigFont{6}{7.2}{\rmdefault}{\mddefault}{\updefault}{\color[rgb]{0,0,0}$\epsilon$}%
}}}}
\put(4876,314){\makebox(0,0)[b]{\smash{{\SetFigFont{6}{7.2}{\rmdefault}{\mddefault}{\updefault}{\color[rgb]{0,0,0}$X_1$}%
}}}}
\put(5626,-661){\makebox(0,0)[b]{\smash{{\SetFigFont{6}{7.2}{\rmdefault}{\mddefault}{\updefault}{\color[rgb]{0,0,0}$x=0$}%
}}}}
\put(5626,-916){\makebox(0,0)[b]{\smash{{\SetFigFont{6}{7.2}{\rmdefault}{\mddefault}{\updefault}{\color[rgb]{0,0,0}$z'=0$}%
}}}}
\put(3826,-1786){\makebox(0,0)[b]{\smash{{\SetFigFont{6}{7.2}{\rmdefault}{\mddefault}{\updefault}{\color[rgb]{0,0,0}$x'=x$}%
}}}}
\put(3976,-2011){\makebox(0,0)[b]{\smash{{\SetFigFont{6}{7.2}{\rmdefault}{\mddefault}{\updefault}{\color[rgb]{0,0,0}$z'=z+2$}%
}}}}
\put(3301,-886){\makebox(0,0)[b]{\smash{{\SetFigFont{6}{7.2}{\rmdefault}{\mddefault}{\updefault}{\color[rgb]{0,0,0}$z'=\mathtt{P}(x-1)$}%
}}}}
\put(3001,-1111){\makebox(0,0)[b]{\smash{{\SetFigFont{6}{7.2}{\rmdefault}{\mddefault}{\updefault}{\color[rgb]{0,0,0}$x'=x$}%
}}}}
\put(3901,-4186){\makebox(0,0)[b]{\smash{{\SetFigFont{5}{6.0}{\rmdefault}{\mddefault}{\updefault}{\color[rgb]{0,0,0}$X_3^{\tuple{0}}X_2^{\tuple{1}}$}%
}}}}
\put(5401,-4186){\makebox(0,0)[b]{\smash{{\SetFigFont{5}{6.0}{\rmdefault}{\mddefault}{\updefault}{\color[rgb]{0,0,0}$X_3^{\tuple{0}}X_1^{\tuple{1}}$}%
}}}}
\put(8026,-3436){\makebox(0,0)[lb]{\smash{{\SetFigFont{6}{7.2}{\rmdefault}{\mddefault}{\updefault}{\color[rgb]{0,0,0}$z_O=z_L+2$}%
}}}}
\put(8026,-3211){\makebox(0,0)[lb]{\smash{{\SetFigFont{6}{7.2}{\rmdefault}{\mddefault}{\updefault}{\color[rgb]{0,0,0}$x_O=x_L$}%
}}}}
\put(7201,-586){\makebox(0,0)[lb]{\smash{{\SetFigFont{6}{7.2}{\rmdefault}{\mddefault}{\updefault}{\color[rgb]{0,0,0}$\mathbf{p_1} ~:~ X_1 ~\rightarrow~ \mathbf{t_1} X_2$ }%
}}}}
\put(7201,-886){\makebox(0,0)[lb]{\smash{{\SetFigFont{6}{7.2}{\rmdefault}{\mddefault}{\updefault}{\color[rgb]{0,0,0}$\mathbf{p_2} ~:~ X_2 ~\rightarrow~ \mathbf{\langle\! t_2} ~X_1~ \mathbf{t_2 \!\rangle} ~X_3$}%
}}}}
\put(7201,-1186){\makebox(0,0)[lb]{\smash{{\SetFigFont{6}{7.2}{\rmdefault}{\mddefault}{\updefault}{\color[rgb]{0,0,0}$\mathbf{p_3} ~:~ X_3 ~\rightarrow~ \mathbf{t_3}$}%
}}}}
\put(7201,-1486){\makebox(0,0)[lb]{\smash{{\SetFigFont{6}{7.2}{\rmdefault}{\mddefault}{\updefault}{\color[rgb]{0,0,0}$\mathbf{p_4} ~:~ X_1 ~\rightarrow~ \mathbf{t_4}$}%
}}}}
\put(7403,-2829){\makebox(0,0)[b]{\smash{{\SetFigFont{6}{7.2}{\rmdefault}{\mddefault}{\updefault}{\color[rgb]{0,0,0}$X_1^{\tuple{0}}$}%
}}}}
\put(4501,-2536){\makebox(0,0)[b]{\smash{{\SetFigFont{6}{7.2}{\rmdefault}{\mddefault}{\updefault}{\color[rgb]{0,0,0}(b)}%
}}}}
\put(1051,-2536){\makebox(0,0)[lb]{\smash{{\SetFigFont{8}{9.6}{\rmdefault}{\mddefault}{\updefault}{\color[rgb]{0,0,0}The program $\mathcal{P}$}%
}}}}
\put(901,-2506){\makebox(0,0)[b]{\smash{{\SetFigFont{6}{7.2}{\rmdefault}{\mddefault}{\updefault}{\color[rgb]{0,0,0}(a)}%
}}}}
\put(601,239){\makebox(0,0)[lb]{\smash{{\SetFigFont{6}{7.2}{\rmdefault}{\mddefault}{\updefault}{\color[rgb]{0,0,0}int {\tt P}(int x) \{}%
}}}}
\put(601,-16){\makebox(0,0)[lb]{\smash{{\SetFigFont{6}{7.2}{\rmdefault}{\mddefault}{\updefault}{\color[rgb]{0,0,0}int z;}%
}}}}
\put(601,-361){\makebox(0,0)[lb]{\smash{{\SetFigFont{6}{7.2}{\rmdefault}{\mddefault}{\updefault}{\color[rgb]{0,0,0}1: assume($x \geq 0$);}%
}}}}
\put(601,-661){\makebox(0,0)[lb]{\smash{{\SetFigFont{6}{7.2}{\rmdefault}{\mddefault}{\updefault}{\color[rgb]{0,0,0}2: if ($x > 0$)}%
}}}}
\put(601,-961){\makebox(0,0)[lb]{\smash{{\SetFigFont{6}{7.2}{\rmdefault}{\mddefault}{\updefault}{\color[rgb]{0,0,0}3:\ \ \ \ z := P(x-1);}%
}}}}
\put(601,-1261){\makebox(0,0)[lb]{\smash{{\SetFigFont{6}{7.2}{\rmdefault}{\mddefault}{\updefault}{\color[rgb]{0,0,0}4:\ \ \ \ z := z+2;}%
}}}}
\put(601,-1861){\makebox(0,0)[lb]{\smash{{\SetFigFont{6}{7.2}{\rmdefault}{\mddefault}{\updefault}{\color[rgb]{0,0,0}6:\ \ \ \ z := 0}%
}}}}
\put(601,-2161){\makebox(0,0)[lb]{\smash{{\SetFigFont{6}{7.2}{\rmdefault}{\mddefault}{\updefault}{\color[rgb]{0,0,0}7: return z; \}}%
}}}}
\put(601,-1561){\makebox(0,0)[lb]{\smash{{\SetFigFont{6}{7.2}{\rmdefault}{\mddefault}{\updefault}{\color[rgb]{0,0,0}5: else}%
}}}}
\put(4951,-1186){\makebox(0,0)[lb]{\smash{{\SetFigFont{6}{7.2}{\rmdefault}{\mddefault}{\updefault}{\color[rgb]{0,0,0}$\mathbf{t_4}$}%
}}}}
\put(1726,-4261){\makebox(0,0)[b]{\smash{{\SetFigFont{6}{7.2}{\rmdefault}{\mddefault}{\updefault}{\color[rgb]{0,0,0}$\mathbf{p_4}$}%
}}}}
\put(3226,-3961){\makebox(0,0)[b]{\smash{{\SetFigFont{6}{7.2}{\rmdefault}{\mddefault}{\updefault}{\color[rgb]{0,0,0}$\mathbf{p_1}$}%
}}}}
\put(1201,-3286){\makebox(0,0)[b]{\smash{{\SetFigFont{5}{6.0}{\rmdefault}{\mddefault}{\updefault}{\color[rgb]{0,0,0}$X_1^{\tuple{0}}$}%
}}}}
\put(2401,-3286){\makebox(0,0)[b]{\smash{{\SetFigFont{5}{6.0}{\rmdefault}{\mddefault}{\updefault}{\color[rgb]{0,0,0}$X_2^{\tuple{0}}$}%
}}}}
\put(966,-3526){\makebox(0,0)[b]{\smash{{\SetFigFont{6}{7.2}{\rmdefault}{\mddefault}{\updefault}{\color[rgb]{0,0,0}$\mathbf{p_4}$}%
}}}}
\put(1785,-3586){\makebox(0,0)[b]{\smash{{\SetFigFont{6}{7.2}{\rmdefault}{\mddefault}{\updefault}{\color[rgb]{0,0,0}$\mathbf{p_3}$}%
}}}}
\put(2451,-3575){\makebox(0,0)[b]{\smash{{\SetFigFont{6}{7.2}{\rmdefault}{\mddefault}{\updefault}{\color[rgb]{0,0,0}$\mathbf{p_2}$}%
}}}}
\put(3066,-4896){\makebox(0,0)[b]{\smash{{\SetFigFont{6}{7.2}{\rmdefault}{\mddefault}{\updefault}{\color[rgb]{0,0,0}$\mathbf{p_4}$}%
}}}}
\put(4954,-4917){\makebox(0,0)[b]{\smash{{\SetFigFont{6}{7.2}{\rmdefault}{\mddefault}{\updefault}{\color[rgb]{0,0,0}$\mathbf{p_1}$}%
}}}}
\put(4100,-4546){\makebox(0,0)[b]{\smash{{\SetFigFont{6}{7.2}{\rmdefault}{\mddefault}{\updefault}{\color[rgb]{0,0,0}$\mathbf{p_2}$}%
}}}}
\put(1371,-4726){\makebox(0,0)[b]{\smash{{\SetFigFont{6}{7.2}{\rmdefault}{\mddefault}{\updefault}{\color[rgb]{0,0,0}$\mathbf{p_3}$}%
}}}}
\put(6477,-4486){\makebox(0,0)[b]{\smash{{\SetFigFont{6}{7.2}{\rmdefault}{\mddefault}{\updefault}{\color[rgb]{0,0,0}$z_0=z_I+2$}%
}}}}
\put(6422,-4261){\makebox(0,0)[b]{\smash{{\SetFigFont{6}{7.2}{\rmdefault}{\mddefault}{\updefault}{\color[rgb]{0,0,0}$x_I=x_O$}%
}}}}
\put(7314,-3821){\makebox(0,0)[lb]{\smash{{\SetFigFont{6}{7.2}{\rmdefault}{\mddefault}{\updefault}{\color[rgb]{0,0,0}$\vec{x}'_I=\vec{x}_J$}%
}}}}
\put(7283,-4036){\makebox(0,0)[lb]{\smash{{\SetFigFont{6}{7.2}{\rmdefault}{\mddefault}{\updefault}{\color[rgb]{0,0,0}$\vec{x}'_O=\vec{x}_K$}%
}}}}
\put(7454,-2487){\makebox(0,0)[lb]{\smash{{\SetFigFont{6}{7.2}{\rmdefault}{\mddefault}{\updefault}{\color[rgb]{0,0,0}$\mathbf{havoc}(\vec{x}_L, \vec{x}_J, \vec{x}_K)$}%
}}}}
\put(1830,-5720){\makebox(0,0)[b]{\smash{{\SetFigFont{6}{7.2}{\rmdefault}{\mddefault}{\updefault}{\color[rgb]{0,0,0}(d)}%
}}}}
\put(7265,-5734){\makebox(0,0)[b]{\smash{{\SetFigFont{6}{7.2}{\rmdefault}{\mddefault}{\updefault}{\color[rgb]{0,0,0}(e)}%
}}}}
\put(7460,-5755){\makebox(0,0)[lb]{\smash{{\SetFigFont{8}{9.6}{\rmdefault}{\mddefault}{\updefault}{\color[rgb]{0,0,0}The program $\mathcal{Q}$}%
}}}}
\put(8010,-2660){\makebox(0,0)[lb]{\smash{{\SetFigFont{6}{7.2}{\rmdefault}{\mddefault}{\updefault}{\color[rgb]{0,0,0}$x_I>0$}%
}}}}
\put(7573,-5374){\makebox(0,0)[lb]{\smash{{\SetFigFont{6}{7.2}{\rmdefault}{\mddefault}{\updefault}{\color[rgb]{0,0,0}$\mathbf{havoc}(\vec{x}_I)$}%
}}}}
\put(7681,-4941){\makebox(0,0)[lb]{\smash{{\SetFigFont{6}{7.2}{\rmdefault}{\mddefault}{\updefault}{\color[rgb]{0,0,0}$\vec{x}'_I=\vec{x}_L$}%
}}}}
\put(3901,164){\makebox(0,0)[b]{\smash{{\SetFigFont{6}{7.2}{\rmdefault}{\mddefault}{\updefault}{\color[rgb]{0,0,0}$x>0$}%
}}}}
\end{picture}%
 \caption{$\vec{x}_I = \set{x_I,z_I}$ ($\vec{x}_O = \set{x_O,z_O}$) are for
the input (output) values of $x$ and $z$, respectively. 
$\vec{x}_{J,K,L}$ provide extra copies.
$\mathbf{havoc}(\vec{y})$ stands for
$\bigwedge_{x \in \vec{x}_{I,O,J,K,L}
\setminus \vec{y}} x'=x$, and $\vec{x}'_\alpha = \vec{x}_\beta$ for
$\bigwedge_{x \in \vec{x}} x'_\alpha=x_\beta$.  }
\label{fig:running-example}
\end{figure}

To properly introduce the reader to our result, we briefly recall the
important features of our source-to-source transformation through an
illustrative example. We apply first our program
transformation \cite{gik13} to the program \(\mathcal{P}\) shown in
Fig. \ref{fig:running-example} (a). The call graph of this program
consists of a single state \(\mathtt{P}\) with a self-loop. The output
program \(\mathcal{Q}\) given Fig. \ref{fig:running-example} (e), has
no procedures and it can thus be analyzed using any existing
intra-procedural tool \cite{bik10,BFLP03}. The relation between the
variables \(x\) and \(z\) of the input program can be inferred from
the analysis of the output program. For instance, the input-output
relation of the program \(\mathcal{P}\) is defined by $z'=2x$, which
matches the precondition $z_O = 2x_I$ of the
program \(\mathcal{Q}\). Consequently, any assertion such as
``\emph{there exists a value \(n>0\) such that \(\mathtt{P}(n)<n\)}''
can be phrased as: ``\emph{there exist values \(n < m\) such
that \(\mathcal{Q}(n,m)\) reaches its final state}''. While the former
can be encoded by a reachability problem on \(\mathcal{P}\), by adding
an extra conditional statement, the latter is an equivalent
reachability problem for \(\mathcal{Q}\).

For the sake of clarity, we give several representations of the input program
\(\mathcal{P}\) that we assume the reader is familiar with including the text of the
program in Fig.  \ref{fig:running-example} (a) and the corresponding control flow
graph in Fig.  \ref{fig:running-example} (b).

In this paper, the formal model we use for programs is based on context-free grammars.
The grammar for \(\mathcal{P}\) is given at  Fig. \ref{fig:running-example} (c).
The r\^ole of the grammar is to define the set
of \emph{interprocedurally valid} paths in the control-flow graph of
the program \(\mathcal{P}\). Every edge in the control-flow graph matches one or two
symbols from the finite alphabet
$\set{\mathbf{t_1}, \langle\!\mathbf{t_2},
\mathbf{t_2}\!\rangle, \mathbf{t_3}, \mathbf{t_4}}$,
where $\langle\!\mathbf{t_2}$ and $\mathbf{t_2}\!\rangle$ denote the
call and return, respectively. The set of nonterminals is
$\set{X_1,X_2,X_3,X_4}$. Each edge
in the graph translates to a production rule in the grammar, labeled
$\mathbf{p_1}$ to $\mathbf{p_4}$.  For instance, the call edge
$X_2 \arrow{\mathbf{t_2}}{} X_3$ becomes
$X_2 \rightarrow \langle\!\mathbf{t_2} X_1 \mathbf{t_2}\!\rangle
X_3$. The language of the grammar of Fig.~\ref{fig:running-example} (c) (with axiom $X_1$) is
the set $L=\set{\left(\mathbf{t_1}\langle\!\mathbf{t_2}\right)^n \mathbf{t_4}
\left(\mathbf{t_2}\!\rangle\mathbf{t_3}\right)^n \mid n \in \nat}$ of 
interprocedurally valid paths in the control-flow graph. Observe that
\(L\) is included in the language of the regular expression \(\pat
= \left(\mathbf{t_1}\langle\!\mathbf{t_2}\right)^* \mathbf{t_4}^*
\left(\mathbf{t_2}\!\rangle\mathbf{t_3}\right)^*\).

Our program transformation is based on the observation that the semantics of
\(\mathcal{P}\) can be precisely defined on the set of \emph{derivations} of
the associated grammar. In principle, one can always represent this set of
derivations as a possibly infinite automaton (Fig. \ref{fig:running-example}
(d)), whose states are sequences of nonterminals annotated with priorities
(called ranks)\footnote{The precise definition and use of ranks will be
explained in Section~\ref{sec:dfkderiv}.}, and whose transitions are labeled
with production rules. Each finite path in this automaton, starting from
$X_1^\tuple{0}$, defines a valid prefix of a derivation. Since \(L\subseteq
\pat\), Luker \cite{Luker78} shows that it is sufficient to keep a finite
sub-automaton, enclosed with a dashed box in Fig.~\ref{fig:running-example} (d),
in which each state consists of a finite number of ranked nonterminals (in our
case at most \(2\)).

Finally, we label the edges of this finite automaton with octagonal
constraints that capture the semantics of the relations labeling the
control-flow graph from Fig. \ref{fig:running-example} (b). We give
here a brief explanation for the labeling of the finite automaton in
Fig. \ref{fig:running-example} (e), in other words, the output
program \(\mathcal{Q}\) (see \cite{gik13} for more details). The idea
is to compute, for each production rule $\mathbf{p_i}$, a relation
$\rho_i(\vec{x}_I, \vec{x}_O)$, based on the constraints associated
with the symbols occurring in \(\mathbf{p_i}\) (labels from
Fig. \ref{fig:running-example} (b)). For instance, in the transition
$X_2^{\tuple{0}} \arrow{\mathbf{p_2}}{}
X_1^{\tuple{0}}X_3^{\tuple{0}}$, the auxiliary variables store
intermediate results of the computation of \(\mathbf{p}_2\) as
follows: \( [\vec{x}_I]\; \langle\!\mathbf{t_2} \;[\vec{x}_J]\;
X_1 \;[\vec{x}_K]\; \mathbf{t_2}\!\rangle \;[\vec{x}_L]\;
X_3 \;[\vec{x}_O]\). The guard of the transition can be understood by
noticing that $\langle\!\mathbf{t_2}$ gives rise to the constraint
$x_J=x_I-1$, $\mathbf{t_2}\!\rangle$ to $z_L=z_K$, $x_I=x_L$
corresponds to the frame condition of the call, and $\mathbf{havoc}()$
copies all current values of $\vec{x}_{I,J,K,L,O}$ to the future ones.
It is worth pointing out that the constraints labeling the transitions
of the program \(\mathcal{Q}\) are necessarily octagonal if the
statements of \(\mathcal{P}\) are defined by octagonal constraints.

An intra-procedural analysis of the program \(\mathcal{Q}\) in
Fig. \ref{fig:running-example} (e) infers the precondition
$x_I\geq0 \wedge z_O=2x_I$ which coincides with the input/output
relation of the recursive program \(\mathcal{P}\) in
Fig. \ref{fig:running-example} (a), i.e.\ $x \geq 0 \wedge z'=2x$. The
original query \(\exists n > 0 \colon \mathtt{P}(n) < n\) translates
thus into the satisfiability of the formula \(x_I > 0 \wedge
z_O=2x_I \wedge x_I<z_O\), which is clearly false.

The paper is organised as follows: basic definitions are given Section \ref{sec:prelim},
Section~\ref{sec:foreachpb} defines the fo-reachability problem, 
Section~\ref{sec:dfkderiv} presents an alternative program semantics based on derivations
and introduces subsets of derivations which are sufficient to decide reachability,
Section \ref{sec:bounded-control-sets} starts with on overview of our decision
procedure and our main complexity results and continues with the key steps of our algorithms. 
The appendix contains all the missing details.
\comment[pg]{Turn into ``A companion technical report \cite{} contains the missing details.''}

\section{Preliminaries}\label{sec:prelim}

Let $\Sigma$ be a finite nonempty set of symbols, called
an \emph{alphabet}. We denote by $\Sigma^*$ the set of finite words
over $\Sigma$ which includes $\varepsilon$, the empty word. The concatenation
of two words $u,v\in\Sigma^*$ is denoted by $u\cdot v$ or $u\, v$.
Given a word \(w\in\Sigma^*\), let \(\len{w}\) denote its length and
let \((w)_i\) with \(1\leq i\leq \len{w}\) be the \(i\)th symbol
of \(w\). Given \(w\in\Sigma^*\) and \(\Theta\subseteq\Sigma\), we
write \(\proj{w}{\Theta}\) for the word obtained by deleting
from \(w\) all symbols not in \(\Theta\), and sometimes we
write \(\proj{w}{a}\) for \(\proj{w}{\set{a}}\).
A {\em bounded expression} \(\pat\) over alphabet \(\Sigma\) is a regular
expression of the form $w_1^* \ldots w_d^*$, where $w_1,\ldots,w_d \in
\Sigma^*$ are nonempty words and its size is given by \(\len{\pat}=\sum_{i=1}^d
\len{w_i}\).  We use $\pat$ to denote both the bounded expression and its
language. We call a language \(L\) \emph{bounded} when \(L\subseteq\pat\) for some bounded expression \(\pat\).

A \emph{grammar} is a tuple \(G=\tuple{\Vars,\Sigma,\prod}\)
where \(\Vars\) is a finite nonempty set
of \emph{nonterminals}, \(\Sigma\) is an alphabet of \emph{terminals},
such that \(\Vars \cap \Sigma = \emptyset\), and
\(\prod \subseteq \Vars\times (\Sigma\cup \Vars)^*\) is a finite set
of \emph{productions}. For a production $(X,w)
\in \prod$, often conveniently noted \(X \rightarrow w\), we define its {\em
size} as $\len{(X,w)} = \len{w}+1$, and $\len{G}
= \sum_{p\in\prod} \len{p}$ defines the size of $G$.

Given two words $u,v \in (\Sigma \cup \Vars)^*$, a production \(
(X,w)\in\prod\) and a position \(1\leq j\leq \len{u}\), we define
a \emph{step} $u \Arrow{(X,w)/j}_G v$ if and only if \( (u)_j = X\)
and \( v = (u)_1 \cdots (u)_{j-1} \, w\, (u)_{j+1}\cdots
(u)_{\len{u}}\). We omit \( (X,w)\) or \(j\) above the arrow when
clear from the context.
A \emph{control word} is a finite word \(\gamma\in\prod^*\) over the
alphabet of productions. A \emph{step sequence} \( u \Arrow{\gamma}_G
v\) is a sequence \(u = w_{0} \Arrow{(\gamma)_1}_G
w_1 \ldots w_{n-1}\Arrow{(\gamma)_n}_G w_n = v\) where \(n = \len{\gamma}\).
If \(u\in\Vars\) is a nonterminal and \(v\in\Sigma^*\) is a word
without nonterminals, we call the step sequence \(u \Arrow{\gamma}_G
v\) a \emph{derivation}. When the control word \(\gamma\) is not
important, we write \(u \Rightarrow^*_G v\) instead
of \(u \Arrow{\gamma}_G v\), and we chose to omit the grammar \(G\)
when clear from the context.

Given a nonterminal $X \in \Vars$ and $Y \in \Vars \cup
\{\varepsilon\}$, i.e.\ $Y$ is either a nonterminal or the empty word,
we define the set \(L_{X,Y}(G)=\set{u \, v \in \Sigma^* \mid
  X \Rightarrow^* u \, Y \, v}\).  The set $L_{X,\varepsilon}(G)$ is
  called the {\em language} of $G$ produced by $X$, and is denoted
  $L_X(G)$ in the following. For a set $\Gamma \subseteq \prod^*$ of
  control words (also called a \emph{control set}), we denote by
  $\hat{L}_{X,Y}(\Gamma, G) = \{ u\, v\in \Sigma^* \mid \exists
\gamma \in \Gamma \colon X \xRightarrow{\gamma} u\, Y\,
v\}$ the language generated by $G$ using only control words from
$\Gamma$. We also write \(\hat{L}_{X}(\Gamma, G)\) for 
\(\hat{L}_{X,\varepsilon}(\Gamma, G)\).

Let $\vec{x}$ denote a nonempty finite set of integer variables,
and $\vec{x}' = \{x' \mid x \in \vec{x}\}$. A \emph{valuation} of
$\vec{x}$ is a function $\smash{\nu : \vec{x} \arrow{}{} \zed}$. The
set of all such valuations is denoted by $\zed^{\vec{x}}$. A formula
$\phi(\vec{x},\vec{x}')$ is evaluated with respect to two valuations
$\nu,\nu'\in\zed^\vec{x}$, by replacing each occurrence of
$x\in\vec{x}$ with $\nu(x)$ and each occurrence of $x'\in\vec{x'}$
with $\nu'(x)$. We write $(\nu,\nu') \models \phi$ when the formula
obtained from these replacements is valid. A formula
$\phi_R(\vec{x},\vec{x}')$ \emph{defines} a relation
$R \subseteq \zed^{\vec{x}}
\times \zed^{\vec{x}}$ whenever for all $\nu,\nu' \in \zed^{\vec{x}}$, we have
$(\nu,\nu')\in R$ if{}f $(\nu,\nu') \models \phi_R$. The composition of
two relations $R_1, R_2 \subseteq \zed^\vec{x} \times
\zed^\vec{x}$ defined by formulae $\varphi_1(\vec{x},\vec{x'})$ and
$\varphi_2(\vec{x},\vec{x'})$, respectively, is the relation $R_1
\circ R_2 \subseteq \zed^{\vec{x}} \times \zed^{\vec{x}}$, defined by 
$\exists \vec{y} ~.~ \varphi_1(\vec{x},\vec{y}) \wedge
\varphi_2(\vec{y}, \vec{x'})$.
For a finite set \(S\), we denote its cardinality by \(\card{S}\).

\section{Interprocedural Flat Octogonal Reachability}\label{sec:foreachpb}

In this section we define formally the class of programs and
reachability problems considered. An \emph{octagonal relation}
\(R\subseteq\zed^{\vec{x}}\times\zed^{\vec{x}}\) is a relation defined 
by a finite conjunction of constraints of the form
\(\pm x \pm y \leq c\), where \(x,y \in \vec{x} \cup \vec{x}'\) and 
\(c\in\zed\). The set of octagonal relations over the variables in 
\(\vec{x}\) and \(\vec{x}'\) is denoted as \(\mathrm{Oct}(\vec{x}, \vec{x}')\). 
The \emph{size} of an octagonal relation \(R\), denoted \(\len{R}\) is
the size of the binary encoding of the smallest octagonal constraint
defining \(R\).

An \emph{octagonal program} is a tuple
\(\mathcal{P}=\tuple{G,I,\sem{.}}\), where \(G\) is a grammar
\(G=\tuple{\Vars,\Sigma,\prod}\), \(I\in\Vars\) is an \emph{initial} 
location, and \(\sem{.} :
L_I(G) \rightarrow \mathrm{Oct}(\vec{x},\vec{x}')\) is a mapping of
the words produced by the grammar \(G\), starting with the initial
location \(I\), to octagonal relations.  The alphabet \(\Sigma\)
contains a symbol \(t\) for each \emph{internal} program statement
(that is not a call to a procedure) and two symbols \(\langle\!t,
t\!\rangle\) for each \emph{call} statement \(t\). The grammar \(G\)
has three kinds of productions: \begin{inparaenum}[\upshape(\itshape
i\upshape)]
\item \((X,t)\) if \(t\) is a statement leading from \(X\) to a return location, 
\item \((X,t\, Y)\) if \(t\) leads from \(X\) to \(Y\), and 
\item \((X,\langle\!t \, Y \, t\!\rangle \, Z)\) if \(t\) is a call statement, 
\(Y\) is the initial location of the callee, and \(Z\) is the continuation of the call.
\end{inparaenum}
Through several program transformations, we may generate another
grammar with other kinds of productions. \comment[ri]{this sounds strange; I would remove it entirely}
The only property we need for our results is that every grammar \(G\)
with we deal with has each of its productions \( (X,w) \) satisfying:
\(\len{\proj{w}{\Sigma}}\leq 2\) and \(\len{\proj{w}{\Vars}}\leq 2\)
where \(\Sigma\) and \(\Vars\) are the terminals and nonterminals of \(G\), respectively.
Each edge \(t\) that is not a call has an associated octagonal
relation \(\rho_t\in\mathrm{Oct}(\vec{x},\vec{x}')\) and each matching
pair \(\langle\!t, t\!\rangle\) has an associated \emph{frame
condition} \(\phi_t\in\mathrm{Oct}(\vec{x},\vec{x}')\), which equates
the values of the local variables, that are not updated by the call,
to their future values. The size of an octagonal program \(\mathcal{P}
= \tuple{G,I,\sem{.}}\), with \(G=\tuple{\Vars,\Sigma,\prod}\), is the
sum of the sizes of all octagonal relations labeling the productions
of \(G\), formally \(\len{\mathcal{P}}
= \sum_{(X,\, t) \in \prod} \len{\rho_t}
+ \sum_{(X,\, tY) \in \prod} \len{\rho_t} 
+ \sum_{(X,\, \, \langle\!t\, Y\, t\!\rangle\, Z) \in \prod} 
(\len{\rho_{\langle\!t}} + \len{\rho_{t\!\rangle}} + \len{\phi_t})\).

For example, the program in Fig. \ref{fig:running-example} (a,b) is
represented by the grammar in Fig. \ref{fig:running-example} (c).  The
terminals are mapped to octagonal relations
as: \(\rho_{\mathbf{t_1}} \equiv x>0 \wedge
x'=x\), \(\rho_{\langle\!\mathbf{t_2}} \equiv
x'=x-1\), \(\rho_{\mathbf{t_2}\!\rangle} \equiv
z'=z\), \(\rho_{\mathbf{t_3}} \equiv x'=x \wedge z'=z+2\)
and \(\rho_{\mathbf{t_4}} \equiv x=0 \wedge z'=0\). The frame
condition is \(\phi_{\mathbf{t_2}} \equiv x'=x\), as only \(z\) is
updated by the call \(z'=\mathtt{P}(x-1)\).

\smallskip
\noindent {\bf Word-based semantics.} %
For each word \(w\in L_I(G)\), each occurrence of a
terminal \(\langle\!t\) in \(w\) is matched by an occurrence
of \(t\!\rangle\), and the matching positions are nested\footnote{A
relation \(\mathord{\leadsto} \subseteq \set{1,\ldots,\len{w}} \times \set{1,\ldots,\len{w}}\)
is said to be nested \cite{alur-jacm09} when no two pairs \(i \leadsto
j\) and \(i' \leadsto j'\) cross each other, as in \(i < i' \leq j <
j'\). }. The semantics of the word \(\sem{w}\) is an octagonal
relation defined inductively\footnote{Octagonal relations are closed under intersections and compositions \cite{mine}.} on the
structure of \(w\):
\begin{inparaenum}[\upshape(\itshape i\upshape)]
\item \(\sem{t} = \rho_t\), 
\item \(\sem{t\cdot v} = \rho_t \circ \sem{v}\), and
\item \(\sem{\langle\!t \cdot u \cdot t\!\rangle \cdot v} = 
\left(\left(\rho_{\langle\!t} \circ \sem{u} \circ \rho_{t\!\rangle}\right) \cap \phi_t\right) \circ \sem{v}\),
\end{inparaenum}
for all \(t,\langle\!t,t\!\rangle \in \Sigma\) such that \(\langle\!t\) and \(t\!\rangle\) match. For instance, the
semantics of the word
\(w=\mathbf{t_1} \langle\!\mathbf{t_2} \mathbf{t_4} \mathbf{t_2}\!\rangle \mathbf{t_3} \in
L_{X_1}(G)\), for the grammar \(G\) given in Fig. \ref{fig:running-example} (c), is 
\(\sem{w} \equiv x=1 \wedge z'=2\). Observe that this word defines the effect of 
an execution of the program in Fig. \ref{fig:running-example} (a)
where the function \(\mathtt{P}\) is called twice---the first call is
a top-level call, and the second is a recursive call (line 3).

\smallskip
\noindent {\bf Reachability problem.}
The semantics of a program \(\mathcal{P} = \tuple{G,I,\sem{.}}\) is
defined as \(\sem{\mathcal{P}} = \bigcup_{w \in
L_I(G)} \sem{w}\). Consider, in addition, a bounded
expression \(\pat\), we
define \(\sem{\mathcal{P}}_\pat = \bigcup_{w \in
L_I(G) \cap \pat} \sem{w}\).  The problem asking
whether \(\sem{\mathcal{P}}_\pat \neq \emptyset\)
for a pair \(\mathcal{P},\pat\) is called
the \emph{flat-octagonal reachability problem}.
We use  \(\foreach(\mathcal{P}, \pat)\) to denote
a particular instance.

\section{Index-bounded depth-first derivations}\label{sec:dfkderiv}

In this section, we give an alternate but equivalent program semantics
based on derivations.  Although simple, the word semantics is defined
using a nesting relation that pairs the positions of a word labeled
with matching symbols \(\langle\!t\) and \(t\!\rangle\). In contrast,
the derivation-based semantics just needs the control word.

To define our derivation based semantics, we first define structured
subsets of derivations namely the depth-first and bounded-index
derivations. The reason is two-fold:
\begin{inparaenum}[\upshape(\itshape a\upshape)]
\item the correctness proof of our program transformation \cite{gik13} 
      returning the procedure-less program \(\mathcal{Q}\) depends on
	bounded-index depth-first derivations, and
\item in the reduction of the \(\foreach(\mathcal{P},\pat)\) problem to
	that of \(\foreach(\mathcal{Q},\patg)\), the computation
	of \(\patg\) depends on the fact that the control structure
	of \(\mathcal{Q}\) stems from a finite automaton recognizing
	bounded-index depth-first derivations.
\end{inparaenum}
Key results for our decision procedure are those of
Luker \cite{Luker78,Luker80} who, intuitively, shows that
if \(L_X(G) \subseteq \pat\) then it is sufficient to consider
depth-first derivations in which no step contains more than \(k\)
simultaneous occurrences of nonterminals, for some \(k>0\)
(Theorem \ref{thm:luker}). 

\noindent {\bf Depth-first derivations.}
It is well-known that a derivation can be associated a unique parse
tree.  A derivation is said to be \emph{depth-first} if it corresponds
to a depth-first traversal of the corresponding parse tree.  More
precisely, given a step sequence \(w_0 \Arrow{(X_0,v_0)/j_0}
w_1 \ldots w_{n{-}1} \Arrow{(X_{n{-}1},v_{n{-}1})/j_{n{-}1}} w_{n}\), and two
integers \(m\) and
\(i\) such that \(0 \leq m < n\) and \(1\leq i\leq \len{w_m}\) define
\(f_m(i)\) to be the index \(\ell\) of the first word \(w_{\ell}\) of the step
sequence in which the particular occurrence of \( (w_m)_i \) appears.
A step sequence is \emph{depth-first} \cite{Luker80} if{}f for all \(m\), \(0\leq m < n\):
{\setlength\abovedisplayskip{4pt}
\setlength\belowdisplayskip{4pt}
\[ f_m( j_m ) = \max \{ f_m( i )  \mid  1\leq i \leq \len{w_m}\text{ and } (w_m)_i \in \Vars\}\enspace .\]}
For example, \(X \Arrow{(X,YY)/1} YY \Arrow{(Y,Z)/2}
YZ \Arrow{(Z,a)/2} Ya\) is depth-first, whereas \(X \Arrow{(X,YY)/1}
YY \Arrow{(Y,Z)/2} YZ \Arrow{(Y,Z)/1} ZZ\) is not. We
have \(f_2(1)=1\) because \( (w_2)_1 = Y\) first appeared
at \(w_1\), \(f_2(2)=2\) because \( (w_2)_2 = Z \) first appeared
at \(w_2\), \(j_2=1\) and \( f_2(2) \nleq f_2(j_2) \) since \(2 \nleq
1\). We denote by $u \xArrow{\gamma}{\mathbf{df}} w$ a depth-first
step sequence and call it depth-first derivation when \(u\in\Vars\) and \(w\in\Sigma^*\).

\noindent {\bf Depth-first derivation-based semantics.}
In previous work \cite{gik13}, we defined the semantics of a
procedural program based on the control word of the derivation instead of the produced words. We briefly
recall this definition here. Given a depth-first
derivation \(X \xArrow{\gamma}{\mathbf{df}} w\), the
relation \(\sem{\gamma} \subseteq \zed^{\vec{x}} \times \zed^{\vec{x}}\)
is defined inductively on \(\gamma\) as follows:
\begin{inparaenum}[\upshape(\itshape i\upshape)]
\item \(\sem{(X,t)} = \rho_t\),
\item \(\sem{(X,t\, Y) \cdot \gamma'} = \rho_t \circ \sem{\gamma'}\) where
       \(Y \xArrow{\gamma'}{\mathbf{df}} w'\), and
\item \(\sem{(X,\langle\!t \, Y \, t\!\rangle \, Z) \cdot \gamma' \cdot \gamma''} = 
      \sem{(X,\langle\!t \, Y \, t\!\rangle \, Z) \cdot \gamma'' \cdot \gamma'} = 
      \left(\left(\rho_{\langle\! t} \circ \sem{\gamma'} \circ \rho_{t\!\rangle}\right) 
      \cap \phi_t\right) \circ \sem{\gamma''}\) where \(Y \xArrow{\gamma'}{\mathbf{df}} w'\)
      and \(Z \xArrow{\gamma''}{\mathbf{df}} w''\).
\end{inparaenum}
We showed \cite[Lemma 2]{gik13}
that, whenever \(X \xArrow{\gamma}{\mathbf{df}} w\), we have \(\sem{w}\neq\emptyset\) if{}f \(\sem{\gamma}\neq\emptyset\).

\noindent {\bf Index-bounded derivations.}
A step \(u \Rightarrow v\) is
said to be \(k\)-index (\(k>0\)) if{}f 
neither \(u\) nor \(v\) contains \(k+1\) occurrences of nonterminals, i.e.
\( \len{\proj{u}{\Vars}} \leq k\) and \( \len{\proj{v}{\Vars}} \leq k\).
We denote by \(u \xArrow{\gamma}{(k)} v\)
a \(k\)-index step sequence and by
\(u \xArrow{\gamma}{\df{k}} v\) a step
sequence which is both depth-first and \(k\)-index.
For $X \in \Vars$, $Y \in \Vars \cup \{\varepsilon\}$ and \(k>0\), we
define the \(k\)-\emph{index language} \(L^{(k)}_{X,Y}(G) = \{u\,
v\in\Sigma^* \mid \exists \gamma\in\prod^* \colon
X \xArrow{\gamma}{(k)} u \, Y \, v\}\), the \(k\)-\emph{index
depth-first control set} \(\Gamma_{X,Y}^{\df{k}}(G)
= \{\gamma\in\prod^* \mid \exists u,v \in \Sigma^* \colon
X \xArrow{\gamma}{\df{k}} u\, Y\, v\}\). We write \(L^{(k)}_X(G)\)
and \(\Gamma_X^{\df{k}}(G)\) when \(Y=\varepsilon\), and drop \(G\) from
the previous notations, when the grammar is clear from the
context. For instance, for the grammar in
Fig. \ref{fig:running-example} (c), we have $L_{X_1}^{(2)}(G) =
\set{\left(\mathbf{t_1}\langle\!\mathbf{t_2}\right)^n \mathbf{t_4}
\left(\mathbf{t_2}\!\rangle\mathbf{t_3}\right)^n \mid n \in \nat} = L_{X_1}(G)$ 
and $\Gamma_{X_1}^{\df{2}}
= \left(\mathbf{p_1}\mathbf{p_2}\mathbf{p_3}\right)^*
\left(\mathbf{p_4} \cup \mathbf{p_1}\mathbf{p_2}\mathbf{p_4}\mathbf{p_3}\right)$.
\smallskip

\begin{theorem}[Lemma~2 \cite{Luker80}, Theorem~1 \cite{Luker78}]\label{thm:luker}
Given a grammar \(G = \tuple{\Vars, \Sigma, \prod} \) and \(X\in\Vars\):
\begin{compactitem}
\item for all \(w\in\Sigma^*\),  \(X \xRightarrow[(k)]{}^* w \) if and only if
	\(X \xRightarrow[\df{k}]{}^* w\);
\item if \(L_X(G) \subseteq \pat\) for a bounded expression \(\pat\) over \(\Sigma\) then
	\(L_X(G) = L_X^{(K)}(G) \) where \(K = O(\len{G}) \).
\end{compactitem}
\end{theorem}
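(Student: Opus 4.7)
The first part reduces, in its forward direction, to the fact that depth-first derivations are by definition $k$-index. For the converse, the plan is to normalize an arbitrary $k$-index derivation $X \Arrow{\gamma} w$ into a depth-first one by commuting adjacent steps. Two consecutive steps $u \Arrow{(A,\alpha)/p} v \Arrow{(B,\beta)/q} u'$ that expand distinct nonterminal occurrences commute, in the sense that applying them in the reverse order yields the same word $u'$ (with appropriately shifted position indices). One repeatedly swaps any adjacent pair that violates the depth-first order---i.e.\ a step expanding a nonterminal older than another nonterminal currently present---until no such pair remains. The resulting step sequence derives the same $w$; the technical point is to check that this rearrangement does not increase the maximum index of the intermediate sentential forms, so the bound $k$ is preserved.

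The second part exploits the combinatorial rigidity of bounded languages. Since $L_X(G) \subseteq w_1^* \cdots w_d^*$, for every reachable sentential form $X \Rightarrow^* u_0 A_1 u_1 \cdots A_m u_m$, each $L_{A_i}(G)$ must itself be contained in a factor of $\pat$---a language of the form (suffix of some $w_{\ell_i}$)$\cdot w_{\ell_i+1}^* \cdots w_{r_i-1}^* \cdot$(prefix of some $w_{r_i}$). Assign to each $A_i$ a ``type'' recording this interval and its boundary words, and observe that the types of neighbouring nonterminals and the intervening terminal infixes are forced to be consistent. The plan is then: if a derivation of some $w \in L_X(G)$ requires index exceeding $K$, locate two sentential forms in it that share the same typed profile at a nonterminal occurrence surviving many steps, and splice them to obtain a derivation of the same $w$ with strictly smaller index; iterating yields a $K$-index derivation. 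Counting the possible types yields $K = O(\len{G})$.

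The subtle point in Part~1 is that the rearrangement towards depth-first order must not create wider intermediate sentential forms; this holds because depth-first traversal closes off completed subderivations before opening new ones, but the invariant needs a careful case analysis on how a single swap affects the nonterminals present between the two swapped positions. In Part~2, the delicate step is to bound the number of types by $O(\len{G})$ rather than a much larger quantity depending on $d$: this requires showing that the boundary words appearing in a type come from a polynomial-size set effectively constructible from $G$ and $\pat$, which is the substance of Luker's original argument and drives the index bound used throughout the rest of the paper.
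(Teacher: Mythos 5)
You should first note that the paper itself gives no proof of this statement: it is imported wholesale from Luker (the first item is Lemma~2 of \cite{Luker80}, the second Theorem~1 of \cite{Luker78}), so your proposal can only be judged on its own terms, and on those terms both parts have genuine gaps. For Part~1, the invariant you defer as ``the technical point''---that swapping adjacent out-of-order steps does not increase the index---is false, and this is exactly where the theorem's difficulty lives. Take the grammar with productions $X \rightarrow BY$, $Y \rightarrow Ac$, $A \rightarrow CD$, $B \rightarrow b$, $C \rightarrow e$, $D \rightarrow f$, and the $2$-index derivation $X \Rightarrow BY \Rightarrow BAc \Rightarrow bAc \Rightarrow bCDc \Rightarrow beDc \Rightarrow befc$. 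Its third step expands $B$ while the strictly younger occurrence $A$ is present, so it is the unique depth-first violation. Swapping it with the following step (the pair your rule flags: an older occurrence expanded immediately before a younger, independent one) produces the intermediate form $BCDc$, which has three nonterminals: the index jumps from $2$ to $3$, and continuing to swap violating pairs only ever terminates in depth-first derivations of index $3$. An index-$2$ depth-first derivation of the same word exists, namely $X \Rightarrow BY \Rightarrow bY \Rightarrow bAc \Rightarrow bCDc \Rightarrow \cdots$, but it is reached by swapping the \emph{second} and third steps, a pair your criterion never selects because $B$ and $Y$ have equal age. Nor can you repair this by always moving the violating step earlier instead of later: with $B \rightarrow DE$, $C \rightarrow c$ in the derivation $X \Rightarrow BW \Rightarrow BCV \Rightarrow BCAc \Rightarrow BcAc \Rightarrow DEcAc \Rightarrow \cdots$, the backward swap creates $DECAc$, with four nonterminals. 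The obstruction is structural: index-preserving rearrangement must relocate entire completed subtrees as blocks, which is why the actual proof is global rather than local---fix the derivation tree, induct on it, and schedule the sibling subtrees under each node in the order of their completion times in the given $k$-index derivation; that finish-time ordering is what makes the width bound survive.

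Part~2 is not a proof but a plan, and its central move is misdirected. Excising the segment of a derivation between two sentential forms with equal ``typed profiles'' deletes the productions fired in between, hence also the terminal symbols they emit, so the spliced object derives a shorter word, not the same $w$; repetition-and-splicing is the right tool for path-length (pumping) phenomena, whereas the index is a branching-width property of the derivation tree and is not reduced by removing repetitions along the step sequence. Reducing the index of a derivation of the \emph{same} word requires restructuring or reordering the tree, which is what the known arguments do: either one shows that $L_X(G) \subseteq \pat$ forbids expansive nonterminals (those admitting $Y \Rightarrow^* uYvYw$) and that non-expansive grammars have index bounded linearly in the number of nonterminals, or one argues by induction on $d$ through a pivot decomposition of derivations---the very route this paper takes, in sharpened form, in Lemma~\ref{lem:ginsbook-surgery} and Theorem~\ref{thm:letter-bounded-control-set}. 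Finally, the counting step you acknowledge as ``the substance of Luker's original argument'' is precisely what is missing: as stated, the number of your types scales with $d$ and the lengths of the words $w_i$, not with $\len{G}$ alone, so the claimed bound $K = O(\len{G})$ is never established.
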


The introduction of the notion of index naturally calls for an index
dependent semantics and an index dependent reachability problem. As we
will see later, we have tight complexity results when it comes to the
index dependent reachability problem. Given \(k>0\),
let \(\sem{\mathcal{P}}^{(k)} = \bigcup_{w \in L_I^{(k)}(G)} \sem{w}\)
and let \(\sem{\mathcal{P}}^{(k)}_{\pat} = \bigcup_{w \in
L_I^{(k)}(G)\cap\pat} \sem{w}\). Thus we define, for a constant \(k\)
not part of the input, the
problem \(\foreach^{(k)}(\mathcal{P},\pat)\), which asks
whether \(\sem{\mathcal{P}}^{(k)}_\pat \neq \emptyset\).

\noindent {\bf Finite representations of bounded-index depth-first control sets.}
It is known that the set of \(k\)-index depth-first derivations of a
grammar \(G\) is recognizable by a finite
automaton \cite[Lemma~5]{Luker80}. Below we give a formal definition of this
automaton, that will be used to produce bounded control sets for
covering the language of \(G\). Moreover, we provide an upper bound on
its size, which will be used to prove an upper bound for the time
to compute this set (Section \ref{sec:bounded-control-sets}).

Given \(k>0\) and a grammar \(G=\tuple{\Vars, \Sigma, \prod}\), we define a
labeled graph \(A^{\df{k}}_G\) such that its paths defines the set of
\(k\)-index depth-first step sequences of \(G\).  To define the vertices and
edges of this graph, we introduce the notion of ranked words, where the rank
plays the same r\^ole as the value \(f_m(i)\) defined previously. The advantage
of ranks is that only \(k\) of them are needed for \(k\)-index depth-first
derivations whereas the set of \(f_m(i)\) values grows with the length of
derivations. Since we restrict ourselves to \(k\)-index
depth-first derivations, we thus only need \(k\) ranks, from \(0\) to \(k-1\).  
The rank based definition of depth-first derivations can be found in
Appendix~\ref{app:fsa-dfk}.
\comment[pg]{Turn into ``The rank based definition of depth-first derivations can be found in the technical report \cite{}.''}

For a \(d\)-dimensional vector \(\vec{v}\in\nat^d\), we
write \((\vec{v})_i\) for its \(i\)th element (\(1 \leq i \leq d\)). A vector \(\boldsymbol{v}\in\nats^d\) is
said to be \emph{contiguous} if
\(\set{(\boldsymbol{v})_1,\ldots,(\boldsymbol{v})_d} = \set{0,\ldots,k} \), 
for some \(k\geq 0\). 
Given an alphabet \(\Sigma\) define the ranked alphabet \(\Sigma^{\nats}\) 
to be the set \( \{ \sigma^{\tuple{i}} \mid \sigma\in\Sigma, i\in\nats \} \).
A ranked word is a word over a ranked alphabet. 
Given a word \(w\) of length \(n\) and an \(n\)-dimensional
vector \(\boldsymbol{\alpha}\in\nat^n\), the \emph{ranked
word} \(w^{\boldsymbol{\alpha}}\) is the sequence
\({(w)_1}^{\tuple{(\boldsymbol{\alpha})_1}}\ldots {(w)_n}^{\tuple{(\boldsymbol{\alpha})_n}}\), 
in which the \(i\)th element of \(\boldsymbol{\alpha}\) annotates the \(i\)th symbol of \(w\). We also denote \(w^{ \rank{c} } =
{(w)_1}^{\tuple{c}} \ldots {(w)_{\len{w}}}^{\tuple{c}}\) as a
shorthand. Let \(A_G^{\df{k}} = \tuple{Q, \prod, \rightarrow}\) be the
following labeled graph, where: 
{\setlength\abovedisplayskip{4pt}
\setlength\belowdisplayskip{4pt}
\[Q = \{ w^{\boldsymbol{\alpha}} \mid w \in \Vars^{*}, \len{w}\leq k, \boldsymbol{\alpha}\in\nat^{\len{w}} 
\text{ is contiguous}, (\boldsymbol{\alpha})_1 \leq \cdots \leq (\boldsymbol{\alpha})_{\len{w}} \}\]}
is the set of vertices, the edges are labeled by the set \(\prod\) of
productions of \(G\), and the edge relation is defined next. For all
vertices \(q,q' \in Q\) and labels \( (X,w)\in \Delta\), we
have \(q \xrightarrow{(X,w)} q'\) if and only if
\begin{compactitem}
\item \( q = u\, X^{\tuple{i}}\, v\) for some \(u,v\), where \(i\) is the maximum rank in \(q\), and 
\item \( q'= u\, v\, (\proj{w}{\Vars})^{\rank{i'}} \), 
where \(\len{u\, v\, (\proj{w}{\Vars})^{\rank{i'}}}\leq k\) and \( i' = \begin{cases}
  0 & \text{if } u\, v = \varepsilon\\
	i & \text{else if } \proj{(u\, v)}{\Vars^{\tuple{i}}} = \varepsilon\\
	i+1 & \text{else}
\end{cases}\) 
\end{compactitem}
We denote
by \(\len{A^{\df{k}}_G} = \card{Q}\) the size (number of vertices)
of \(A^{\df{k}}_G\). In the following, we omit the subscript $G$
from \(A^{\df{k}}_G\), when the grammar is clear from the context. For
example, the graph \(A^{\df{2}}\) for the grammar from
Fig. \ref{fig:running-example} (c), is the subgraph of
Fig. \ref{fig:running-example} (d) enclosed in a dashed line.

\begin{lemma}\label{fsa-dfk}
Given $G = \tuple{ \Vars, \Sigma, \prod }$, and $k > 0$, for each 
$X \in \Vars$, $Y \in \Vars \cup \{\varepsilon\}$ and $\gamma \in \prod^*$, we have
\(\gamma \in \Gamma_{X,Y}^{\df{k}}(G)\) if and only if \(X^{\tuple{0}} \arrow{\gamma}{} Y^{\tuple{0}}\)
is a path in \(A^{\df{k}}_G\). Moreover, we have \(\len{A^{\df{k}}_G} = \len{G}^{\mathcal{O}(k)}\).
\end{lemma}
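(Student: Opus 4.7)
\textbf{Proof plan for Lemma~\ref{fsa-dfk}.}
The plan is to establish the equivalence by simultaneous induction on the length of the control word $\gamma$, then bound $\card{Q}$ by a direct counting argument. The right invariant to maintain is the following: whenever $X \xArrow{\gamma_0}{\df{k}} w_0$ for some prefix $\gamma_0$ of $\gamma$, there is a corresponding path $X^{\tuple{0}} \arrow{\gamma_0}{} q_0$ in $A^{\df{k}}_G$ such that $q_0 = (\proj{w_0}{\Vars})^{\boldsymbol{\alpha}_0}$, where $\boldsymbol{\alpha}_0$ is the \emph{canonical depth-first ranking} that assigns equal rank to nonterminal occurrences introduced by the same rewriting step, and assigns strictly higher ranks to occurrences introduced later whenever older siblings still survive. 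The heart of the argument is to show that the three-way case distinction defining $i'$ in the edge relation of $A^{\df{k}}_G$ produces exactly this canonical ranking at each step.

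For the inductive step, write $\gamma = \gamma' \cdot p$ with $p = (X',w)$. In the forward direction $(\Rightarrow)$, the depth-first condition $f_m(j_m) = \max\{f_m(i) \mid (w_m)_i \in \Vars\}$ guarantees that the rewritten nonterminal is one of maximum rank in the current sentential form; by induction this occurrence lies in $q = u_* X'^{\tuple{i}} v_*$ with $i$ the maximum rank of $q$, so $p$ is applicable from $q$ in $A^{\df{k}}_G$. The successor state $u_* v_* (\proj{w}{\Vars})^{\rank{i'}}$ then coincides with the canonical ranking of the next sentential form: the case $i' = 0$ corresponds to the depth-first traversal restarting at the root of a fresh subtree (no older nonterminals left); the case $i' = i$ corresponds to the rewritten nonterminal being the unique inhabitant of rank $i$, so its children inherit that rank without disturbing the LIFO order; and the case $i' = i+1$ is forced whenever unexpanded siblings of rank $i$ remain, since depth-first must visit the newly introduced children before returning to those siblings. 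The $k$-index bound transfers directly since both $\card{u_* X'^{\tuple{i}} v_*} \le k$ and $\card{u_* v_* (\proj{w}{\Vars})^{\rank{i'}}} \le k$ are exactly the syntactic constraints of $Q$ and of the edge relation. The backward direction $(\Leftarrow)$ runs the same argument in reverse: at each node of the path we must rewrite the max-rank occurrence, which is precisely the depth-first constraint, and the surviving-rank bookkeeping forces the next state's rank vector to match the canonical ranking of the corresponding sentential form.

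For the size bound, a state $w^{\boldsymbol{\alpha}} \in Q$ is determined by the underlying word $w \in \Vars^{\le k}$ together with a non-decreasing, contiguous rank vector starting at $0$. Such a vector of length $j$ has the shape $(0^{a_0} 1^{a_1} \cdots r^{a_r})$ with every $a_\ell \ge 1$ and $\sum a_\ell = j$, i.e.\ it is a composition of $j$, giving exactly $2^{j-1}$ possibilities for $j \ge 1$. Hence
\[
\card{Q} \;\le\; 1 + \sum_{j=1}^{k} \card{\Vars}^{j} \cdot 2^{j-1} \;\le\; (2\card{\Vars})^{k+1} \;=\; \len{G}^{\mathcal{O}(k)} \enspace .
\]
The main obstacle, as already indicated, is the formal verification that the three-way case analysis for $i'$ matches the canonical depth-first ranking through arbitrary sequences of steps; in particular, establishing that a rewriting step that produces no new nonterminals (first case) correctly reactivates a previously suspended subtree while the contiguity of the rank vector is preserved at every intermediate state.
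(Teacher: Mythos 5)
Your plan is essentially the paper's own proof: the paper likewise formalizes depth-first derivations via ranks (its appendix definition of r-steps), establishes a single-step correspondence between rewriting a maximal-rank nonterminal and the edge relation of \(A^{\df{k}}_G\), extends it by induction on \(\len{\gamma}\) in both directions, and bounds \(\card{Q}\) by counting ranked words of length at most \(k\) with contiguous ranks; the single-step verification you flag as the ``main obstacle'' is exactly the step the paper dispatches as a routine check, and your composition count \(2^{j-1}\) gives the same \(\len{G}^{\mathcal{O}(k)}\) bound as the paper's \(\card{\Vars}^{2k}\) estimate. One small repair is needed in your invariant: states of \(A^{\df{k}}_G\) list the surviving nonterminals in non-decreasing rank order (fresh occurrences are appended after \(u\,v\) in the edge relation), so \(q_0\) is not \((\proj{w_0}{\Vars})^{\boldsymbol{\alpha}_0}\) itself but its rank-sorted rearrangement --- the paper introduces the operator \(\age{\cdot}\) for precisely this purpose --- after which your three-way case analysis for \(i'\) and the counting argument go through unchanged.
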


\section{A Decision Procedure for \(\foreach(\mathcal{P},\pat)\)}
\label{sec:bounded-control-sets}

In this section we describe a decision procedure for the
problem \(\foreach(\mathcal{P},\pat)\) where \(\mathcal{P} = \tuple{G,
I, \sem{.}}\) is an octagonal program, whose underlying grammar
is \(G=\tuple{\Vars,\Sigma,\prod}\), and \(\pat=w_1^*\ldots w_d^*\) is
a bounded expression over \(\Sigma\). The procedure follows the
roadmap described next. 

First, we compute, in time polynomial in the sizes of
\(\mathcal{P}\) and \(\pat\), a set of programs
\(\{ \mathcal{P}_i=\tuple{G^\cap,X_i,\sem{.}} \}_{i=1}^\ell\), 
such that \(L_I(G) \cap \pat = \bigcup_{i=1}^{\ell} L_{X_i}(G^\cap)\),
which implies \(\sem{\mathcal{P}}_\pat
= \bigcup_{i=1}^\ell \sem{\mathcal{P}_i}\).  The grammar \(G^\cap\) is
an automata-theoretic product between the grammar \(G\) and the bounded expression \(\pat\). For space
reasons, the formal definition of \(G^\cap\) is deferred to
Appendix \ref{sec:easy}, and we refer the reader to
Example \ref{ex:bounded-expression1}. 
\comment[pg]{Turn into ``Details are in the technical report \cite{} but an example of \(G^{\cap}\) is provided below.''}
Deciding \(\foreach(\mathcal{P},\pat)\)
reduces thus to deciding several
instances \(\set{\foreach(\mathcal{P}_i,\pat)}_{i=1}^\ell\) of the
fo-reachability problem.

\begin{example}\label{ex:bounded-expression1}
  Let us consider the bounded expression $\pat = (ac)^* \, (ab)^* \, (db)^*$. 
  Consider the grammar $G^\pat$ with the following productions: 
  \(\textsc{q}_1^{(1)} \rightarrow a\, \textsc{q}_2^{(1)} \mid \varepsilon\),  
  \(\textsc{q}_1^{(2)} \rightarrow a\, \textsc{q}_2^{(2)} \mid \varepsilon\),
  \(\textsc{q}_1^{(3)} \rightarrow d\, \textsc{q}_2^{(3)} \mid \varepsilon\),
  \(\textsc{q}_2^{(1)} \rightarrow c\, \textsc{q}_1^{(1)} \mid c\, \textsc{q}_1^{(2)} \mid c\, \textsc{q}_1^{(3)}\),
  \(\textsc{q}_2^{(2)} \rightarrow b\, \textsc{q}_1^{(2)} \mid b\, \textsc{q}_1^{(3)}\),
  \(\textsc{q}_2^{(3)} \rightarrow b\, \textsc{q}_1^{(3)}\). 
  It is easy to check that $\pat = \bigcup_{i=1}^3L_{\textsc{q}_1^{(i)}}(G^\pat)$.  
  Let $G = \tuple{\set{X,Y,Z,T}, \set{a,b,c,d}, \prod}$ where 
  $\prod = \set{X \rightarrow aY,\; Y \rightarrow Zb,\; Z \rightarrow cT,\; Z \rightarrow \varepsilon,\; T \rightarrow Xd}$, 
  i.e.\ we have $L_X(G)=\set{(ac)^n \, ab \, (db)^n \mid n \in \nats}$.  The following productions define a grammar $G^\cap$: 
  {\small\[\begin{array}{rcllcrcllcrcll}
\lbrack\textsc{q}_1^{(j)} X \textsc{q}_1^{(3)}\rbrack & \stackrel{p_1}{\rightarrow} & a \,    \lbrack\textsc{q}_2^{(j)} Y \textsc{q}_1^{(3)}\rbrack & & \hspace*{.5cm} &
\lbrack\textsc{q}_2^{(1)} Y \textsc{q}_1^{(3)}\rbrack & \stackrel{p_2}{\rightarrow} & \lbrack\textsc{q}_2^{(1)} Z \textsc{q}_2^{(3)}\rbrack \,    b \\
\lbrack\textsc{q}_2^{(1)} Z \textsc{q}_2^{(3)}\rbrack & \stackrel{p_3}{\rightarrow} & c \,    \lbrack\textsc{q}_1^{(j)} T \textsc{q}_2^{(3)}\rbrack & & \hspace*{.5cm} &
\lbrack\textsc{q}_2^{(2)} Z \textsc{q}_2^{(2)}\rbrack & \stackrel{p_4}{\rightarrow} & \varepsilon\\
\lbrack\textsc{q}_1^{(j)} T \textsc{q}_2^{(3)}\rbrack & \stackrel{p_5}{\rightarrow} & \lbrack\textsc{q}_1^{(j)} X \textsc{q}_1^{(3)}\rbrack \,    d & \mbox{, for $j=1,2$} & &
\lbrack\textsc{q}_1^{(2)} X \textsc{q}_1^{(3)}\rbrack & \stackrel{p_6}{\rightarrow} & a \,    \lbrack\textsc{q}_2^{(2)} Y \textsc{q}_1^{(3)}\rbrack \\
\lbrack\textsc{q}_2^{(2)} Y \textsc{q}_1^{(3)}\rbrack & \stackrel{p_7}{\rightarrow} & \lbrack\textsc{q}_2^{(2)} Z \textsc{q}_2^{(2)}\rbrack \,    b 
\end{array}\]}
One can check $L_X(G) =  L_X(G) \cap \pat = L_{[\textsc{q}_1^{(1)} X \textsc{q}_1^{(3)}]}(G^\cap) \cup L_{[\textsc{q}_1^{(2)} X
      \textsc{q}_1^{(3)}]}(G^\cap)$. \(\blacksquare\)
\end{example}

A bounded expression \(\pat=w_1^* \ldots w_d^*\) over
alphabet \(\Sigma\) is said to be {\em $d$-letter-bounded} (or simply
letter-bounded, when $d$ is not important) when $\len{w_i}=1$, for all
$i=1,\ldots,d$. A letter-bounded expression \(\patt\) is \emph{strict}
if all its symbols are distinct. A language $L \subseteq \Sigma^*$ is
(strict, letter-)~bounded if{}f $L \subseteq \pat$, for some (strict,
letter-)~bounded expression $\pat$.

Second, we reduce the problem from \(\pat=w_1^* \ldots w_d^*\) to the
strict letter-bounded case \(\patt=a_1^* \ldots a_d^*\), by building a
grammar \(G^\bowtie\), with the same nonterminals as \(G^\cap\), such
that, for each \(i=1,\ldots,\ell\)
\begin{inparaenum}[\upshape(\itshape i\upshape)]
\item \(L_{X_i}(G^\bowtie) \subseteq \patt\),
\item \(w_1^{i_1}\ldots w_d^{i_d}\in L_{X_i}^{(k)}(G^{\cap})\) if{}f 
	\(a_1^{i_1}\ldots a_d^{i_d}\in L_{X_i}^{(k)}(G^{\bowtie}) \), for all \(k>0\)
\item\label{it:transposition} from each control set \(\Gamma\) that covers the language 
\(L^{(k)}_{X_i}(G^\bowtie) \subseteq \hat{L}_{X_i}(\Gamma,G^\bowtie)\) for some \(k>0\), 
one can compute, in polynomial time, a control
set \(\widetilde{\Gamma}\) that covers the
language \(L^{(k)}_{X_i}(G^\cap) \subseteq \hat{L}_{X_i}(\widetilde{\Gamma},G^\cap)\).
\end{inparaenum}

\begin{example}[contd. from Example \ref{ex:bounded-expression1}]\label{ex:bounded-expression2}
  Let $\mathcal{A} = \set{a_1, a_2, a_3}$, \(\patt=a_1^*a_2^*a_3^*\) and 
  $h \colon \mathcal{A} \rightarrow \Sigma^*$ be the homomorphism given 
  by $h(a_1) = ac, h(a_2) = ab$ and $h(a_3) = db$. The grammar $G^\bowtie$ 
  results from deleting \(a\)'s and \(d\)'s in \(G^{\cap}\) and replacing 
  \(b\) in \(p_2\) by \(a_3\), \(b\) in \(p_7\) by \(a_2\) and \(c\) by \(a_1\).
  Then, it is easy to check that $h^{-1}( L_X(G) ) \cap \patt = 
  L_{[\textsc{q}_1^{(1)} X \textsc{q}_1^{(3)}]}(G^\bowtie) \cup 
  L_{[\textsc{q}_1^{(2)} X \textsc{q}_1^{(3)}]}(G^\bowtie) = 
  \set{a_1^n \, a_2 \, a_3^n \mid n \in \nats}$. \hfill \(\blacksquare\)
\end{example}

Third, for the strict letter-bounded grammar \(G^\bowtie\), we compute
a control set \(\Gamma \subseteq (\prod^{\bowtie})^*\) using the
result of Theorem \ref{thm:letter-bounded-control-set}, which yields
a set of bounded expressions \(\mathcal{S}_\patt = \set{\Gamma_{i,
1}, \ldots, \Gamma_{i, m_i}}\), such
that \(L_{X_i}^{(k)}(G^\bowtie) \subseteq \bigcup_{j=1}^{m_i}
\hat{L}_{X_i}(\Gamma_{i, j} \cap \Gamma_{X_i}^{\df{k+1}}, G^\bowtie)\). 
By applying the aforementioned transformation
({\itshape\ref{it:transposition}}) from \(\Gamma\)
to \(\widetilde{\Gamma}\), we obtain
that \(L^{(k)}_{X_i}(G^\cap) \subseteq \bigcup_{j=1}^{m_i}
\hat{L}_{X_i}(\widetilde{\Gamma}_{i, j} \cap \Gamma_{X_i}^{\df{k+1}}, G^\cap)\).
Theorem \ref{thm:luker} allows to effectively compute value \(K>0\)
such that \(L_{X_i}(G^\cap) = L_{X_i}^{(K)}(G^\cap)\), for
all \(i=1,\ldots,\ell\). Thus we obtain\footnote{Because
\(L_{X_i}(G^\cap) \subseteq L_{X_i}^{(K)}(G^\cap) \subseteq \bigcup_{j=1}^{m_i} 
\hat{L}_{X_i}(\widetilde{\Gamma}_{i, j} \cap \Gamma_{X_i}^{\df{k+1}}, G^\cap) \subseteq L_{X_i}(G^\cap)\) .}
\(L_{X_i}(G^\cap) = \bigcup_{j=1}^{m_i}  \hat{L}_{X_i}(\widetilde{\Gamma}_{i, j} 
\cap \Gamma_{X_i}^{\df{K+1}}, G^\cap)\), for all \(i=1,\ldots,\ell\).

The final step consists in building a finite
automaton \(A^{\df{K+1}}\) that recognizes the control
set \(\Gamma_{X_i}^{\df{K+1}}\) (Lemma \ref{fsa-dfk}). This yields a
procedure-less program \(\mathcal{Q}\), whose control structure is
given by \(A^{\df{K+1}}\), and whose labels are given by the semantics
of control words. We recall that, for every word \(w \in
L_{X_i}(G^\cap)\) there exists a control
word \(\gamma \in \Gamma_{X_i}^{\df{K+1}}\) such
that \(\sem{w}\neq\emptyset\) if{}f \(\sem{\gamma}\neq\emptyset\). We
have thus reduced each of the
instances \(\set{\foreach(\mathcal{P}_i,\pat)}_{i=1}^\ell\) of the
fo-reachability problem to a set of
instances \(\{\foreach(\mathcal{Q},\widetilde{\Gamma}_{i, j}) \mid 1\leq i\leq \ell,\, 1\leq j\leq m_i\}\). The
latter problem, for procedure-less programs, is decidable in
\textsc{Nptime} \cite{bik14}. Next is our main result whose proof is in Appendix~\ref{app:fo-reachability}.
\comment[pg]{Turn into ``A detailed proof of the main result, stated next, is given in the technical report \cite{}''.}

\begin{theorem}\label{thm:fo-reachability}
Let \(\mathcal{P}=\tuple{G,I,\sem{.}}\) be an octagonal program,
where \(G=\tuple{\Vars,\Sigma,\prod}\) is a grammar, and \(\pat\) is a
bounded expression over \(\Sigma\). Then the
problem \(\foreach(\mathcal{P}, \pat)\) is decidable
in \textsc{Nexptime}, with a \textsc{Np}-hard lower bound.  If,
moreover, \(k\) is a constant, \(\foreach^{(k)}(\mathcal{P},\pat)\) is \textsc{Np}-complete.
\end{theorem}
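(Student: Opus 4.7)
\textbf{Proof plan for Theorem~\ref{thm:fo-reachability}.} The plan is to implement the five-step pipeline sketched in Section~\ref{sec:bounded-control-sets} and then carefully analyse the blow-ups at each stage. First, I would construct the product grammar \(G^\cap\) and the associated programs \(\{\mathcal{P}_i\}_{i=1}^\ell\) in time polynomial in \(\len{\mathcal{P}}\) and \(\len{\pat}\), so that \(\foreach(\mathcal{P},\pat)\) reduces to a polynomial-size disjunction of instances \(\foreach(\mathcal{P}_i,\pat)\). Then I would apply the strict letter-bounded reduction to obtain \(G^\bowtie\), preserving the index \(k\) and, crucially, allowing control sets computed over \(G^\bowtie\) to be transposed in polynomial time into control sets over \(G^\cap\) (property~(\ref{it:transposition})).

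Second, I would invoke Theorem~\ref{thm:letter-bounded-control-set} on each \(G^\bowtie\) starting from \(X_i\) to obtain a family of bounded control sets \(\{\Gamma_{i,j}\}_{j=1}^{m_i}\) covering \(L_{X_i}^{(k)}(G^\bowtie)\), and transpose them back into sets \(\widetilde{\Gamma}_{i,j}\) over \(\prod^\cap\). By Theorem~\ref{thm:luker} the constant \(K=O(\len{G^\cap})\) guarantees \(L_{X_i}(G^\cap)=L_{X_i}^{(K)}(G^\cap)\), so intersecting the \(\widetilde{\Gamma}_{i,j}\) with \(\Gamma_{X_i}^{\df{K+1}}\) yields a cover of the full language \(L_{X_i}(G^\cap)\). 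Using Lemma~\ref{fsa-dfk}, \(\Gamma_{X_i}^{\df{K+1}}\) is recognised by a finite automaton \(A^{\df{K+1}}\) of size \(\len{G^\cap}^{\mathcal{O}(K)}\), which is singly exponential in \(\len{\mathcal{P}}+\len{\pat}\). Labelling the edges of \(A^{\df{K+1}}\) with the octagonal semantics of control words produces the procedure-less octagonal program \(\mathcal{Q}\), reducing each \(\foreach(\mathcal{P}_i,\pat)\) to a polynomial-size disjunction of instances \(\foreach(\mathcal{Q},\widetilde{\Gamma}_{i,j})\).

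Third, I would invoke the \textsc{Np} decision procedure of~\cite{bik14} for flat-octagonal reachability on procedure-less programs. Since \(\mathcal{Q}\) has size at most exponential in the input and each \(\widetilde{\Gamma}_{i,j}\) has polynomial size, non-deterministically guessing a witness and checking it takes non-deterministic exponential time, giving the \textsc{Nexptime} upper bound. The matching \textsc{Np}-hard lower bound is inherited directly from the procedure-less case~\cite{bik14}, which is already an instance of our problem (take \(G\) with no call productions).

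For the parametric statement, when \(k\) is fixed and not part of the input, Lemma~\ref{fsa-dfk} gives \(\len{A^{\df{k+1}}_{G^\cap}}=\len{G^\cap}^{\mathcal{O}(k)}\), which is polynomial in \(\len{\mathcal{P}}+\len{\pat}\). Consequently \(\mathcal{Q}\) has polynomial size, the control sets \(\widetilde{\Gamma}_{i,j}\) are computed in polynomial time, and the \textsc{Np} procedure of~\cite{bik14} yields an \textsc{Np} algorithm for \(\foreach^{(k)}(\mathcal{P},\pat)\); the matching lower bound is again inherited. The main obstacle in carrying this out rigorously is the bookkeeping around the size of \(A^{\df{K+1}}\) and the preservation of the index bound through the product \(G^\cap\) and the letter-bounded reduction \(G^\bowtie\): one must verify that transposing control sets from \(G^\bowtie\) back to \(G^\cap\) does not inflate the index beyond \(k+1\), so that the intersection with \(\Gamma_{X_i}^{\df{K+1}}\) indeed captures all derivations of interest. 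Once this invariant is established, the complexity bounds follow from the size estimates above.
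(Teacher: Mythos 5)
Your treatment of the general problem follows the paper's proof in all essentials: the product grammar \(G^\cap\) (Lemma~\ref{lem:intersection}), the letter-bounded reduction \(G^\bowtie\) (Lemma~\ref{lem:interface}), the control sets of Theorem~\ref{thm:letter-bounded-control-set} transposed via \(\iota^{-1}\), Luker's constant \(K=\mathcal{O}(\len{G^\cap})\) collapsing the sandwich \(L^{(K)}_{X_i}(G^\cap) \subseteq \hat{L}_{X_i}(\cdot) \subseteq L_{X_i}(G^\cap)\), the automaton \(A^{\df{K+1}}\) yielding \(\mathcal{Q}\), and the \textsc{Np} oracle of \cite{bik14} run on exponential-size inputs. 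Two remarks on this part. First, the paper's appendix proves the lower bound by an explicit reduction from Positive Integer Linear Programming \cite{schrijver} rather than by citing \cite{bik14}; inheriting hardness from the procedure-less case, as you do, is equally valid. Second, your claim that each \(\widetilde{\Gamma}_{i,j}\) has \emph{polynomial} size is wrong in this regime: Theorem~\ref{thm:letter-bounded-control-set} must be invoked at index \(K=\mathcal{O}(\len{\pat}^3\cdot\len{G})\), so each bounded expression has size \(\len{G^\bowtie}^{\mathcal{O}(K)}\), i.e.\ exponential. This slip is harmless for the conclusion (exponential-size instances fed to an \textsc{Np} procedure still give \textsc{Nexptime}), but the accounting should be corrected.

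The genuine gap is in the fixed-\(k\) case. You run the same pipeline directly on \(G^\cap\) at index \(k\) and conclude that the resulting \textsc{Np} procedure decides \(\foreach^{(k)}(\mathcal{P},\pat)\). But the covering produced by Theorem~\ref{thm:letter-bounded-control-set} is only sandwiched: \(L^{(k)}_{X_i}(G^\cap) \subseteq \hat{L}_{X_i}(\widetilde{\Gamma}_{i,j} \cap \Gamma_{X_i}^{\df{k+1}}, G^\cap) \subseteq L^{(k+1)}_{X_i}(G^\cap)\), and for a fixed \(k\) the two ends need not coincide — this is precisely why the general case needs Luker's \(K\), where \(L^{(K)}=L=L^{(K+1)}\) and the sandwich collapses. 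Hence your procedure is complete but not sound for \(\foreach^{(k)}\): it may accept on account of a control word \(\gamma \in \Gamma_{X_i}^{\df{k+1}}\) whose derived word \(w\) satisfies \(\sem{w}\neq\emptyset\) yet admits no derivation of index \(k\), in which case \(\sem{\mathcal{P}}^{(k)}_\pat\) can still be empty; what you decide is a problem lying between \(\foreach^{(k)}\) and \(\foreach^{(k+1)}\). The paper closes this hole by first passing to the grammar \(G_k\) of \cite[Definition~3.1]{ls13}, with \(\len{G_k}=\mathcal{O}(k^2\len{G})\) and \(L_{(I,k)}(G_k)=L^{(k)}_I(G)\), which turns \(\foreach^{(k)}(\mathcal{P},\pat)\) into an \emph{unrestricted} problem \(\foreach(\mathcal{P}_k,\pat)\); the general reduction then applies with exact language equality, and the instance sizes stay polynomial, namely \((\len{\pat}^3\cdot k^2\cdot\len{G})^{\mathcal{O}(k)}\), since \(k\) is constant. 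Note also that even for constant \(k\) the full set \(\mathcal{S}_\patt\) costs \(\len{G}^{\mathcal{O}(k)+d}\) to build, which is exponential in \(d\); the \textsc{Np} bound requires the nondeterministic variant of Algorithm~\ref{alg:bounded-control-set} that guesses a single \(\Gamma\in\mathcal{S}_\patt\), each being computable in time \(\len{G}^{\mathcal{O}(k)}\).
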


The rest of this section describes the construction of the control
sets \(\mathcal{S}_\patt\) and gives upper bounds on the time needed
for this computation. We use the following ingredients:
\begin{inparaenum}[\upshape(\itshape i\upshape)] 
\item Algorithm \ref{alg:constant-control-set} for building 
bounded control sets for $s$-letter bounded languages, where $s\geq0$ is a constant 
(in our case, at most $2$) (Section \ref{sec:constant-control-set}), and
\item a decomposition of $k$-index depth-first derivations, that 
distinguishes between a prefix producing a word from the \(2\)-letter
bounded expression $a_1^*a_d^*$, and a suffix producing two words
included in bounded expressions strictly smaller than \(\patt\)
(Section \ref{sec:bounded}). 
\end{inparaenum}
The decomposition enables the generalization from \(s\)-letter bounded
languages where \(s\) is a constant to arbitrary letter bounded
languages. In particular, the required set of bounded
expressions \(\mathcal{S}_\patt\) is built inductively over the
structure of this decomposition, applying at each step
Algorithm \ref{alg:constant-control-set} which computes bounded
control sets for 2-letter bounded languages. The main algorithm
(Algorithm \ref{alg:bounded-control-set}) returns a finite
set \(\mathcal{S}_\patt\) of bounded expressions
$\{\Gamma_1, \ldots, \Gamma_m\}$. Below we abuse notation and
write \(\bigcup\mathcal{S}_\patt\)
for \(\bigcup_{i=1}^m \Gamma_i \). The time needed to build each
bounded expression $\Gamma_i \in \mathcal{S}_\patt$ is
$\len{G}^{\mathcal{O}(k)}$ and does not depend of $\len{\patt}=d$,
whereas the time needed to build the entire set \(\mathcal{S}_\patt\)
is \(\len{G}^{\mathcal{O}(k)+d}\). These arguments come in handy when
deriving an upper bound on the (non-deterministic) time complexity of
the fo-reachability problem for programs with arbitrary call graphs.
A non-deterministic version of Algorithm \ref{alg:bounded-control-set}
that choses one set \(\Gamma_i\in\mathcal{S}_\patt\), instead of
building the whole set \(\mathcal{S}_\patt\), is used to establish the
upper bounds for the \(\foreach(\mathcal{P}, \pat)\)
and \(\foreach^{(k)}(\mathcal{P}, \pat)\) problems in the proof of
Theorem \ref{thm:fo-reachability}.

\subsection{Constant $s$-Letter Bounded Languages}\label{sec:constant-control-set}

Here we define an algorithm for building
bounded control sets that are sufficient for covering a $s$-letter
bounded language \(L_X(G) \subseteq a_1^* \ldots a_s^*\), when
$s\geq0$ is a constant\footnote{In our case $s=0,1,2$, but the
construction can be generalized to any constant $s \geq 0$. }, i.e.\
not part of the input of the algorithm. In the following, we consider
the labeled graph \(A^{\df{k}} = \tuple{ Q, \prod, \rightarrow}\),
whose paths correspond to the $k$-index depth-first step sequences of
$G$ (Lemma~\ref{fsa-dfk}). Recall that the number of vertices in this
graph is $\len{A^{\df{k}}} \leq \len{G}^{2k}$. 

Given $q,q'\in Q$, we denote by $\Pi(q,q')$ the set of paths with
source $q$ and destination $q'$. For a path \(\pi\), we denote
by \(\omega(\pi) \in \prod^*\) the sequence of edge labels on \(\pi\).
A path \(\pi\) is a \emph{cycle} if its endpoints coincide.
Furthermore, the path is said to be an \emph{elementary cycle} if it
contains no other cycle than itself.  Finally, \(\pi\) is acyclic if
it contains no cycle.  The word \emph{induced} by a path
in \(A^{\df{k}}\) is the sequence of terminal symbols generated by the
productions fired along that path. Observe that,
since \(L_X(G) \subseteq a_1^* \ldots a_s^*\), any word induced by a
subpath of some path \(\pi\in\Pi(X^{\tuple{0}},\varepsilon)\) is
necessarily of the form \(a_1^{i_1} \ldots a_s^{i_s}\), for
some \(i_1,\ldots,i_s\geq0\).

Algorithm~\ref{alg:constant-control-set} describes the effective
construction of a bounded expression \(\Gamma\) over the productions
of $G$ using the sets of elementary cycles of $A^{\df{k}}$. The crux
is to find, for each vertex $q$ of $A^{\df{k}}$, a subset $C_q$ of
elementary cycles having $q$ at the endpoints, such that the set of
words induced by \(C_q\) is that of the entire set of elementary
cycles having $q$ at endpoints. Since the only vertex occurring more
than once in an elementary cycle \(\rho\) is the endpoint \(q\), we
have that \(\len{\rho}\) is at most the number of
vertices \(\len{A^{\df{k}}}\), and each production rule generates at
most $2$ terminal symbols, hence no word induced by a elementary cycle
is longer than \(2\len{A^{\df{k}}} \leq 2\len{G}^{2k}\). The number of
words \( a_1^{i_1} \ldots a_s^{i_s} \) induced by elementary cycles
with endpoints $q$ is thus bounded by the number of nonnegative
solutions of the inequality $x_1 + \cdots + x_s \leq 2\len{G}^{2k}$,
which, in turn, is of the order of $\len{G}^{\mathcal{O}(k)}$. So for
each vector \(\boldsymbol{v}\in\nat^s\) such that $(\boldsymbol{v})_1
+ \cdots + (\boldsymbol{v})_s \leq 2\len{G}^{2k}$, it suffices to
include in $C_q$ only one elementary cycle inducing the
word \(a_1^{(\boldsymbol{v})_1} \ldots a_s^{(\boldsymbol{v})_s}\).
Thus it is sufficient to consider sets $C_q$ of cardinality
$\card{C_q}=\len{G}^{\mathcal{O}(k)}$, for all $q \in Q$.

Lines (\ref{ln:h-start}--\ref{ln:h-end}) of Algorithm
\ref{alg:constant-control-set} build a graph $\mathcal{H}$
with vertices $\tuple{q,a_1^{i_1} \ldots a_s^{i_s}}$, where $q \in Q$
is a vertex of $A^{\df{k}}$ and \(i_1,\ldots,i_s\) a solution to the
above inequality (line~\ref{ln:h-start}), hence \(\mathcal{H}\) is a
finite and computable graph. There is an edge between two vertices
$\tuple{q,a_1^{i_1} \ldots a_s^{i_s}}$ and $\tuple{q',a_1^{j_1} \ldots
a_s^{j_s}}$ in $\mathcal{H}$ if and only if $q \arrow{p}{} q'$ in
$A^{\df{k}}$ and $a_{\ell}^{j_{\ell}} = a_{\ell}^{i_{\ell}} \cdot
(\proj{p}{a_{\ell}})$ for every \(\ell\), that is \(j_{\ell}\) is the
sum of \(i_{\ell}\) and the number of occurrences of \(a_{\ell}\)
produced by \(p\) (which is precisely captured by the
word \(\proj{p}{a_{\ell}}\)) (line \ref{ln:h-trans}). The sets $C_q$
are computed by applying the Dijkstra's single source shortest path
algorithm\footnote{We consider all edges to be of weight \(1\).}  to
the graph $\mathcal{H}$ (line \ref{ln:dijkstra}) and retrieving
in \(C_q\) the paths \(\tuple{q,\varepsilon} \rightarrow^* \tuple{q,a_1^{i_1} \ldots
a_s^{i_s}}\), such that \(i_1 + \cdots + i_s \leq 2\len{G}^{2k} \)
(line \ref{ln:cq}).

For a finite set of words \(S=\set{u_1,\ldots,u_h}\), the function
\(\Call{Concat}{S}\) returns the bounded expression \(u_1^* \ldots u_h^*\).
Algorithm \ref{alg:constant-control-set} uses this function to build a
bounded expression $\Gamma$ that covers all words induced by paths
from \(\Pi(X^\tuple{0},\varepsilon)\). This construction relies on the
following argument: for each $\pi\in\Pi(X^\tuple{0},\varepsilon)$, there
exists another path $\pi'\in\Pi(X^\tuple{0},\varepsilon)$, such that
their induced words coincide, and, moreover, $\pi'$ can be
factorized%
as
$\varsigma_1 \cdot \theta_1 \cdots \varsigma_\ell \cdot \theta_\ell \cdot \varsigma_{\ell+1}$,
where \(\varsigma_1 \in \Pi(X^{\tuple{0}},q_1)\), \(\varsigma_{\ell+1} \in \Pi(q_\ell,
\varepsilon)\) and \(\varsigma_{j} \in \Pi(q_{j-1},q_{j})\) for each \(1<
j \leq \ell\) are acyclic paths, $\theta_1, \ldots, \theta_\ell$ are
elementary cycles with endpoints $q_{1}, \ldots, q_{\ell}$, respectively,
and $\ell \leq \len{A^{\df{k}}}$. Thus we can cover each segment
$\varsigma_i$ by a bounded expression $C=\Call{Concat}{\prod}^{\len{G}^{2k}-1}$
(line \ref{ln:concat-prod}), and each segment $\theta_j$ by the
bounded expression \(B_0=\Call{Concat}{\set{\omega(\pi) \mid \pi \in
C_{q_j}}}\) (line \ref{ln:concat-cq}), yielding the required
expression $\Gamma$. The following lemma proves the correctness of
Algorithm \ref{alg:constant-control-set} and gives an upper bound on its runtime.

\begin{lemma}\label{lem:ginsbook-d}
  Let $G = \tuple{ \Vars, \mathcal{A}, \prod }$ be a grammar
  and \(a_1^*\ldots a_s^*\) is a strict \(s\)-letter-bounded
  expression over \(\mathcal{A}\), where \(s\geq0\) is a constant.
  Then, for each $k > 0$ there exists a bounded expression \(\Gamma\)
  over \(\prod\) such that, for all $X \in \Vars$ and
  $Y \in \Vars \cup \{\varepsilon\}$, we have $L^{(k)}_{X,Y}(G)
  = \hat{L}_{X,Y}(\Gamma\cap\Gamma_{X,Y}^{\df{k}},G)$, provided
  that \(L_{X,Y}(G)\subseteq a_1^*\ldots a_s^*\). Moreover, \(\Gamma\)
  is computable in time $\len{G}^{\mathcal{O}(k)}$.
\end{lemma}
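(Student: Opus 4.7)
The plan is to prove the equality $L^{(k)}_{X,Y}(G) = \hat{L}_{X,Y}(\Gamma \cap \Gamma_{X,Y}^{\df{k}}, G)$ and to justify the $\len{G}^{\mathcal{O}(k)}$ runtime of Algorithm \ref{alg:constant-control-set}. The inclusion from right to left is immediate: every control word in $\Gamma \cap \Gamma_{X,Y}^{\df{k}}$ labels, by Lemma \ref{fsa-dfk}, a $k$-index depth-first step sequence from $X$ to some $uYv$, so the produced word lies in $L^{(k)}_{X,Y}(G)$. The substantial direction is to show that the bounded expression $\Gamma$ built by the algorithm covers every word of $L^{(k)}_{X,Y}(G)$.

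The starting point is again Lemma \ref{fsa-dfk}: any $w \in L^{(k)}_{X,Y}(G)$ is the word induced by some path $\pi \in \Pi(X^{\tuple{0}}, Y^{\tuple{0}})$ in $A^{\df{k}}$. The decisive observation is that, because $L_{X,Y}(G) \subseteq a_1^* \ldots a_s^*$ is \emph{strict} letter-bounded, $w$ is fully determined by its Parikh vector $(\len{\proj{w}{a_1}},\ldots,\len{\proj{w}{a_s}})$; so $\pi$ may be replaced by any path from $X^{\tuple{0}}$ to $Y^{\tuple{0}}$ whose induced word has the same Parikh vector. A standard cycle-extraction argument then rewrites $\pi$ as $\varsigma_1 \theta_1 \varsigma_2 \cdots \theta_\ell \varsigma_{\ell+1}$, where the $\varsigma_i$ are acyclic subpaths (with $\varsigma_1$ starting at $X^{\tuple{0}}$ and $\varsigma_{\ell+1}$ ending at $Y^{\tuple{0}}$) and each $\theta_j$ is an elementary cycle based at some vertex $q_j$; since the $\varsigma_i$ are acyclic, $\ell \leq \len{A^{\df{k}}}$.

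Two combinatorial facts finish the argument. First, elementary cycles in $A^{\df{k}}$ visit each vertex at most once and each production emits at most two terminals, so the Parikh vector of any elementary cycle lies in the set $V_k = \{\boldsymbol{v} \in \nats^s \mid \sum_{i=1}^s (\boldsymbol{v})_i \leq 2\len{A^{\df{k}}}\}$, whose cardinality is $\len{G}^{\mathcal{O}(k)}$ because $s$ is constant. Second, the graph $\mathcal{H}$ is designed so that its walks from $\tuple{q,\varepsilon}$ to $\tuple{q,\boldsymbol{v}}$ correspond exactly to cycles of $A^{\df{k}}$ at $q$ inducing a word with Parikh vector $\boldsymbol{v}$; Dijkstra's shortest-path step therefore populates $C_q$ with, for every reachable pair $\tuple{q,\boldsymbol{v}}$ with $\boldsymbol{v} \in V_k$, a witnessing cycle at $q$ of Parikh vector $\boldsymbol{v}$. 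Substituting each $\theta_j$ in the decomposition by the matching element of $C_{q_j}$ produces a path $\pi'$ with the same induced word as $\pi$, and its label lies in the bounded expression obtained by concatenating $\Call{Concat}{\prod}^{\len{A^{\df{k}}}-1}$ (which covers any acyclic segment) and $\Call{Concat}{\set{\omega(\rho) \mid \rho \in C_q}}$ (which covers cycles at $q$), alternating these blocks at most $\len{A^{\df{k}}}$ times.

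The main obstacle is the rewriting step: one must verify that cycle extraction in $A^{\df{k}}$ preserves endpoints and transitions, that Dijkstra indeed enumerates every Parikh vector realisable by an elementary cycle at $q$, and that strict letter-boundedness legitimates swapping a cycle by any representative of identical Parikh vector (for non-strict bounded expressions this exchange would not preserve the word). Once this is granted, the runtime bound is straightforward: $\mathcal{H}$ has $\len{A^{\df{k}}} \cdot \len{G}^{\mathcal{O}(k)} = \len{G}^{\mathcal{O}(k)}$ vertices, Dijkstra runs in polynomial time on it, and $\Gamma$ is a concatenation of $\len{G}^{\mathcal{O}(k)}$ blocks each of size $\len{G}^{\mathcal{O}(k)}$.
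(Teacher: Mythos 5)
Your overall strategy is exactly the paper's: exploit strictness so that a word of $L_{X,Y}(G)$ is determined by its Parikh vector, work in the graph $A^{\df{k}}$ of Lemma~\ref{fsa-dfk}, populate the sets $C_q$ via the product graph $\mathcal{H}$ and Dijkstra, and assemble $\Gamma$ as an alternation of $\Call{Concat}{\prod}$-blocks and per-vertex cycle blocks. However, the decomposition step on which everything rests is misstated, and as written it fails. You claim every $\pi \in \Pi(X^{\tuple{0}},Y^{\tuple{0}})$ can be rewritten as $\varsigma_1\theta_1\cdots\theta_\ell\varsigma_{\ell+1}$ with the $\varsigma_i$ acyclic, each $\theta_j$ an \emph{elementary} cycle, and $\ell \leq \len{A^{\df{k}}}$. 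Writing $N=\len{A^{\df{k}}}$, any path of that shape has length at most $(2N+1)N$, whereas paths of $A^{\df{k}}$ are unboundedly long; already $\theta^M$ for a single elementary cycle $\theta$ and $M>N$ admits no such form. The inference ``the $\varsigma_i$ are acyclic, hence $\ell\leq N$'' is a non sequitur: what bounds the number of blocks by $N$ is not acyclicity of the connecting segments but the grouping of \emph{all} elementary cycles sharing a base vertex into one block.

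The correct statement is Proposition~\ref{prop:weighted-graph-decomposition} of the paper: $\pi$ can be \emph{permuted} into $\pi'=\varsigma_1\theta_1\cdots\theta_\ell\varsigma_{\ell+1}$ with $\theta_j\in\cyclestar{}{q_j}$, i.e.\ each $\theta_j$ is an arbitrary concatenation of elementary cycles at $q_j$, and $\projpatt{\omega(\pi)}=\projpatt{\omega(\pi')}$. Note that this permutation of the control word is itself a place where strictness is essential — a reordered control word generates the same terminal word only because the Parikh projection is invariant under permutation and strictness makes that projection determine the word — whereas your proposal invokes strictness only to justify swapping one cycle for a representative of equal Parikh vector. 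Moreover, once $\theta_j$ is a cluster rather than an elementary cycle, its Parikh vector may exceed the $2N$ bound, so there need be no single ``matching element of $C_{q_j}$'': you must decompose $\theta_j$ into elementary cycles, replace each one by its representative in $C_{q_j}$, and sort the representatives so that repetitions collapse into powers; only then does the cluster's label fall inside the starred block $\Call{Concat}{\set{\omega(\rho)\mid\rho\in C_{q_j}}}$. With this repair (which is precisely the paper's proposition and its use) your argument coincides with the paper's. A further minor imprecision: passing from $w\in L^{(k)}_{X,Y}(G)$ to a path of $A^{\df{k}}$ requires Luker's normalization (Theorem~\ref{thm:luker}), since $L^{(k)}_{X,Y}$ is defined through arbitrary $k$-index step sequences while Lemma~\ref{fsa-dfk} speaks only of depth-first ones.
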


\begin{algorithm}[thb]
{\scriptsize
\begin{algorithmic}[0]
\State {\bf input} A grammar $G = \tuple{\Vars, \mathcal{A}, \prod}$, 

       \State \hspace*{7mm} a strict \(s\)-letter-bounded expression \(a_{1}^* \ldots a_{s}^*\) 
       over \(\mathcal{A}\), where \(s\geq0\) is a fixed constant,

       \State \hspace*{7mm} and $k > 0$

  \State {\bf output} a bounded expression $\Gamma$ over \(\prod\) such that
  $L^{(k)}_{X,Y}(G) = \hat{L}_{X,Y}(\Gamma \cap \Gamma_{X,Y}^{\df{k}}, G)$
	for all $X \in \Vars$ and $Y \in \Vars \cup \{\varepsilon\}$, such that
	\(L_{X,Y}(G) \subseteq a_1^*\ldots a_s^*\)
\end{algorithmic}
\begin{algorithmic}[1]
  \Function{ConstantBoundedControlSet}{$G,a_{1}^* \ldots a_{s}^*,k$}

	\State \(\mathit{Val} \leftarrow \{a_1^{k_1} \cdots a_s^{k_s} \mid 
	\sum_{j=1}^s k_{j} \leq 2\len{G}^{2k}\}\) \label{ln:h-start}

        \State $V \leftarrow Q \times \mathit{Val}$
				\Comment{\(Q\) are the vertices of \(A^{\df{k}}\), considering $\card{Q} \leq \len{G}^{2k}$ suffices}

	\State $\delta \leftarrow \{\tuple{q,a_1^{i_1}\ldots a_s^{i_s}} \arrow{p}{}
	\tuple{q', a_1^{j_1}\ldots a_s^{j_s}} \mid q \arrow{p}{} q'
	~\mbox{in $A^{\df{k}}$, }~\forall \ell \in \set{1,\ldots,s} \ldotp
        a^{j_{\ell}}_{\ell} {=} a^{i_\ell}_{\ell}\cdot \bigl(\proj{p}{a_{\ell}}\bigr)\}$
        \label{ln:h-trans}

        \State $\mathcal{H} \leftarrow \tuple{V, \prod, \delta}$\label{ln:h-end}

        \State $B_0 \leftarrow \varepsilon$

        \State \(\Call{DijkstraShortestPaths}{\mathcal{H}}\)
        \label{ln:dijkstra}

  \For{$q \in Q$}\label{ln:b0-start}

	\State \(C_q \leftarrow \bigcup_{w\in\mathit{Val}}
        \Call{GetShortestPath}{\mathcal{H},\tuple{q,\varepsilon}, \tuple{q,w}} \)
        \label{ln:cq}

	\State $B_0 \leftarrow B_0 \cdot \Call{Concat}{\{\omega(\pi) \mid \pi \in C_q\} }$
        \label{ln:concat-cq}

  \EndFor\label{ln:b0-end}

  \State \(C \leftarrow \varepsilon\) 
  \label{ln:c-start}

  \For{\(i = 1 \ldots \len{G}^{2k}-1\)}
           \State \(C \leftarrow C \cdot \Call{Concat}{\prod}\)
           \label{ln:concat-prod}
  \EndFor\label{ln:c-end}

  \State $\Gamma \leftarrow \varepsilon$ 

  \For{$i = 1 \ldots \len{G}^{2k}$}\label{ln:pat-start}
  \State $\Gamma \leftarrow \Gamma \cdot C \cdot B_0$
  \EndFor\label{ln:pat-end}

  \State $\Gamma \leftarrow \Gamma \cdot C \cdot B_0 \cdot C$
  \State {\bf return} $\Gamma$
  \EndFunction
\end{algorithmic}
\caption{Control Sets for the Case of Constant Size Bounded Expressions}\label{alg:constant-control-set}
}
\end{algorithm}

\subsection{The General Case}\label{sec:bounded}

The key to the general case is a lemma decomposing derivations. 

\noindent {\bf Decomposition Lemma.} %
Our construction of a bounded control set that covers a strict
letter-bounded context-free language $L_X(G) \subseteq a_1^* \ldots
a_d^*$ is by induction on $d \geq 1$, and is inspired by a
decomposition of the derivations in $G$, given by
Ginsburg \cite[Chapter 5.3, Lemma 5.3.3]{ginsburg}. Because
his decomposition is oblivious to the index or the depth-first policy,
it is too weak for our needs. Therefore, we give first a stronger
decomposition result for \(k\)-index depth-first derivations.

Without loss of generality, the decomposition lemma assumes the bounded
expression covering \(L_X(G)\) to be \emph{minimal}: a strict letter-bounded
expression \(\patt\) is \emph{minimal} for a language \(L\) if{}f \(L \subseteq
\patt\) and for every subexpression \(\mathbf{b}'\), resulting from deleting
some \(a_i^*\) from \(\patt\), we have \(L \nsubseteq \mathbf{b}'\).  Clearly,
each strict letter-bounded language has a unique minimal expression.

Basically, for every $k$-index depth-first derivation with control
word $\gamma$, its productions can be rearranged into a $(k+1)$-index
depth-first derivation, consisting of a prefix $\gamma^\sharp$
producing a word in \(a_1^*\, a_d^*\), then a production \(
(X_i,w)\) followed by two control words $\gamma'$ and $\gamma''$ that
produce words contained within two bounded expressions
$a_\ell^* \ldots a_m^*$ and $a_m^* \ldots a_r^*$, respectively, where
$\max(m-\ell,r-m) < d-1$ (Lemma~\ref{lem:ginsbook-surgery}).
Let us first define the partition \((\Varse,\Varsi)\) of \(\Vars\), as
follows: {\setlength\abovedisplayskip{4pt}
\setlength\belowdisplayskip{4pt}
\[
\begin{array}{rclcl}
  Y \in \Varse & \iff &  L_{Y}(G)\cap (a_1 \cdot \mathcal{A}^*) \neq \emptyset & \text{ {\bf and} } &
  L_{Y}(G) \cap (\mathcal{A}^* \cdot a_d) \neq \emptyset\enspace.
    \end{array} \]}
Naturally, define \(\Varsi = \Vars \setminus \Varse\).
Since the bounded expression $a_1^* \ldots a_d^*$ is, by assumption, 
minimal for $L_X(G)$, then \(a_1\) occurs in some word of \(L_X(G)\)
and \(a_d\) occurs in some word of \(L_X(G)\). Thus it is always the
case that \(\Varse \neq \emptyset\), since \(X \in \Varse\). The
partition of nonterminals into $\Varse$ and $\Varsi$ induces a
decomposition of the grammar $G$. First, let $G^\sharp
= \tuple{\Vars,\mathcal{A},\prod^\sharp}$, where:
{\setlength\abovedisplayskip{4pt}
\setlength\belowdisplayskip{4pt}
\[
\prod^\sharp = \set{(X_j, w) \in \prod
\mid X_j \in \Varsi} \cup \set{(X_j, u \, X_r \, v) \in \prod \mid
X_j, X_r \in \Varse}\enspace.\]}
Then, for each production $(X_i,w)
\in \prod$ such that $X_i \in \Varse$ and $w \in (\Varsi \cup
\mathcal{A})^*$, we define the grammar $G_{i,w} = \tuple{ \Vars,
  \mathcal{A}, \prod_{i,w}}$, where:
{\setlength\abovedisplayskip{4pt}
\setlength\belowdisplayskip{4pt}
\[
\prod_{i,w} = \set{(X_j,v) \in \prod \mid X_j\in\Varsi} \cup \set{(X_i,w)}\enspace.
\]}
The decomposition of derivations is formalized by the following lemma: 

\begin{lemma}\label{lem:ginsbook-surgery}
  Given a grammar $G =\tuple{\Vars,\mathcal{A},\prod}$, a nonterminal
  $X\in\Vars$ such that \(L_X(G) \subseteq a_1^* \ldots a_d^*\) for
  some \(d \geq 3\), and \(k>0\), for every derivation
  $X \xRightarrow[\df{k}]{\gamma}_{\raisebox{1.5ex}{{\scriptsize G}}}
  w$, there exists a production $\mathrm{p}=(X_i, a\,y\,b\,z) \in \prod$ with
  $X_i \in \Varse$, $a,b \in \mathcal{A}\cup\set{\varepsilon}$ and
  $y,z \in \Varsi \cup \set{\varepsilon}$, and
  control words $\gamma^\sharp \in (\Delta^\sharp)^*$,
  $\gamma_{y},\gamma_{z} \in (\Delta_{i,aybz})^*$, such
  that \(\gamma^\sharp \; \mathrm{p}\; \gamma_y \; \gamma_z\) is a
  permutation of \(\gamma\)
  and: \begin{compactenum} 
  
  \item\label{item1:ginsbook-surgery} \(X \xRightarrow[\df{k+1}]{\gamma^\sharp}_{\raisebox{1.5ex}{{\scriptsize
  $G^{\sharp}$}}} u\, X_i\, v\) is a step sequence in \(G^\sharp\)
  with \(u, v\in \mathcal{A}^*\);

  \item\label{item2:ginsbook-surgery} \(y
		\xRightarrow[\df{k_y}]{\gamma_{y}}_{\raisebox{1.5ex}{{\scriptsize $G_{i,aybz}$}}} u_y\) and \(z
		\xRightarrow[\df{k_z}]{\gamma_{z}}_{\raisebox{1.5ex}{{\scriptsize $G_{i,aybz}$}}} u_z\) are (possibly empty) derivations 
                in \(G_{i,aybz}\) (\(u_y, u_z \in \mathcal{A}^* \)), for some integers \(k_y,k_z>0\), 
                such that \(\max(k_y,k_z)\leq k\) and \(\min(k_y,k_z)\leq k-1\);

  \item\label{item3:ginsbook-surgery} 
  \(X \xRightarrow[\df{k+1}]{\gamma^\sharp \,
	\mathrm{p} \, \gamma_y \, \gamma_z}_{\raisebox{1.5ex}{{\scriptsize G}}} w\)
	if \(y \xRightarrow[\df{k-1}]{\gamma_y}_{\raisebox{1.5ex}{{\scriptsize $G_{i,aybz}$}}} u_y\), and
    \(X \xRightarrow[\df{k+1}]{\gamma^\sharp \,
    \mathrm{p} \, \gamma_z \, \gamma_y}_{\raisebox{1.5ex}{{\scriptsize G}}} w\)
		if \(z \xRightarrow[\df{k-1}]{\gamma_z}_{\raisebox{1.5ex}{{\scriptsize $G_{i,aybz}$}}} u_z\);        

  \item\label{item4:ginsbook-surgery} \(L_{X,X_i}(G^\sharp) \subseteq
    a_1^* a_d^* \);
  \item\label{item5:ginsbook-surgery} \(L_{y}(G_{i,aybz}) \subseteq
    a_{\ell}^* \ldots a_{m}^*\) if \(y \in \Varsi\), and \(L_{z}(G_{i,aybz})
    \subseteq a_{m}^* \ldots a_{r}^*\) if \(z \in \Varsi\), for some integers
    \(1\leq \ell \leq m \leq r \leq d\), such that \( \max(m-\ell,r-m) < d - 1\).
  \end{compactenum}
\end{lemma}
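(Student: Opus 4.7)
The plan is to identify a distinguished production $\mathrm{p}$ inside $\gamma$ and then to rearrange the remaining productions into the required three blocks $\gamma^\sharp$, $\gamma_y$, $\gamma_z$. The rearrangement argument is a standard ``independent subtrees commute'' observation on parse trees; what requires care is (a) identifying $\mathrm{p}$ correctly so that items \ref{item4:ginsbook-surgery}--\ref{item5:ginsbook-surgery} hold, and (b) tracking indices across the rearrangement so that the result is $(k+1)$-index depth-first.

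\emph{Spine identification.} I would first establish an auxiliary fact: any production $(X_j, w) \in \Delta$ with $X_j \in \Varse$ contains at most one occurrence of a $\Varse$-nonterminal in $w$. Indeed, if it contained two of them, say $Y_1, Y_2 \in \Varse$, their subderivations would each yield a word starting with $a_1$ and ending with $a_d$, producing a global word with pattern \(\cdots a_1 \cdots a_d \cdots a_1 \cdots a_d \cdots\), contradicting $L_{X_j}(G) \subseteq L_X(G) \subseteq a_1^* \ldots a_d^*$. Using this, in the parse tree of $\gamma$ I would follow the unique chain of $\Varse$-nonterminals starting at the root $X$: this chain must eventually reach a node whose production has no $\Varse$-nonterminal on the right-hand side. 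That node is $\mathrm{p} = (X_i, a\, y\, b\, z)$ with $y, z \in \Varsi \cup \{\varepsilon\}$, by definition.

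\emph{Rearrangement and items 1--3.} Each production of $\gamma$ belongs to exactly one of: (i) a spine expansion (whose LHS and one RHS nonterminal are both in $\Varse$, so in $\prod^\sharp$), (ii) the expansion of a $\Varsi$-subtree hanging off the spine (LHS in $\Varsi$, hence in $\prod^\sharp$), (iii) the production $\mathrm{p}$ itself, or (iv) the expansion of the subtrees rooted at $y$ and at $z$ after $\mathrm{p}$ (LHS in $\Varsi$, hence in $\prod_{i,\,aybz}$, since a $\Varsi$-nonterminal cannot derive a $\Varse$-nonterminal without itself qualifying as $\Varse$). Grouping productions of types (i)--(ii) into $\gamma^\sharp$, keeping $\mathrm{p}$ in the middle, and splitting type (iv) into $\gamma_y$ and $\gamma_z$ gives a permutation of $\gamma$. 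Because the subtrees in play are pairwise disjoint in the parse tree, the rearranged sequence still produces $w$, so items \ref{item1:ginsbook-surgery}--\ref{item3:ginsbook-surgery} follow, provided the depth-first order is respected. The existence of an ordering of $\gamma_y, \gamma_z$ realising a depth-first derivation comes from the freedom to expand either of the two siblings introduced by $\mathrm{p}$ first: we pick the one whose subtree has the smaller index $\le k-1$, which keeps the concurrent count at $(k-1)+1 = k$ while the other sibling waits, and then expand the remaining subtree alone at index $\le k$. This explains the $\max(k_y,k_z) \le k$ and $\min(k_y,k_z) \le k-1$ bounds and gives the overall $(k+1)$ bound (with the $+1$ absorbing the transient cost of keeping branched-off $\Varsi$-children alongside the spine during $\gamma^\sharp$).

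\emph{Items 4 and 5.} For item \ref{item4:ginsbook-surgery}, along the spine every expansion $X_j \Rightarrow \alpha\, X_r\, \beta$ with $X_j, X_r \in \Varse$ satisfies the following: since $X_r$ derives a word starting with $a_1$ and ending with $a_d$, the full expansions of $\alpha$ and $\beta$ must lie in $a_1^*$ and $a_d^*$ respectively (otherwise the global word would leave $a_1^*\ldots a_d^*$). Iterating along the spine yields $L_{X,X_i}(G^\sharp) \subseteq a_1^* a_d^*$. For item \ref{item5:ginsbook-surgery}, since $y, z \in \Varsi$ each fails at least one of the two conditions in the definition of $\Varse$, the minimal strict letter-bounded expression covering $L_y(G_{i,aybz}) \subseteq L_y(G)$ misses either $a_1$ or $a_d$, hence spans at most $d-1$ consecutive letters (similarly for $z$). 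Combined with $m_y \le \ell_z$ forced by the global order of letters in $a_1^*\ldots a_d^*$, one can choose $\ell \le m \le r$ (for instance $\ell=1$, $r=d$ and any $m \in \{2,\dots,d-1\}$ consistent with the ranges of $y$ and $z$) so that both halves have length $< d-1$.

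\emph{Main obstacle.} The hardest part is the bookkeeping for the index: showing that the rearranged sequence remains depth-first and lies within $(k+1)$-index, and that the split of $\gamma_y, \gamma_z$ between $k$- and $(k-1)$-index derivations can actually be enforced. This relies on the independence of the two subtrees rooted at $y$ and $z$ after $\mathrm{p}$, plus the fact that at the moment $\mathrm{p}$ fires in the original $k$-index depth-first derivation the ambient sentential form has only one nonterminal ($X_i$), so there is exactly one unit of ``slack'' to accommodate the co-existence of $y$ and $z$ during one of their expansions.
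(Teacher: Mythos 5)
Your proposal is correct and takes essentially the same route as the paper's own proof: your pivot (the endpoint of the unique chain of $\Varse$-nonterminals) coincides with the paper's choice of the last $\Varse$-headed production in $\gamma$; your regrouping into a spine prefix $\gamma^\sharp$ that eagerly expands each deferred $\Varsi$-sibling alongside the current spine nonterminal (the source of the $+1$ in the index), followed by $\mathrm{p}$ and then $\gamma_y,\gamma_z$ ordered smaller-index-first, is exactly the paper's rearrangement; and items \ref{item4:ginsbook-surgery}--\ref{item5:ginsbook-surgery} are argued by the same letter-ordering considerations (the paper does item \ref{item5:ginsbook-surgery} by an explicit four-case split, while you argue via spans, which amounts to the same thing). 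The one inaccuracy worth repairing is your closing ``fact'' that when $\mathrm{p}$ fires in the original derivation the ambient sentential form contains only $X_i$: this is false in general, since a depth-first derivation may walk down the whole spine first and postpone every sibling $\Varsi$-subtree until after the pivot, so several deferred nonterminals can still be present; however, the bounds $\max(k_y,k_z)\leq k$ and $\min(k_y,k_z)\leq k-1$ need only the coexistence of $y$ and $z$ immediately after $\mathrm{p}$ fires (extra deferred siblings only decrease the standalone indices), so your argument, like the paper's own slightly loose treatment of the suffix of $\gamma$ after the pivot, goes through unchanged.
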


Let us now turn to the general case, in which the size of the strict
letter-bounded expression \(\patt = a_1^* \ldots a_d^*\) is not constant,
i.e. \(d\) is part of the input of the algorithm. The output of
Algorithm~\ref{alg:bounded-control-set} is a finite set of bounded
expressions $\mathcal{S}_\patt$ such that
\(L^{(k)}_X(G) \subseteq \hat{L}_X(\bigcup \mathcal{S}_\patt 
\cap\Gamma_X^{\df{k+1}}, G)\). 
The construction of the set $\mathcal{S}_\patt$ by
Algorithm~\ref{alg:bounded-control-set} (function
\Call{LetterBoundedControlSet}{}) follows the structure 
of the decomposition of control words given by
Lemma \ref{lem:ginsbook-surgery}. For every $k$-index depth-first derivation with 
control word $\gamma$, its productions can be rearranged into a $(k+1)$-index 
depth-first derivation, consisting of
\begin{inparaenum}[\upshape(\itshape i\upshape)]
\item a prefix $\gamma^\sharp$ producing a word in \(a_1^*\, a_d^*\), then
\item a \emph{pivot} production \( (X_i,w)\) followed by two words $\gamma'$ and $\gamma''$ such that:
\item $\gamma'$ and $\gamma''$ produce words included in two
bounded expressions $a_\ell^* \ldots a_m^*$ and $a_m^* \ldots
a_r^*$, respectively, where $\max(m-\ell,r-m) < d-1$.
\end{inparaenum}
The algorithm follows this decomposition and builds bounded
expressions \(\Gamma^{\sharp}\), \( (X_i,w)^*\), and the sets
\(\mathcal{S}'\) and \(\mathcal{S}''\) with the goal of capturing
\(\gamma^{\sharp}\), \( (X_i,w)\), \(\gamma\) and \(\gamma''\), respectively,
for all the control words such as \(\gamma\).
Because \(\gamma^{\sharp}\) produces a word from \(a_1^*\, a_d^*\),
the bounded expression \(\Gamma^{\sharp}\) is built calling
\Call{ConstantBoundedControlSet}{} (line~\ref{ln:gamma-sharp}). Since
\(\gamma'\) and \(\gamma''\) produce words within two sub-expressions of \(a_1^*
\ldots a_d^*\) with as many as \(d-2\) letters, these cases are handled by
two recursive calls to \Call{LetterBoundedControlSet}{}
(lines~\ref{line:rec-y} and \ref{line:rec-z}).

\begin{algorithm}[p]
{\scriptsize
\begin{algorithmic}[0]
  \State {\bf input} A grammar $G = \tuple{\Vars, \mathcal{A}, \prod}$, a nonterminal $X \in \Vars$,
  \State \hspace*{7mm} a strict \(d\)-letter-bounded expression \(\patt\) over \(\mathcal{A}\), such that \(L_{X}(G) \subseteq \patt\), and $k > 0$

  \State {\bf output} a set $\mathcal{S}_\patt$ of bounded
  expressions over \(\prod\), such that 
	\(L^{(k)}_X(G) \subseteq \hat{L}_X(\bigcup \mathcal{S}_\patt \cap \Gamma_X^{\df{k+1}},G)\)

\end{algorithmic}
\begin{algorithmic}[1]
  \Function{LetterBoundedControlSet}{$G_0, X_0, a_{i_1}^* \ldots a_{i_d}^*, k$}
  \State {\bf match} $G_0$ {\bf with} $\tuple{\Vars,\mathcal{A},\prod_0}$

		\State $a_{j_1}^* \cdots a_{j_s}^* \leftarrow \Call{minimizeExpression}{G_0, X_0, a_{i_1}^* \cdots a_{i_d}^*}$
        \label{ln:minexpr}
	\Comment{\(\{j_1,\ldots,j_s\}\subseteq\{i_1,\ldots,i_d\}\)}
  
	\If{$|a_{j_1}^* \ldots a_{j_s}^*| \leq 2$}
	\State {\bf return} $\{\Call{ConstantBoundedControlSet}{G_0, a_{j_1}^* \ldots a_{j_s}^*,k}\}$
  \EndIf

	\State $(\Varselj,\Varsilj) \leftarrow \Call{partitionNonterminals}{G_0, a_{j_1}^*\, a_{j_s}^*}$
        \label{ln:partition}

  \State $\prod^\sharp \leftarrow \{(X_j, w) \in \prod_0 \mid X_j
    \in \Varsilj\} \cup \{(X_j, u \, X_r \, v) \in \prod_0 \mid X_j, X_r
    \in \Varselj\}$
  
  \State $G^\sharp \leftarrow \tuple{\Vars, \mathcal{A}, \prod^\sharp}$

	\State $\Gamma^\sharp \leftarrow \Call{ConstantBoundedControlSet}{G^\sharp, a_{j_1}^*\, a_{j_s}^*, k+1}$ 
  \label{ln:gamma-sharp}

	\State $\mathcal{S}_{\patt} \leftarrow \emptyset$

  \For{$(X_i, aybz) \in \prod_0$ {\bf such that} $X_i \in \Varselj$, 
  \(a,b \in \mathcal{A} \cup \set{\varepsilon}\) and
  \(y,z \in \Varsilj\cup\set{\varepsilon}\)} 
  \label{line:for-begin}
  
  \If{$L_{X_0,X_i}(G^\sharp) \subseteq a_{j_1}^* \, a_{j_s}^*$}
  \label{line:inclusion}

  \State $\prod_{i,aybz} \leftarrow \{(X_j, v) \in \prod \mid X_j
    \in \Varsilj\} \cup \set{(X_i, a\,y\,b\,z)}$

  \State $G_{i,aybz} \leftarrow \tuple{\Vars, \mathcal{A}, \prod_{i,aybz}}$

  \If{$y \in \Vars$}
	\State $\mathcal{S}' \leftarrow \Call{LetterBoundedControlSet}{G_{i,aybz}, y, a_{j_1}^* \cdots a_{j_s}^*, k}$
  \label{line:rec-y}
  \Else~$\mathcal{S}' \leftarrow \emptyset$  
  \Comment{$y = \varepsilon$ in this case}
  
  \EndIf

  \If{$z \in \Vars$}
	\State $\mathcal{S}'' \leftarrow \Call{LetterBoundedControlSet}{G_{i,aybz}, z, a_{j_1}^* \cdots a_{j_s}^*, k}$
  \label{line:rec-z}
  \Else~$\mathcal{S}'' \leftarrow \emptyset$  
  \Comment{$z = \varepsilon$ in this case}
  \EndIf

	\State \(\mathcal{S}_{\patt} \leftarrow \mathcal{S} \cup 
  \bigcup_{\Gamma \in \mathcal{S}' \cup \mathcal{S}''} \Gamma^\sharp \cdot (X_i, a\,y\,b\,z)^* \cdot \Gamma\)
  \label{line:concatpattg}
  
  \EndIf \EndFor \label{line:for-end}

	\State {\bf return} $\mathcal{S}_{\patt}$
\EndFunction
\end{algorithmic}
\begin{algorithmic}[1]
  \Function{minimizeExpression}{$G,X,a_{i_1}^* \ldots a_{i_d}^*$}
  \State $\mathtt{expr} \leftarrow \varepsilon$
  \For{$\ell = 1, \ldots, d$}
  \If{$L_{X}(G) \cap (\mathcal{A}^* \cdot a_{i_\ell} \cdot
    \mathcal{A}^*) \neq \emptyset$}
		\State $\mathtt{expr} \leftarrow \mathtt{expr} \cdot a_{i_{\ell}}^*$
  \EndIf 
  \EndFor
  \State {\bf return} $\mathtt{expr}$
  \EndFunction
\end{algorithmic}
\begin{algorithmic}[1]
	\Function{partitionNonterminals}{$G,a_{j_1}^*\, a_{j_s}^*$}
  \State {\bf match} $G$ {\bf with} $\tuple{\Vars, \mathcal{A}, \prod}$ 
  \State $\mathtt{vars} \leftarrow \emptyset$
  \For{$Y \in \Vars$}
	\If{$L_Y(G) {\cap} a_{j_1} \, \mathcal{A}^* {\neq} \emptyset
	\land L_Y(G) {\cap} \mathcal{A}^* \, a_{j_s} {\neq} \emptyset$}
  \State $\mathtt{vars} \leftarrow \mathtt{vars} \cup \set{Y}$
  \EndIf
  \EndFor
  \State {\bf return} $(\mathtt{vars},\Vars\setminus\mathtt{vars})$
  \EndFunction
\end{algorithmic}
\caption{Control Sets for Letter-Bounded Grammars}\label{alg:bounded-control-set}
}
\end{algorithm}

\begin{theorem}\label{thm:letter-bounded-control-set}
Given a grammar \(G = \tuple{\Vars, \mathcal{A}, \prod}\),
and \(X \in \Vars\), such that \(L_X(G) \subseteq \patt\),
where \(\patt\) is the minimal strict \(d\)-letter bounded expression
for \(L_X(G)\), for each \(k > 0\), there exists a finite set 
of bounded expressions \(\mathcal{S}_\patt\) over \(\prod\) such that 
\(L_X^{(k)}(G) \subseteq \hat{L}_X(\bigcup\mathcal{S}_\patt \cap \Gamma_X^{\df{k+1}}, G)\).
Moreover, \(\mathcal{S}_\patt\) can be constructed in
time \(\len{G}^{\mathcal{O}(k)+d}\) and each
$\Gamma\in\mathcal{S}_\patt$ can be constructed in
time \(\len{G}^{\mathcal{O}(k)}\).
\end{theorem}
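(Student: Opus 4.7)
The plan is to prove the theorem by induction on \(d\), the length of the minimal strict letter-bounded expression covering \(L_X(G)\), mirroring the recursive structure of Algorithm~\ref{alg:bounded-control-set}. The ingredients are Lemma~\ref{lem:ginsbook-d} for the base case (which handles \(s\)-letter-bounded languages with constant \(s\in\{0,1,2\}\)) and Lemma~\ref{lem:ginsbook-surgery} for the decomposition at each inductive step. The key invariant is that every \(k\)-index depth-first derivation of the original grammar can be rearranged into a \((k+1)\)-index depth-first derivation whose control word lies in one of the bounded expressions returned by the recursive call.

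In the base case \(d \leq 2\), the algorithm invokes \Call{ConstantBoundedControlSet}{}, so Lemma~\ref{lem:ginsbook-d} directly yields a single bounded expression \(\Gamma\) with \(L_X^{(k)}(G) = \hat{L}_X(\Gamma\cap\Gamma_X^{\df{k}},G) \subseteq \hat{L}_X(\Gamma\cap\Gamma_X^{\df{k+1}},G)\), and \(\mathcal{S}_\patt = \{\Gamma\}\) works. For the inductive step \(d \geq 3\), I take an arbitrary \(k\)-index depth-first derivation \(X \xRightarrow[\df{k}]{\gamma} w\) and apply Lemma~\ref{lem:ginsbook-surgery} to obtain a production \(\mathrm{p} = (X_i, aybz)\) and a permutation of \(\gamma\) into \(\gamma^\sharp\,\mathrm{p}\,\gamma_y\,\gamma_z\) (or the symmetric variant), which witnesses a \((k+1)\)-index depth-first derivation. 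The prefix \(\gamma^\sharp\) produces a word in \(a_{j_1}^* a_{j_s}^*\) and is in \(G^\sharp\), so by Lemma~\ref{lem:ginsbook-d} it belongs to \(\Gamma^\sharp \cap \Gamma_{X,X_i}^{\df{k+1}}\) for the expression \(\Gamma^\sharp\) built on line~\ref{ln:gamma-sharp}. The factors \(\gamma_y\) and \(\gamma_z\) are \(k\)-index depth-first derivations in \(G_{i,aybz}\) whose yields lie in strict letter-bounded sub-expressions of length at most \(d-1\) (by item~\ref{item5:ginsbook-surgery} of Lemma~\ref{lem:ginsbook-surgery}), so the inductive hypothesis applied on lines \ref{line:rec-y} and \ref{line:rec-z} returns sets \(\mathcal{S}',\mathcal{S}''\) covering them. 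The concatenation performed on line~\ref{line:concatpattg} therefore captures \(\gamma\) within \(\bigcup \mathcal{S}_\patt \cap \Gamma_X^{\df{k+1}}\).

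For the complexity bound, observe that by the minimality assumption on \(\patt\), each letter \(a_{j_\ell}\) must occur in some production of \(G\), so \(d \leq \len{G}\). A single \(\Gamma \in \mathcal{S}_\patt\) is obtained by following one branch of the recursion: at each of the at most \(d\) levels, one call to \Call{ConstantBoundedControlSet}{} on a sub-grammar of size at most \(\len{G}\) takes \(\len{G}^{\mathcal{O}(k)}\) time (Lemma~\ref{lem:ginsbook-d}), and selecting the pivot production and the recursive expression is polynomial; summing along the branch gives \(\len{G}^{\mathcal{O}(k)}\). To build the entire set \(\mathcal{S}_\patt\), one must branch over every valid pivot \((X_i, aybz) \in \prod_0\) (bounded by \(\len{G}\)) and over both choices of \(y\) and \(z\) at each of the \(d\) levels, yielding \(\len{G}^{\mathcal{O}(k) + d}\). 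The language inclusion test on line~\ref{line:inclusion} and the minimization and partition routines reduce to emptiness of intersections of context-free and regular languages and can be absorbed in the same bound.

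The main obstacle, which I expect to require the most care, is the correctness argument reconciling the index shift: the rearrangement given by Lemma~\ref{lem:ginsbook-surgery} may lift a \(k\)-index derivation to a \((k+1)\)-index one, and this shift must be shown not to compound through the \(d\) levels of recursion. The crucial observation is that the shift happens only at the \emph{top} of each invocation (the insertion of the pivot above the two recursive fragments), and the recursive calls act on fresh \(k\)-index derivations in the sub-grammars \(G_{i,aybz}\); thus the final derivation in \(G\) has index at most \(k+1\) uniformly, not \(k+d\). Combined with Lemma~\ref{fsa-dfk} intersecting with \(\Gamma_X^{\df{k+1}}\), this yields exactly the claimed containment.
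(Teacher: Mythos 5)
You follow the paper's own proof in structure: induction on $d$, Lemma~\ref{lem:ginsbook-d} for the base case $d\leq 2$, Lemma~\ref{lem:ginsbook-surgery} for the decomposition into $\gamma^\sharp$, a pivot, and two fragments, and the same branching recursion for the two time bounds. The gap is exactly at the step you flag as the main obstacle, and your resolution of it is not sufficient. When you replace the fragments $\gamma_y,\gamma_z$ by covering control words supplied by the induction hypothesis, that replacement itself costs one unit of index: invoked at index $k$ (as in the recursive calls of lines~\ref{line:rec-y} and~\ref{line:rec-z}), the IH only guarantees witnesses in $\Gamma_y^{\df{k+1}}$, i.e.\ of index up to $k+1$ inside $G_{i,aybz}$. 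Whichever fragment is expanded \emph{first} is then replayed while the sibling nonterminal is still pending, which costs a second unit. So ``the shift happens only at the top of each invocation'' gives a bound of $k+2$, not $k+1$: there are two independent sources of $+1$ and your argument accounts for only one. The paper closes this hole using the asymmetry of item~\ref{item2:ginsbook-surgery} of Lemma~\ref{lem:ginsbook-surgery} ($\max(k_y,k_z)\leq k$ but $\min(k_y,k_z)\leq k-1$) together with the ordering of item~\ref{item3:ginsbook-surgery}: the IH is instantiated at the fragments' \emph{own} indices, the minimal-index fragment (say $y$, with $k_y\leq k-1$) is expanded first, so its covering word has index $k_y+1\leq k$ and the pending $z$ lifts it only to $k_y+2\leq k+1$, while the maximal-index fragment is expanded last, when everything else is terminal, contributing $k_z+1\leq k+1$. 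Without this min/max bookkeeping the assembled control word need not lie in $\Gamma_X^{\df{k+1}}$, and the intersection in the statement could exclude it.

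Two smaller inaccuracies of the same flavor. First, Lemma~\ref{lem:ginsbook-d} and the induction hypothesis are \emph{language}-covering statements, so the prefix $\gamma^\sharp$ produced by the decomposition does not itself ``belong to'' $\Gamma^\sharp\cap\Gamma_{X,X_i}^{\df{k+1}}$; you only get \emph{some} control word in that set producing the same terminal word, and one must additionally observe (as the paper does implicitly, from $u\in a_1^*$ and $v\in a_d^*$, item~\ref{item4:ginsbook-surgery}) that such a word reproduces the same sentential form $u\,X_i\,v$, for otherwise the assembly does not yield $w$. Second, covering a derivation in which both $y$ and $z$ are nonterminals requires concatenating one expression from $\mathcal{S}'$ \emph{and} one from $\mathcal{S}''$, in the order dictated by which fragment has the smaller index; this is why the paper's proof builds expressions of the form $\Gamma^\sharp\cdot(X_i,a\,y\,b\,z)^*\cdot\Gamma'\cdot\Gamma''$, so deferring to line~\ref{line:concatpattg} as printed, which appends a single $\Gamma\in\mathcal{S}'\cup\mathcal{S}''$, does not suffice for that case.
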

The next lemma shows that the worst-case exponential blowup in the value \(k\)
is unavoidable. 

\begin{lemma}\label{lem:optimality}
For every \(k>0\) there exists a
grammar \(G=\tuple{\Vars,\Sigma,\prod}\) and
\(X\in\Vars\) such that \(\len{G}=\mathcal{O}(k)\) and
every bounded expression \(\Gamma\), such that
\(L_X(G)=\hat{L}_X(\Gamma \cap \Gamma_X^{\df{k+1}},G)\) 
has length \(\len{\Gamma} \geq 2^{k-1}\).
\end{lemma}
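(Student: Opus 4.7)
The plan is to exhibit for each $k > 0$ a grammar $G_k$ of size $\mathcal{O}(k)$ whose unique $(k+1)$-index depth-first control word is cube-free, and then leverage cube-freeness to force any bounded expression covering it to have length at least $2^{k-1}$. Concretely, I would take $G_k = \tuple{\{X_0, \ldots, X_k\}, \{a\}, \prod_k}$ with productions $p_i : X_i \to X_{i+1}X_{i+1}$ for $0 \leq i < k$ and $q : X_k \to a$, so that $\len{G_k} = \mathcal{O}(k)$ and $L_{X_0}(G_k) = \{a^{2^k}\}$. Define $\tau_k = q$ and $\tau_i = p_i\,\tau_{i+1}\,\tau_{i+1}$ for $i < k$, so $\len{\tau_0} = 2^{k+1} - 1$. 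First I would check that any derivation of $a^{2^k}$ must accommodate $k+1$ simultaneous nonterminals along the spine of the binary parse tree, and that the depth-first condition forces the most recently introduced nonterminal to be expanded next; together these give $\Gamma_{X_0}^{\df{k+1}}(G_k) = \{\tau_0\}$, so the hypothesis on $\Gamma$ reduces to the purely combinatorial requirement $\tau_0 \in \Gamma$.

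\textbf{Cube-freeness.} The central step is to show that $\tau_0$ contains no factor $uuu$ with $u \neq \varepsilon$. I would prove, by decreasing induction on $j$ from $k$ down to $0$, the stronger statement that both $\tau_j$ and $\tau_j\,\tau_j$ are cube-free. The pivotal observation is that the symbol $p_j$ occurs exactly twice in $\tau_j\,\tau_j$, at positions $0$ and $\len{\tau_j}$. A cube $uuu$ containing some $p_j$ would produce three equally-spaced $p_j$-positions which, by pigeonhole, cannot all lie in a two-element set; hence every cube in $\tau_j\,\tau_j$ avoids $p_j$ and is confined to one of the two maximal $p_j$-free factors, each isomorphic to $\tau_{j+1}\,\tau_{j+1}$ and cube-free by induction. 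The base case $\tau_k\,\tau_k = qq$ is too short to contain any cube.

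\textbf{Counting.} For any bounded expression $\Gamma = u_1^* \cdots u_d^*$ with $\tau_0 = u_1^{n_1} \cdots u_d^{n_d}$, each block $u_j^{n_j}$ is a factor of $\tau_0$, so cube-freeness forces $n_j \leq 2$ for every $j$. Writing $\#_q(w)$ for the number of occurrences of $q$ in $w$ and using that $\#_q(\tau_0) = 2^k$, a block-by-block count yields
\[ 2^k = \sum_{j=1}^{d} n_j\,\#_q(u_j) \leq 2\sum_{j=1}^{d} \#_q(u_j) \leq 2\sum_{j=1}^{d}\len{u_j} = 2\len{\Gamma}, \]
whence $\len{\Gamma} \geq 2^{k-1}$. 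The only delicate point in the argument is the cube-free induction---specifically verifying that a cube containing any $p_j$-occurrence would force three equally-spaced $p_j$'s, contradicting the count of two---after which the reduction to the $p_j$-free ``islands'' recurses cleanly and the concluding counting step is immediate.
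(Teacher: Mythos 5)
Your proof is correct, and it uses the same witness as the paper: the complete-binary-tree grammar of size $\mathcal{O}(k)$ over the single terminal $a$, the observation that all depth-first derivations share one control word $\tau_0$ of index exactly $k+1$ (so that $\Gamma_X^{\df{k+1}}$ is the singleton $\set{\tau_0}$ and the hypothesis on $\Gamma$ collapses to $\tau_0 \in \Gamma$), the bound of at most two repetitions per block, and the closing count giving $\len{\Gamma}\geq 2^{k-1}$. Where you genuinely diverge is the central combinatorial lemma. The paper never isolates cube-freeness; it proves a tree-structural fact instead: between any three consecutive occurrences of a production $p_i$ in the control word there must be an occurrence of the parent-level production, because otherwise the parse tree would have a node with three children labelled by the same nonterminal, or the traversal would not be depth-first. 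It then applies this to the production of greatest level occurring in each word $w_j$ of $\Gamma$: three repetitions of $w_j$ would yield three consecutive occurrences of that production with no parent-level production in between. Your cube-freeness lemma is the purely word-combinatorial counterpart of this, proved by induction on the recursive structure $\tau_j = p_j\,\tau_{j+1}\,\tau_{j+1}$ via the pigeonhole on the two equally-spaced occurrences that a cube would force against the two occurrences of $p_j$ in $\tau_j\,\tau_j$. Your route buys a fully self-contained induction (no appeal to informal parse-tree/traversal reasoning), a uniform treatment of every block $u_j^{n_j}$ without selecting a distinguished production inside it, and it sidesteps the boundary case where the selected production is the root production, for which the paper's ``intervening parent production'' argument has no meaning and one must fall back on the fact that the root production occurs only once. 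The paper's route buys brevity and stays close to the depth-first-tree intuition that drives the rest of the section; the two arguments are otherwise interchangeable, and both finish with the same count (you count the $2^k$ occurrences of the leaf production, the paper counts the total length $\geq 2^k$), yielding $\len{\Gamma} \geq 2^{k-1}$.
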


\section{Related Work} 
The programs we have studied feature unbounded control (the
call stack) and unbounded data (the integer variables). The
decidability and complexity of the reachability problem for such
programs pose challenging research questions. A long standing and still
open one is the decidability of the reachability problem for programs
where variables behave like Petri net counters and control paths are
taken in a context-free language.  A lower bound
exists \cite{Lazic__2012} but decidability remains
open. Atig and Ganty \cite{AG11} showed decidability when the
context-free language is of bounded index.  The complexity of
reachability was settled for branching VASS by Lazic and
Schmitz \cite{Lazi__2014}. When variables
updates/guards are given by gap-order constraints, reachability
is decidable \cite{Abdulla_2013,Revesz93}. It is in
PSPACE when the set of control paths is regular \cite{Bozzelli_2014}.  
More general updates and guard (like
octagons) immediately leads to undecidability. This explains the
restriction to bounded control sets. Demri \textit{et
al.} \cite{Demri_2012} studied the case of updates/guards of the
form \(\sum_{i=1}^n a_i \cdot x_i + b
\leq 0 \land \vec{x}' = \vec{x} + c\). They show that LTL is \textsc{Np}-complete  on 
for bounded regular control sets, hence reachability is in \textsc{Np}.
Godoy and Tiwari \cite{GT09} studied the invariant checking problem for a class of procedural programs where all executions
conform to a bounded expression, among other restrictions.

\bibliographystyle{abbrv}

\clearpage
\appendix

\section*{Appendix}

The appendix is divided in two parts. Appendix~\ref{sec:easy} contains
easy results about context-free languages and have been included for the sake of being
self-contained. They are variations of classical constructions so as to
take into account index and depth-first
policy. To keep proofs concise, we assume that the grammars are in \(2\)-normal form (\(2\)NF for short).
A grammar is in \(2\)NF if all its productions \( (X,w)\) satisfy \(\len{w}\leq 2\).
Any grammar \(G\) can be converted into an equivalent \(2\)NF grammar \(H\), such that \(\len{H} = O(\len{G})\),
in time \(O(\len{G}^2)\) \cite{LL10}. Note that \(2\)NF is a special case of the general form we assumed
where each production \( (X,w) \) is such that \(w\) contains at most \(2\) terminals and \(2\) nonterminals.
Appendix~\ref{sec:hard} contains the rest of the proofs about
the combinatorial properties of derivations.

\section{From Bounded to Letter-bounded Languages}\label{sec:easy}

It is well-known that the intersection between a context-free and a regular
language is context-free. Below we define the grammar that generates the
intersection between the language of a given grammar $G = \tuple{ \Vars,
\Sigma, \prod }$ and a regular language given by a bounded expression
$\pat=w_1^* \dots w_d^*$ over \(\Sigma\) where \(\ell_i\) denotes the
length of each \(w_i\). Let $G^{\pat} = \tuple{\Vars^{\pat},\Sigma,\prod^{\pat}}$ be the
grammar generating the regular language of $\pat$, where:
\[\begin{array}{rcl}
\Vars^{\pat} & = &  \set{\textsc{q}^{(s)}_{r} \mid 1\leq s\leq d \, \land \,  1\leq r\leq \ell_s} \\[0.3cm]
\prod^{\pat} & = & \set{\textsc{q}^{(s)}_{i}\rightarrow (w_s)_{i}\, \textsc{q}^{(s)}_{i+1}\mid 1 \leq  s \leq  d \, \land \, 1 \leq  i < \ell_s} \; \cup\\[0.2cm]
& & \set{\textsc{q}^{(s)}_{\ell_s}\rightarrow (w_s)_{\ell_s}\, \textsc{q}^{(s')}_{1}\mid 1 \leq  s \leq  s' \leq  d} \; \cup\\[0.2cm]
& & \set{\textsc{q}_1^{(s)}\rightarrow \varepsilon \mid 1 \leq  s \leq  d}\enspace .
\end{array}\]
It is routine to check that \(\{w\mid \textsc{q}_1^{(i)}\Rightarrow^*
w \text{ for some } 1\leq i\leq d \}=\pat\). Moreover, notice that
the number of nonterminals in $G^\pat$ equals the size of $\pat$,
i.e.\ $\card{\Vars^\pat} = \len{\pat}$.

\begin{remark}\label{rem:letter-bounded-complement}
Note that when \(\pat\) is letter-bounded (\(\pat = a_1^* \ldots a_d^*\)), the grammar \(G^{\pat}_1 =
(\Vars^{\pat}_1,\Sigma,\prod^{\pat}_1) \) generating is given by:
\[\begin{array}{rcl}
\Vars_1^{\pat} & = & \set{\textsc{q}^{(s)} \mid 1\leq s\leq d} \cup \set{\textsc{q}_{\mathit{sink}}} \\[0.3cm]
\prod_1^{\pat} & = & \set{\textsc{q}^{(s)} \rightarrow a_{s'} \, \textsc{q}^{(s')} \mid 1 \leq  s \leq s' \leq  d } \; \cup\\[0.2cm]
& & \set{\textsc{q}^{(s)} \rightarrow b\, \textsc{q}_{\mathit{sink}} \mid b \in \Sigma \setminus \{a_s, a_{s+1}, \ldots, a_d\} }\; \cup\\[0.2cm]
& & \set{\textsc{q}^{(s)}\rightarrow \varepsilon \mid 1 \leq  s \leq  d}\; \cup\\[0.2cm]
& & \set{\textsc{q}_{\mathit{sink}} \rightarrow b\, \textsc{q}_{\mathit{sink}} \mid b \in \Sigma }
\end{array}\]
is such that \(L_{\textsc{q}^{(1)}}(G^{\pat}_1) = \pat\).
Furthermore, \(G^{\pat}_1\) is complete---all terminals can be
produced from all nonterminals---and it is deterministic when \(\pat\) is
strict. Then a grammar \(\overline{G^{\pat}_1}\), such that
\(L_{\textsc{q}^{(1)}}(\overline{G^{\pat}_1}) = \Sigma^* \setminus
\pat\), can be computed in time \(\mathcal{O}(|G_1^{\pat}|)\), by
replacing each production \( \textsc{q}^{(s)}\rightarrow \varepsilon,\,
1 \leq s \leq d\), with \(\textsc{q}_{\mathit{sink}} \rightarrow
\varepsilon\).
\end{remark}

Given $G^{\pat}$, and a grammar $G=(\Vars,\Sigma,\prod)$ in 2NF and \(X\in\Vars\), our goal is to define a
grammar \(G^{\cap} = \tuple {\Vars^{\cap},\Sigma,\prod^{\cap}}\) that
produces the language $L_X(G) \cap L(\pat)$, for some $X \in
\Vars$. The definition of $G^\cap = \tuple {\Vars^{\cap}, \Sigma,
  \prod^{\cap}}$ follows:
\begin{compactitem}
\item $\Vars^{\cap}=\set{ [\textsc{q}^{(r)}_{s} X \textsc{q}^{(u)}_{v}]
  \mid X \in \Vars \land \textsc{q}^{(r)}_{s}\in\Vars^{\pat}\land
  \textsc{q}^{(u)}_{v} \in\Vars^{\pat}\land r\leq u}$
\item $\prod^{\cap}$ is defined as follows: 
  \begin{compactitem}
  \item for every production $X\rightarrow w\in\prod$ where
    \(w\in\Sigma^*\), \(\prod^{\cap}\) has a production
    \begin{align}
      [\textsc{q}^{(r)}_{s} X \textsc{q}^{(u)}_{v}]&\rightarrow w & \text{if } \textsc{q}^{(r)}_{s}\Rightarrow^* w \, \textsc{q}^{(u)}_{v}\enspace ;\label{eq:terminalsonly}
    \end{align}
  \item for every production $X\rightarrow Y\in\prod$, where $Y \in \Vars$, \(\prod^{\cap}\) has a production
    \begin{align}
      [\textsc{q}^{(r)}_{s} X \textsc{q}^{(u)}_v]&\rightarrow[\textsc{q}^{(r)}_{s} Y \textsc{q}^{(u)}_{v}] \enspace ;\label{eq:1var}
    \end{align}
  \item for every production $X\rightarrow a\, Y\in\prod$, where $a \in \Sigma$ and $Y \in \Vars$, \(\prod^{\cap}\) has a production
    \begin{align}
      [\textsc{q}^{(r)}_{s} X \textsc{q}^{(u)}_v]&\rightarrow a \, [\textsc{q}^{(x)}_{y} Y \textsc{q}^{(u)}_v] & \text{if } \textsc{q}^{(r)}_{s}\rightarrow a\,\textsc{q}^{(x)}_{y} \in \prod^{\pat}\enspace ;\label{eq:gammavar} 
    \end{align}
  \item for every production $X\rightarrow Y\, a\in\prod$, where $Y \in \Vars$ and $a \in \Sigma$, \(\prod^{\cap}\) has a production
    \begin{align}
      [\textsc{q}^{(r)}_{s} X \textsc{q}^{(u)}_v]&\rightarrow [\textsc{q}^{(r)}_{s} Y \textsc{q}^{(x)}_{y}] \, a & \text{if } \textsc{q}^{(x)}_{y}\rightarrow a\,\textsc{q}^{(u)}_{v} \in \prod^{\pat}\enspace ;\label{eq:vargamma}
    \end{align}
  \item for every production $X\rightarrow Y\, Z\in\prod$, \(\prod^{\cap}\) has a production
    \begin{align}
			[\textsc{q}^{(r)}_{s} X \textsc{q}^{(u)}_{v}] &\rightarrow [\textsc{q}^{(r)}_{s} Y \textsc{q}^{(x)}_y]\,[\textsc{q}^{(x)}_y Z \textsc{q}^{(u)}_v] \enspace ;\label{eq:2var}
    \end{align}
  \item \(\prod^{\cap}\) has no other production.
  \end{compactitem}
\end{compactitem}

Let $\zeta\colon \Vars^\cap \rightarrow \Vars$ be the function that
``strips'' every nonterminal $[\textsc{q}_r^{(s)} X \textsc{q}_{v}^{(u)}]
\in \Vars^\cap$ of the nonterminals from $\Vars^\pat$,
i.e.\ $\zeta([\textsc{q}_r^{(s)} X \textsc{q}_{v}^{(u)}]) = X$. In the
following, we abuse notation and extend the $\zeta$ function to
symbols from $\Sigma \cup \Vars^\cap$, by defining $\zeta(a)=a$, for
each $a \in \Sigma$, and further to words $w\in
(\Sigma \cup \Vars^\cap)^*$ as $\zeta(w) = \zeta( (w)_1
) \cdots \zeta( (w)_{\len{w}} )$. Finally, for a production $p =
(X,w) \in \prod^\cap$, we define $\zeta(p) = (\zeta(X), \zeta(w))$,
and for a control word $\gamma \in (\prod^\cap)^*$, we write
$\zeta(\gamma)$ for $\zeta( (\gamma)_1 ) \cdots \zeta(
(\gamma)_{\len{\gamma}} )$.

\begin{lemma}\label{lem:intersection}
Given a grammar $G = \tuple{\Vars,\Sigma,\prod}$ and a grammar $G^{\pat} =
\tuple{\Vars^{\pat},\Sigma,\prod^{\pat}}$ generating \(\pat\), for every
\(X\in\Vars\), \(\textsc{q}_{s}^{(r)},\textsc{q}_{v}^{(u)} \in \Vars^{\pat}\),
\(w\in\Sigma^*\), and every $k > 0$, we have:
  \begin{compactenum}[\upshape(\itshape i\upshape)] 

\item\label{item:inter-only-if} for every
  $\gamma \in (\prod^\cap)^*$, $[\textsc{q}_{s}^{(r)}
  X \textsc{q}_{v}^{(u)}] \xRightarrow[\df{k}]{\gamma}^* w$ only if
  $X \xRightarrow[\df{k}]{\zeta(\gamma)} w$ and
  $\textsc{q}_{s}^{(r)}\Rightarrow_{G^{\pat}}^*
  w\, \textsc{q}_{v}^{(u)}$

 \item\label{item:inter-if} for every $\delta \in \prod^*$,
    $X \xRightarrow[\df{k}]{\delta} w$ and
    $\textsc{q}_{s}^{(r)}\xRightarrow{}_{G^{\pat}}^*
    w\, \textsc{q}_{v}^{(u)}$ only if $[\textsc{q}_{s}^{(r)}
    X \textsc{q}_{v}^{(u)}] \xRightarrow[\df{k}]{\gamma}^* w$, for
    some $\gamma \in \zeta^{-1}(\delta)$.  \end{compactenum}
    Consequently, we have $\bigcup_{1\leq s\leq x\leq d}
    L_{[\textsc{q}^{(s)}_1
    X \textsc{q}^{(x)}_{1}]}(G^{\cap})=L_X(G) \cap \pat \enspace$.
\end{lemma}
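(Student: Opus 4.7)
I would prove items~(\ref{item:inter-only-if}) and~(\ref{item:inter-if}) by induction on the length of the derivation, and then read off the consequence about languages by specialising to the initial $G^\pat$-nonterminals. The key structural observation is that $\zeta$ is a position-preserving relabelling between sentential forms of $G^\cap$ and of $G$: it changes neither the positions nor the number of symbols, so both the \emph{index} (the value of $\len{\proj{\cdot}{\Vars}}$) and the \emph{depth-first} status of a step sequence (which only depends on the first-appearance indices $f_m(i)$ of nonterminal occurrences) are preserved in both directions, once we know that the underlying productions are in correspondence via $\zeta$.

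For item~(\ref{item:inter-only-if}), the base case $\len{\gamma}=1$ uses a production of type~(\ref{eq:terminalsonly}), whose side condition already records $\textsc{q}_s^{(r)} \Rightarrow^*_{G^\pat} w\, \textsc{q}_v^{(u)}$ and whose image under $\zeta$ is the matching production $X \rightarrow w$ of $G$. For the inductive step, case analysis on the first rule of $\gamma$ handles the linear productions (\ref{eq:1var}), (\ref{eq:gammavar}), (\ref{eq:vargamma}) by a single recursive call, and the branching production (\ref{eq:2var}) by two recursive calls on the depth-first subderivations of $Y$ and $Z$. In each case the intermediate $\Vars^\pat$-nonterminal(s) appearing in the decorated right-hand side furnish the pieces of the $G^\pat$-traversal that, when concatenated, witness $\textsc{q}_s^{(r)} \Rightarrow^*_{G^\pat} w\, \textsc{q}_v^{(u)}$.

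For item~(\ref{item:inter-if}), I would induct on $\len{\delta}$ and use the given $G^\pat$-traversal to \emph{choose} the decoration. Inspecting the first production $X \rightarrow w'$ of $\delta$ (whose right-hand side contains at most two nonterminals by the 2NF assumption), I would split the traversal $\textsc{q}_s^{(r)} \Rightarrow^*_{G^\pat} w\, \textsc{q}_v^{(u)}$ into fragments at the $G^\pat$-states reached between the constituents of $w'$. This splitting determines a unique production in $\prod^\cap$ of one of the shapes (\ref{eq:terminalsonly})--(\ref{eq:2var}) whose $\zeta$-image is $X \rightarrow w'$; applying it and recursing on the (one or two) remaining subderivations yields the required $\gamma \in \zeta^{-1}(\delta)$. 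The consequence then follows by specialising to $r=s=1$: the $\varepsilon$-productions $\textsc{q}_1^{(x)}\rightarrow \varepsilon$ of $G^\pat$ are precisely those that close off a traversal into a word of $\pat$, and as $(s,x)$ ranges over $1 \leq s \leq x \leq d$ the union of the languages $\hat{L}_{\textsc{q}_1^{(s)},\textsc{q}_1^{(x)}}(G^\pat)$ is exactly $\pat$, while $X \xRightarrow[\df{k}]{}^* w$ for some $k$ iff $w\in L_X(G)$.

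The main bookkeeping obstacle will be the branching production~(\ref{eq:2var}): one must check that the depth-first policy of the enclosing derivation correctly decomposes into depth-first policies for the $Y$- and $Z$-subderivations and, conversely, that glueing two depth-first subderivations at a $YZ$-step still yields a depth-first sequence. This is the classical ``depth-first respects parse-subtree decomposition'' argument, which reduces to observing that the first-appearance index $f_m$ of every nonterminal occurrence in the joined sequence either agrees with its first-appearance index in one of the subderivations or points to the step that introduced $Y$ and $Z$; the depth-first selection criterion $f_m(j_m) = \max\{f_m(i) \mid (w_m)_i \in \Vars\}$ thus holds in the combined sequence iff it holds within each of the pieces.
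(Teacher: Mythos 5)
Your plan is correct and matches the paper's own proof in all essentials: both argue by induction on the length of the control word with a case analysis on the four production shapes (\ref{eq:terminalsonly})--(\ref{eq:2var}), and the branching case (\ref{eq:2var}) is resolved by exactly the depth-first decomposition you sketch, which the paper isolates as Lemma~\ref{lem:leibniz} (in a $k$-index depth-first sequence from a two-nonterminal form, one subtree is fully expanded before the other, the first having index $k-1$, so the control word factors as a concatenation). Your explicit observation that $\zeta$ is a position-preserving relabelling, hence index- and depth-first-preserving in both directions, is left implicit in the paper's step-by-step reassembly, but the two arguments coincide in substance.
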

\begin{proof}
(\ref{item:inter-only-if}) By induction on $\len{\gamma} > 0$. For the base
  case $\len{\gamma}=1$---$\gamma$ is the production
  $([\textsc{q}_{s}^{(r)} X \textsc{q}_{v}^{(u)}] \arrow{}{} w) \in
  \Delta^\cap$ with \(w\in\Sigma^*\)---by case (\ref{eq:terminalsonly}) of the definition
  of $\Delta^\cap$, we have
  $\textsc{q}_{s}^{(r)}\Rightarrow_{G^{\pat}}^* w\, \textsc{q}_{v}^{(u)}$
  and there exists a production $X \arrow{}{} w \in \Delta$. Since,
  moreover, $\zeta([\textsc{q}_{s}^{(r)} X \textsc{q}_{v}^{(u)}]
  \arrow{}{} w) = (X \arrow{}{} w)$, we have that $X
  \xRightarrow[\df{1}]{\zeta(\gamma)} w$ in $G$.

  For the induction step $\len{\gamma} > 1$, we have $\gamma =
  ([\textsc{q}_{s}^{(r)} X \textsc{q}_{v}^{(u)}] \arrow{}{} \tau) \cdot
  \gamma'$, for some production $[\textsc{q}_{s}^{(r)} X
    \textsc{q}_{v}^{(u)}] \arrow{}{} \tau \in \Delta^\cap$, and a word
  $\tau \in (\Sigma \cup \Vars^\cap)^*$ of length $\len{\tau} \leq
  2$. We distinguish four cases, based on the structure of $\tau$:
  \begin{compactenum}
  \item if $\tau = [\textsc{q}_{s}^{(r)} Y \textsc{q}_{v}^{(u)}]$ then $\tau
    \xRightarrow[\df{k}]{\gamma'} w$ is a derivation of $G^\cap$. By
    the induction hypothesis, we obtain that
    $\textsc{q}_{s}^{(r)}\Rightarrow_{G^{\pat}}^* w\,\textsc{q}_{v}^{(u)}$
    and $Y \xRightarrow[\df{k}]{\zeta(\gamma')} w$ is a derivation of
    $G$. But \(X \arrow{}{} Y\in\prod\)---case
    (\ref{eq:1var}) of the definition of $\Delta^\cap$---hence 
    $\zeta(\gamma) = (X \arrow{}{} Y) \cdot \zeta(\gamma')$ and
    $X \xRightarrow[\df{k}]{\zeta(\gamma)} w$ is a derivation of $G$.

  \item if $\tau = a \, [\textsc{q}^{(x)}_{y} Y
    \textsc{q}^{(u)}_v]$ then $w = a \cdot w'$ and $G^\cap$ has
    derivation $[\textsc{q}^{(x)}_{y} Y \textsc{q}^{(u)}_v]
    \xRightarrow[\df{k}]{\gamma'} w'$. By the induction hypothesis, we
    obtain $\textsc{q}^{(x)}_{y} \xRightarrow{}_{G^\pat}^* w'
    \textsc{q}^{(u)}_{v}$ and $G$ has a derivation $Y
    \xRightarrow[\df{k}]{\zeta(\gamma')} w'$. By the case
    (\ref{eq:gammavar}) of the definition of $\Delta^\cap$, we have
    $\textsc{q}^{(r)}_{s}\rightarrow a \, \textsc{q}^{(x)}_{y}
    \in \prod^{\pat}$ and $\zeta([\textsc{q}_{s}^{(r)} X
      \textsc{q}_{v}^{(u)}] \arrow{}{} \tau) = (X \arrow{}{} a Y)
    \in \Delta$. Thus $\textsc{q}^{(r)}_{s} \xRightarrow{}_{G^\pat}^*
    w\, \textsc{q}^{(u)}_{v}$ and $X \xRightarrow[\df{k}]{\zeta(\gamma)}
    w$, since $\zeta(\gamma) = (X \arrow{}{} a Y) \cdot
    \zeta(\gamma')$.

    \item the case $\tau = [\textsc{q}^{(r)}_{s}
      Y \textsc{q}^{(x)}_{y}] \, a$ is symmetric, using the case
      (\ref{eq:vargamma}) of the definition of $\Delta^\cap$.  \item
      if $\tau = [\textsc{q}^{(r)}_{s} Y \textsc{q}^{(x)}_y]\,
      [\textsc{q}^{(x)}_y Z \textsc{q}^{(u)}_v]$ then, by
      Lemma~\ref{lem:leibniz}, there exist words $w_1,
      w_2 \in \Sigma^*$ such that $w = w_1w_2$ and either one of the
      following applies:
			\begin{compactenum} 
      \item $[\textsc{q}^{(r)}_{s}
      Y \textsc{q}^{(x)}_y] \xRightarrow[\df{k-1}]{\gamma_1} w_1$, 
      $[\textsc{q}^{(x)}_y Z \textsc{q}^{(u)}_v] \xRightarrow[\df{k}]{\gamma_2} w_2$ and
      $\gamma'=\gamma_1\,\gamma_2$, or

      \item $[\textsc{q}^{(r)}_{s} Y \textsc{q}^{(x)}_y] \xRightarrow[\df{k}]{\gamma_1} w_1$, 
      $[\textsc{q}^{(x)}_y Z \textsc{q}^{(u)}_v] \xRightarrow[\df{k-1}]{\gamma_2} w_2$
      and $\gamma' = \gamma_2\, \gamma_1$.
      \end{compactenum} 
			We consider the first case only, the second
      being symmetric.  Since $\len{\gamma_1} < \len{\gamma}$ and
      $\len{\gamma_2} < \len{\gamma}$, we apply the induction
      hypothesis and find out that
      $\textsc{q}^{(r)}_{s} \xRightarrow{}_{G^\pat}^* w_1\, \textsc{q}^{(x)}_y$,
      $\textsc{q}^{(x)}_y \xRightarrow{}_{G^\pat}^* w_2\, \textsc{q}^{(u)}_v$, and $G$ has derivations
      $Y \xRightarrow[\df{k-1}]{\zeta(\gamma_1)} w_1$ and
      $Z \xRightarrow[\df{k}]{\zeta(\gamma_2)} w_2$. Then
			$\textsc{q}^{(r)}_{s} \xRightarrow{}_{G^\pat}^* w_1\, w_2 \,\textsc{q}^{(u)}_v$ where \(w_1\, w_2 = w \). By
      case (\ref{eq:2var}) of the definition of $\Delta^\cap$,
      $\Delta$ has a production $(X \arrow{}{} YZ)
      = \zeta([\textsc{q}^{(r)}_{s}
      X \textsc{q}^{(u)}_v] \arrow{}{} \tau)$. Since $\gamma'
      = \gamma_1 \, \gamma_2$, then $\zeta(\gamma) = (X \arrow{}{}
      YZ) \cdot \zeta(\gamma_1) \, \zeta(\gamma_2)$, and $G$ has a
      $k$-index depth-first derivation
      $X \xRightarrow[\df{k}]{\zeta(\gamma)} w$. \end{compactenum}
 
\noindent(\ref{item:inter-if}) By induction on $\len{\delta} > 0$. For the
base case $\len{\delta} = 1$, we have $\delta = (X \arrow{}{} w) \in
\Delta$. By the case (\ref{eq:terminalsonly}) from the definition of
$\Delta^\cap$, $G^\cap$ has a rule $[\textsc{q}^{(s)}_r X
  \textsc{q}_{v}^{(u)}] \arrow{}{} w$ and, since, moreover,
$\zeta([\textsc{q}^{(s)}_r X \textsc{q}_{v}^{(u)}] \arrow{}{} w) =
\delta$, we have $\gamma = ([\textsc{q}^{(s)}_r X
  \textsc{q}_{v}^{(u)}] \arrow{}{} w)$.

For the induction step $\len{\delta} > 1$, we have $\delta = (X
\arrow{}{} \tau) \cdot \delta'$. We distinguish four cases, based on
the structure of $\tau$:
\begin{compactenum}
\item if $\tau = Y$, for some $Y \in \Vars$, by the induction
  hypothesis, $G^\cap$ has a derivation $[\textsc{q}^{(s)}_r Y
    \textsc{q}_{v}^{(u)}] \xRightarrow[\df{k}]{\gamma'} w$, for some
  $\gamma' \in \zeta^{-1}(\delta')$. Since
  $\textsc{q}_{s}^{(r)}\Rightarrow_{G^{\pat}}^* w\,\textsc{q}_{v}^{(u)}$  ---by case (\ref{eq:1var}) of the definition of $\Delta^\cap$---  $G^\cap$ has a production $p = ([\textsc{q}^{(s)}_r X
    \textsc{q}_{v}^{(u)}] \arrow{}{} [\textsc{q}^{(s)}_r Y
    \textsc{q}_{v}^{(u)}])$. We define $\gamma = p \cdot \gamma'$. It
  is immediate to check that $\zeta(\gamma)=\delta$.

\item if $\tau = a\, Y$, for some $a \in \Sigma$ and $Y \in
  \Vars$, then $w = a \cdot w'$. Hence $\textsc{q}_{s}^{(r)}
  \xRightarrow{}_{G^{\pat}} a\, \textsc{q}_{y}^{(x)}$,
  $\textsc{q}_{y}^{(x)} \xRightarrow{}_{G^{\pat}}^* w' \,\textsc{q}_{v}^{(u)}$ and $G$ has a derivation $Y
  \xRightarrow[\df{k}]{\delta'} w'$. By the induction hypothesis,
  $G^\cap$ has a derivation $[\textsc{q}_{y}^{(x)} Y
    \textsc{q}_{v}^{(u)}] \xRightarrow[\df{k}]{\gamma'} w'$, for some
		$\gamma' \in \zeta^{-1}(\delta')$. By the case (\ref{eq:gammavar}) of the
  definition of $\Delta^\cap$, there exists a production $p =
  ([\textsc{q}_{r}^{(s)} X \textsc{q}_{v}^{(u)}] \arrow{}{} a Y)
  \in \Delta^\cap$. We define $\gamma = p \cdot \gamma'$. It is
  immediate to check that $\zeta(\gamma) = \delta$, hence
  $[\textsc{q}_{r}^{(s)} X \textsc{q}_{v}^{(u)}]
  \xRightarrow[\df{k}]{\gamma} w$. 

\item the case $\tau = Y\, a$, for some $Y \in \Vars$ and $a \in
  \Sigma$, is symmetrical.

\item if $\tau = Y\, Z$, for some $Y,Z \in \Vars$, then, by Lemma~\ref{lem:leibniz}, 
      there exist words $w_1, w_2 \in \Sigma^*$ such that $w = w_1\, w_2$
      and either one of the following cases
      applies: 

      \begin{compactenum} 

      \item $Y \xRightarrow[\df{k-1}]{\delta_1} w_1$,
      $Z \xRightarrow[\df{k}]{\delta_2} w_2$ and
      $\delta'=\delta_1\,\delta_2$, or

      \item $Y \xRightarrow[\df{k}]{\delta_1} w_1$,
      $Z \xRightarrow[\df{k-1}]{\delta_2} w_2$ and
      $\delta'=\delta_2\,\delta_1$. 

      \end{compactenum} 

      Moreover, we have
      $\textsc{q}_{s}^{(r)} \xRightarrow{}_{G^{\pat}}^*
      w_1\, \textsc{q}_{y}^{(x)}$ and
      $\textsc{q}_y^{(x)} \xRightarrow{}_{G^{\pat}}^*
      w_2\, \textsc{q}_{v}^{(u)}$, for some
      $\textsc{q}_{y}^{(x)} \in \Vars^\pat$. We consider the first
      case only, the second being symmetric. Since $\len{\delta_1}
      < \len{\delta}$ and $\len{\delta_2} < \len{\delta}$ we apply the
      induction hypothesis and find two control words
      $\gamma_1 \in \zeta^{-1}(\delta_1)$ and
      $\gamma_2 \in \zeta^{-1}(\delta_2)$ such that $G^\cap$ has
      derivations $[\textsc{q}_{s}^{(r)}
      Y \textsc{q}_{y}^{(x)}] \xRightarrow[\df{k-1}]{\gamma_1} w_1$
      and $[\textsc{q}_{y}^{(x)}
      Z \textsc{q}_{v}^{(u)}] \xRightarrow[\df{k}]{\gamma_2} w_2$. By
      case (\ref{eq:2var}) of the definition of $\Delta^\cap$,
      $G^\cap$ has a production $p = ([\textsc{q}_{s}^{(r)}
      X \textsc{q}_{v}^{(u)}] \arrow{}{} [\textsc{q}_{s}^{(r)}
      Y \textsc{q}_{y}^{(x)}] [\textsc{q}_{y}^{(x)}
      Z \textsc{q}_{v}^{(u)}])$. Since $\delta' = \delta_1\,\delta_2$, we
      define $\gamma = p \, \gamma_1 \, \gamma_2$. It is immediate to
      check that $\zeta(\gamma) = \delta$ and $[\textsc{q}_{s}^{(r)}
      X \textsc{q}_{v}^{(u)}] \xRightarrow[\df{k}]{\gamma} w$.\qed
\end{compactenum}
\end{proof}

In the rest of this section, for a given bounded expression $\pat =
w_1^* \ldots w_d^*$ over \(\Sigma\), we associate the
strict \(d\)-letter-bounded expression $\patt = a_1^* \ldots a_d^*$
over an alphabet \(\mathcal{A}\), disjoint from $\Sigma$, i.e.\
$\mathcal{A} \cap \Sigma = \emptyset$, and a homomorphism
$h \colon \mathcal{A} \rightarrow \Sigma^*$ mapping as follows:
$a_i \mapsto w_i$, for all $1 \leq i \leq d$. The next step is to
define a grammar $G^\bowtie
= \tuple{\Vars^\bowtie, \mathcal{A}, \prod^\bowtie}$, such that \(\Vars^{\bowtie} = \Vars^{\cap}\) and, for all
$X \in \Vars, 1 \leq s \leq x \leq d$: 
\[h^{-1}( L_{[\textsc{q}^{(s)}_{1} X \textsc{q}^{(x)}_{1}]}(G^{\cap}) ) \cap \patt = 
L_{[\textsc{q}^{(s)}_{1} X \textsc{q}^{(x)}_{1}]}(G^\bowtie)\enspace .\] 
The grammar $G^\bowtie$ is defined from $G^\cap$, by the following
modification of the productions from $\prod^\cap$, defined by a
function \(\iota\colon \prod^{\cap}\mapsto\prod^{\bowtie}\):

\begin{compactitem}
\item \( \iota([\textsc{q}^{(r)}_{s} X \textsc{q}^{(u)}_{v}] \rightarrow w) = [\textsc{q}^{(r)}_{s} X \textsc{q}^{(u)}_{v}] \rightarrow z\) where
\begin{compactenum}
	\item if \(\len{w}=0\) then \(z=\varepsilon\). 
	\item if \(\len{w}=1\) then we have \(\textsc{q}^{(r)}_{s}\Rightarrow_{G^\pat} w\, \textsc{q}^{(u)}_{v}\) and we let \(z=a_r\) if \(v=1\) else \(z=\varepsilon\).  
	\item if \(\len{w}=2\) then we have \(\textsc{q}^{(r)}_{s}\Rightarrow_{G^\pat} (w)_{1}\, \textsc{q}^{(y)}_{x} \Rightarrow_{G^\pat} (w)_{1} (w)_{2}\,  \textsc{q}^{(u)}_{v} \) for some \(x,y\).
		Define the word \(z = z'\cdot z''\) of length at most \(2\) such that \( z'=a_r\) if \(x=1\); else \(z'=\varepsilon\) and \(z''=a_y\) if \(v=1\) else \(z''=\varepsilon\).
\end{compactenum}

\item \( \iota([\textsc{q}^{(r)}_{s} X \textsc{q}^{(u)}_v] \rightarrow b \,
	[\textsc{q}^{(x)}_{y} Y \textsc{q}^{(u)}_v]) = [\textsc{q}^{(r)}_{s} X
	\textsc{q}^{(u)}_v] \rightarrow c \, [\textsc{q}^{(x)}_{y} Y
	\textsc{q}^{(u)}_v] \) where \(c=a_r\) if \(y=1\); else \(c=\varepsilon\).

\item \( \iota([\textsc{q}^{(r)}_{s} X \textsc{q}^{(u)}_v] \rightarrow
	[\textsc{q}^{(r)}_{s} Y \textsc{q}^{(x)}_y]\, b) = [\textsc{q}^{(r)}_{s} X
	\textsc{q}^{(u)}_v] \rightarrow [\textsc{q}^{(r)}_{s} Y
	\textsc{q}^{(x)}_y]\, c\) where \(c=a_x\) if \(v=1\); else \(c=\varepsilon\).
\item \(\iota(p)=p\) otherwise.
\end{compactitem}
Let $\prod^\bowtie = \{\iota(p) \mid p \in \prod^\cap\}$.  In
addition, for every control word \(\gamma\in (\Delta^\cap)^*\) of
length \(n\), let $\iota(\gamma) = \iota( (\gamma)_1 ) \cdots \iota(
(\gamma)_n ) \in \prod^\bowtie$.  A consequence of the following
proposition is that the inverse relation $\iota^{-1} \subseteq
\prod^\bowtie \times \prod^\cap$ is a total function. 

\begin{proposition}\label{prop:singleton}
For each production $p \in \Delta^\bowtie$, the set $\iota^{-1}(p)$ is a singleton. 
\end{proposition}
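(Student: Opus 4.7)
The plan is to prove that $\iota \colon \prod^\cap \to \prod^\bowtie$ is injective, i.e.\ that from any image $p' = \iota(p)$ one can uniquely reconstruct $p$. My first observation would be that $\iota$ leaves the LHS of every production unchanged, so $p'$ immediately determines the LHS $[\textsc{q}^{(r)}_s X \textsc{q}^{(u)}_v]$ of $p$, and with it the parameters $r,s,u,v,X$. The problem therefore reduces to recovering the RHS of $p$ from the RHS of $p'$.

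I would then proceed by case analysis on the RHS of $p$, following the five clauses (\ref{eq:terminalsonly})--(\ref{eq:2var}) in the definition of $\prod^\cap$. The nonterminals appearing in the RHS of $p$ are preserved verbatim by $\iota$, so clauses (\ref{eq:1var}) and (\ref{eq:2var}) are trivially reversible: the RHS of $p'$ is the RHS of $p$. For the mixed clauses (\ref{eq:gammavar}) and (\ref{eq:vargamma}), the original terminal $b$ is fixed by the LHS: every production of $G^\pat$ with source $\textsc{q}^{(r)}_s$ emits the unique letter $(w_r)_s$, so $b = (w_r)_s$ (and symmetrically $b = (w_x)_y$ in the right-adjacent case). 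Moreover, even when the auxiliary letter $c$ chosen by $\iota$ is $\varepsilon$, clauses (\ref{eq:gammavar}) and (\ref{eq:vargamma}) remain distinguishable from (\ref{eq:1var}), because $c=\varepsilon$ forces the target $\textsc{q}$-state to have subscript different from $1$, which rules out the only self-loops of $G^\pat$ (located at $\textsc{q}^{(s)}_1$ when $\ell_s=1$). Hence the left-$\textsc{q}$ (resp.\ right-$\textsc{q}$) of the RHS nonterminal of $p'$ must differ from the corresponding $\textsc{q}$ of the LHS, whereas in clause (\ref{eq:1var}) both must coincide. A direct comparison of $\textsc{q}$-labels therefore disambiguates these clauses.

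The main obstacle is clause (\ref{eq:terminalsonly}), where the image $z \in \mathcal{A}^*$ with $\len{z} \leq 2$ may have some letters erased by $\iota$ (whenever an underlying $G^\pat$-step does not ``complete'' a $w_i$), so $\len{z}$ does not pin down $\len{w}$. The fix is to refine the clause by $\len{w} \in \{0,1,2\}$ and show that the induced conditions on $(r,s,u,v)$ are pairwise disjoint: $\len{w}=0$ forces $(r,s) = (u,v)$; $\len{w}=1$ forces $(u,v)$ to be either $(r,s+1)$ (an internal step within $w_r$) or $(u,1)$ with $s = \ell_r$ and $u \geq r$ (a wrap-around completing $w_r$); $\len{w}=2$ imposes analogous rigid constraints involving the intermediate $G^\pat$-state $(y,x)$. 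In each subcase, $(y,x)$ is recovered from $(u,v)$ together with the letters $a_i$ appearing in $z$, and then the letters of $w$ are read off as $(w_r)_s$ and $(w_y)_x$. A short finite verification that the resulting parameter patterns never coincide completes the argument.
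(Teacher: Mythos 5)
Your proof follows essentially the same route as the paper's: fix the left-hand side (which $\iota$ never modifies), observe that nonterminals of $\Vars^\cap$ are copied verbatim so that images of productions with zero, one, and two nonterminals on the right-hand side can never collide, and then exploit the rigidity of $G^{\pat}$ --- every production with source $\textsc{q}^{(r)}_{s}$ emits the letter $(w_r)_s$, and a production of $G^{\pat}$ is determined by its source and target states --- to reconstruct erased terminals. Your separation of clause (\ref{eq:1var}) from the erased instances of clauses (\ref{eq:gammavar}) and (\ref{eq:vargamma}), via the observation that erasure forces a target subscript different from $1$ while the only self-loops of $G^{\pat}$ sit at subscript $1$, is correct, and is in fact spelled out more explicitly than in the corresponding case of the paper's proof.

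There is, however, one step that fails as stated: the claim that, in clause (\ref{eq:terminalsonly}), the conditions induced on $(r,s,u,v)$ by $\len{w}=0$, $\len{w}=1$ and $\len{w}=2$ are pairwise disjoint. They are not, and the counterexample is exactly the self-loop you invoked one paragraph earlier. If $\ell_r=1$, then $G^{\pat}$ has the production $\textsc{q}^{(r)}_1 \rightarrow (w_r)_1\, \textsc{q}^{(r)}_1$, and the three candidate right-hand sides $w=\varepsilon$, $w=(w_r)_1$ and $w=(w_r)_1 (w_r)_1$ all yield step sequences from $\textsc{q}^{(r)}_1$ back to $\textsc{q}^{(r)}_1$, i.e.\ all three satisfy $(u,v)=(r,s)$; similarly, a single wrap-around step and a wrap-around followed by another wrap-around both satisfy $s=\ell_r$ and $v=1$. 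So the state pattern alone does not determine $\len{w}$. What saves the argument is the terminal word $z$ of the image: the number of letters in $z$ equals the number of wrap-around steps $\textsc{q}^{(j)}_{\ell_j}\rightarrow (w_j)_{\ell_j}\,\textsc{q}^{(j')}_1$ taken, so the three preimages above map to the pairwise distinct images $\varepsilon$, $a_r$ and $a_r a_r$. The disjointness must therefore be asserted for the pairs consisting of the state pattern \emph{and} $z$, over the (at most seven) path shapes of length at most two; this is, in substance, what the paper's appeal to the uniqueness of the $G^{\pat}$-step sequence consistent with the image accomplishes. Since your reconstruction already reads the letters of $z$ to recover the intermediate state, the repair costs only a restatement of the disjointness claim, but as written that claim is false and the ``short finite verification'' would not go through on states alone.
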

\begin{proof}
By case split, based on the type of the production $p \in
\Delta^\bowtie$. Since $G^\bowtie$ is in $2$NF we have: 
\begin{compactitem}
\item if $p = ([\textsc{q}^{(r)}_{s} X \textsc{q}_{v}^{(u)}] \arrow{}{}
  a)$ then $\iota^{-1}(p) = \{[\textsc{q}^{(r)}_{s} X
      \textsc{q}_{v}^{(u)}] \arrow{}{} w\}$, where $\textsc{q}^{(r)}_{s}
  \xRightarrow{}^*_{G^\pat} w\, \textsc{q}_{v}^{(u)}$ is the shortest
	step sequence of $G^\pat$ between \(\textsc{q}^{(r)}_{s}\) and \(\textsc{q}^{(u)}_{v}\) which is unique by \(G^{\pat}\) and produces $w \in \Sigma^*$.

\item if $p = ([\textsc{q}^{(r)}_{s} X \textsc{q}_{v}^{(u)}] \arrow{}{}
  [\textsc{q}_{y}^{(x)} Y \textsc{q}_{t}^{(z)}])$, then either one of
  the cases below must hold:
  \begin{compactenum}[\upshape(\itshape i\upshape)]
	\item \(\textsc{q}_{u}^{(v)}=\textsc{q}_{z}^{(t)}\) and $\textsc{q}^{(r)}_{s} \xRightarrow{}_{G^\pat}
    b\, \textsc{q}_{y}^{(x)}$, for some \(y\neq 1\). In this case $b$ is
    uniquely determined by $\textsc{q}^{(r)}_{s}$ and 
    $\textsc{q}_{y}^{(x)}$, thus we get $\iota^{-1}(p) = \{[\textsc{q}^{(r)}_{s} X
      \textsc{q}_{v}^{(u)}] \arrow{}{} b~[\textsc{q}_{y}^{(x)} Y
      \textsc{q}_{t}^{(z)}]\}$. 
		\item \(\textsc{q}_{s}^{(r)}=\textsc{q}_{y}^{(x)}\) and $\textsc{q}_{t}^{(z)}
      \xRightarrow{}_{G^\pat} b \,\textsc{q}_{v}^{(u)}$, for some \(t\neq\ell_z\). In this case we get, symmetrically, $\iota^{-1}(p)
      = \{[\textsc{q}^{(r)}_{s} X \textsc{q}_{v}^{(u)}] \arrow{}{}
      [\textsc{q}_{y}^{(x)} Y \textsc{q}_{t}^{(z)}]~b\}$.
		\item \(\textsc{q}_{u}^{(v)}=\textsc{q}_{z}^{(t)}\) and \(\textsc{q}_{s}^{(r)}=\textsc{q}_{y}^{(x)}\). Then $\iota^{-1}(p) = \{p\}$. 
  \end{compactenum}

\item if $p = ([\textsc{q}^{(r)}_{s} X \textsc{q}_{v}^{(u)}] \arrow{}{}
  a_r\, [\textsc{q}_{y}^{(x)} Y \textsc{q}_{v}^{(u)}])$  for some $a_r \in \mathcal{A}$, hence \(y=1\) (respectively,
  $[\textsc{q}^{(r)}_{s} X \textsc{q}_{v}^{(u)}] \arrow{}{}
  [\textsc{q}^{(r)}_{s} Y \textsc{q}_{y}^{(x)}]\, a_r$ hence \(v=1\)) and then the only possibility is $\iota^{-1}(p) =
	\{[\textsc{q}^{(r)}_{s} X \textsc{q}_{v}^{(u)}] \arrow{}{} (w_r)_{\ell_{r}}\,
	[\textsc{q}_{y}^{(x)} Y \textsc{q}_{v}^{(u)}]\}$ (respectively,
	$[\textsc{q}^{(r)}_{s} X \textsc{q}_{v}^{(u)}] \arrow{}{}
	[\textsc{q}^{(r)}_{s} Y \textsc{q}_{y}^{(x)}]\, (w_r)_{\ell_{r}}$).

\item if $p = ([\textsc{q}^{(r)}_{s} X \textsc{q}^{(x)}_{y}]
	\rightarrow [\textsc{q}^{(r)}_{s} Y \textsc{q}^{(u)}_v]\,
              [\textsc{q}^{(u)}_v Z \textsc{q}^{(x)}_y])$ then
              $\iota^{-1}(p) = \set{p}$.\qed
\end{compactitem}
\end{proof}

\begin{lemma}\label{lem:interface}
Given a grammar \(G = \tuple{ \Vars, \Sigma, \prod }\) and a bounded
expression \(\pat = w_1^* \ldots w_d^*\) over \(\Sigma\), for
every \(X \in \Vars\), every \(1 \leq s \leq x \leq d\) and every \(k
> 0\), the following hold:
\begin{compactenum}
\item\label{item:g-bowtie} \(L_{[\textsc{q}^{(s)}_1
    X \textsc{q}^{(x)}_{1}]}^{(k)}(G^{\bowtie}) = h^{-1}(L_{[\textsc{q}^{(s)}_1
    X \textsc{q}^{(x)}_{1}]}^{(k)}(G^\cap)) \cap \patt\), 

\item\label{item:xi} for each control set \(\Gamma \subseteq \left(\prod^\bowtie\right)^*\), such that 
\(L_{{[\textsc{q}^{(s)}_1 X \textsc{q}^{(x)}_{1}]}}^{(k)}(G^\bowtie) \subseteq \hat{L}_{[\textsc{q}^{(s)}_1
    X \textsc{q}^{(x)}_{1}]}(\Gamma, G^\bowtie)\), we
    have \(L_{{[\textsc{q}^{(s)}_1
    X \textsc{q}^{(x)}_{1}]}}^{(k)}(G^\cap) \subseteq \hat{L}_{{[\textsc{q}^{(s)}_1
    X \textsc{q}^{(x)}_{1}]}}(\iota^{-1}(\Gamma), G^\cap)\),

\item\label{item:bowtie-xi-complexity} 
\(G^{\bowtie}\) is computable in time \( \mathcal{O}\bigl( \len{\pat}^3 \cdot \len{G} \bigr) \).
\end{compactenum}
\end{lemma}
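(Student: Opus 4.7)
The plan is to establish a tight step-by-step correspondence between derivations of $G^{\cap}$ and those of $G^{\bowtie}$, mediated by the map $\iota$. The key observation is that $\iota$ only alters the terminal symbols on right-hand sides of productions while leaving the nonterminal layout intact; consequently, applying $\iota(p)$ in $G^{\bowtie}$ rewrites exactly the same position of the current sentential form as applying $p$ in $G^{\cap}$, so the index and depth-first policies are preserved step-by-step.

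For claim (\ref{item:g-bowtie}), I would prove by induction on $\len{\gamma}$ the following stronger statement: for any $\gamma \in (\prod^{\cap})^{*}$, $[\textsc{q}^{(s)}_1 X \textsc{q}^{(x)}_{1}] \xRightarrow[\df{k}]{\gamma}_{G^{\cap}} w$ exists if and only if $[\textsc{q}^{(s)}_1 X \textsc{q}^{(x)}_{1}] \xRightarrow[\df{k}]{\iota(\gamma)}_{G^{\bowtie}} u$ exists for the unique $u \in \patt$ with $h(u) = w$. By Lemma \ref{lem:intersection}, every such $w \in L_{[\textsc{q}^{(s)}_1 X \textsc{q}^{(x)}_{1}]}(G^{\cap})$ has the form $w_1^{i_1} \ldots w_d^{i_d}$, so its unique $\patt$-preimage is $a_1^{i_1} \ldots a_d^{i_d}$. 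The heart of the inductive step is a case analysis on the last production $p$ used, inspecting the four defining clauses of $\iota$ and verifying that $\iota(p)$ emits exactly one $a_r$ per occurrence in $p$ of a $G^{\pat}$-transition that completes a $w_r$-block by moving to some state $\textsc{q}^{(\cdot)}_{1}$, and emits no other terminals. Preservation of the index and depth-first properties is automatic because $\iota$ leaves nonterminal positions untouched.

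For claim (\ref{item:xi}), given $w \in L^{(k)}_{[\textsc{q}^{(s)}_1 X \textsc{q}^{(x)}_{1}]}(G^{\cap})$, the forward direction of (\ref{item:g-bowtie}) delivers $u \in L^{(k)}_{[\textsc{q}^{(s)}_1 X \textsc{q}^{(x)}_{1}]}(G^{\bowtie})$ with $h(u) = w$. By hypothesis on $\Gamma$, some $\gamma' \in \Gamma$ satisfies $[\textsc{q}^{(s)}_1 X \textsc{q}^{(x)}_{1}] \xRightarrow{\gamma'}_{G^{\bowtie}} u$ (an unrestricted step sequence suffices for $\hat{L}$). Proposition \ref{prop:singleton} ensures $\iota^{-1}$ is a well-defined function on productions, which extends by concatenation to control words, so $\iota^{-1}(\gamma') \in \iota^{-1}(\Gamma)$ is a single control word. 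The unrestricted-derivation variant of the correspondence from (\ref{item:g-bowtie})---proved by the same case-analytic induction, without tracking the index or depth-first status---then yields $[\textsc{q}^{(s)}_1 X \textsc{q}^{(x)}_{1}] \xRightarrow{\iota^{-1}(\gamma')}_{G^{\cap}} w$, whence $w \in \hat{L}_{[\textsc{q}^{(s)}_1 X \textsc{q}^{(x)}_{1}]}(\iota^{-1}(\Gamma), G^{\cap})$.

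For claim (\ref{item:bowtie-xi-complexity}), $\card{\Vars^{\cap}} = \mathcal{O}(\card{\Vars} \cdot \len{\pat}^{2})$. Assuming $G$ is in $2$NF, the dominant contribution to $\card{\prod^{\cap}}$ comes from rules $X \to YZ$, each spawning $\mathcal{O}(\len{\pat}^{3})$ productions (two endpoints plus a middle $G^{\pat}$-state); the other clauses contribute only $\mathcal{O}(\len{\pat}^{2})$ instances per rule of $G$. The reachability side conditions $\textsc{q}^{(r)}_s \Rightarrow_{G^{\pat}}^{*} w\,\textsc{q}^{(u)}_v$ with $\len{w}\leq 2$ reduce to short-path queries in the finite graph underlying $G^{\pat}$ and can be tabulated up-front in $\mathcal{O}(\len{\pat}^{2})$ time. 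Finally, $G^{\bowtie}$ is obtained by applying $\iota$ production-by-production to $G^{\cap}$ in linear time, giving the overall $\mathcal{O}(\len{\pat}^{3} \cdot \len{G})$ bound. The main obstacle I anticipate is the bookkeeping in (\ref{item:g-bowtie}): aligning the $a_r$-counts in the $G^{\bowtie}$-word with the $w_r$-block-completion events of the parallel $G^{\pat}$-run across all clauses of $\iota$. This becomes routine once the invariant ``the $G^{\pat}$-state $\textsc{q}^{(r)}_s$ encodes the suffix $(w_r)_s \ldots (w_r)_{\ell_r}$ still to be produced before the next block begins, and $a_r$ is emitted precisely upon completion of that suffix'' is stated precisely.
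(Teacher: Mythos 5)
Your proposal is correct and follows essentially the same route as the paper: the paper's proof also rests on a production-wise correspondence between $G^\cap$- and $G^\bowtie$-derivations under $\iota$ (its Fact~\ref{fact:bowtie-cap}, proved by induction on the control word with index preserved), applies it in both directions together with Proposition~\ref{prop:singleton} to get item~(\ref{item:xi}), and bounds $\len{G^\bowtie} \leq \len{G^\cap} \leq \len{\pat}^3 \cdot \len{G}$ by counting the decorated copies of each production of $G$ for item~(\ref{item:bowtie-xi-complexity}). The one ingredient you fold into your induction invariant, and which the paper isolates as a separate claim (its Fact~\ref{fact:bowtie-letter-bounded}), is that every word of $L_{[\textsc{q}^{(s)}_1 X \textsc{q}^{(x)}_1]}(G^\bowtie)$ actually lies in $\patt$ --- needed for the left-to-right inclusion of item~(\ref{item:g-bowtie}) --- and this follows from the monotonicity of block indices along runs of $G^\pat$, which is precisely what your stated invariant must make explicit.
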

\begin{proof}
We start by proving the following facts:

\begin{fact}\label{fact:bowtie-letter-bounded}
  For all $X \in \Vars$ and $1 \leq s \leq x \leq d$, we have
  $L_{[\textsc{q}^{(s)}_1 X \textsc{q}^{(x)}_{1}]}(G^\bowtie) \subseteq \patt$.
\end{fact}
\begin{proof}
Let $\tilde{w} \in L_{[\textsc{q}^{(s)}_1 X \textsc{q}^{(x)}_{1}]}(G^\bowtie)$.
We have \( [\textsc{q}^{(s)}_1 X \textsc{q}^{(x)}_{1}] \xRightarrow{\gamma} \tilde{w}\) is a derivation of \(G^{\bowtie}\) for some
control word \(\gamma\) over \(\prod^{\bowtie}\).
By contradiction, assume \(\tilde{w}\notin \patt\), that is there exist \(p,p'\) such that \(p < p'\) and \( (\tilde{w})_p = a_j \) and \( (\tilde{w})_{p'} = a_i \) with \(i < j\).
The definition of \(\iota\) shows that there exists \(w\in L_{[\textsc{q}^{(s)}_1 X \textsc{q}^{(x)}_{1}]}(G^{\cap})\) such that
\( [\textsc{q}^{(s)}_1 X \textsc{q}^{(x)}_{1}] \xRightarrow{\iota^{-1}(\gamma)} w\) in \(G^{\cap}\), hence that \(w\in \pat\) since \(L_{[\textsc{q}^{(s)}_1 X \textsc{q}^{(x)}_{1}]}(G^{\cap})\subseteq\pat\), and finally that \( \textsc{q}^{(s)}_1 \Rightarrow^*_{G^{\pat}} w\, \textsc{q}^{(x)}_{1} \).
Now, the mapping \(\iota\) is defined such that a production in its image produces a \(a_r\) when, in the underlying \(G^{\pat}\), either
control moves forward from \( \textsc{q}^{(r)}_s \) to \( \textsc{q}^{(u)}_1 \), e.g. 
\( [\textsc{q}^{(r)}_{s} X \textsc{q}^{(x)}_y] \rightarrow a_r \, [\textsc{q}^{(u)}_{1} Y \textsc{q}^{(x)}_y]\)
or control moves backward form \( \textsc{q}^{(u)}_1 \) to \( \textsc{q}^{(r)}_s \), e.g. 
\( [\textsc{q}^{(x)}_{y} X \textsc{q}^{(u)}_1] \rightarrow [\textsc{q}^{(x)}_{y} Y \textsc{q}^{(r)}_s]\, a_r\).
Therefore, by the previous assumption on \(\tilde{w}\) where \(a_j\) occurs before \(a_i\), we have that a production of \(\textsc{q}^{(j)}_{\ell_j} \rightarrow (w_j)_{\ell_j}\, \textsc{q}^{(u)}_1\) for some \(u\geq j\) and then a production of \(\textsc{q}^{(i)}_{\ell_i} \rightarrow (w_i)_{\ell_i}\, \textsc{q}^{(u')}_1\) for some \(u' \geq i\) necessarily occurs in that order in \(\iota^{-1}(\gamma)\). But this is a contradiction because \(j>i\) and the definition
of \(G^{\pat}\) prohibits control to move from \(\textsc{q}^{(j)}_{p_j}\) to \(\textsc{q}^{(i)}_{p_i} \) for any \(p_i, p_j\).
\qed
\end{proof}

\begin{fact}\label{fact:bowtie-cap}
  For all $X \in \Vars$, $1 \leq s \leq x \leq d$, $\gamma \in
  (\prod^\cap)^*$, $k > 0$  and $i_1, \ldots, i_d \in \nats$: 
  \[[\textsc{q}^{(s)}_1 X \textsc{q}^{(x)}_{1}]
  \xRightarrow[(k)]{\gamma} w_1^{i_1} \ldots w_d^{i_d} ~\mbox{in $G^\cap$}
	\text{ if and only if } [\textsc{q}^{(s)}_1 X \textsc{q}^{(x)}_{1}]
  \xRightarrow[(k)]{\iota(\gamma)} a_1^{i_1} \ldots a_d^{i_d} ~\mbox{in
  $G^\bowtie$}\enspace .\]
\end{fact}
\begin{proof} By induction on $\len{\gamma} > 0$, and case analysis 
on the right-hand side of $(\gamma)_1$.\qed
\end{proof}

\noindent (\ref{item:g-bowtie}) ``$\subseteq$'' Let $\tilde{w} \in
L_{[\textsc{q}^{(s)}_1 X \textsc{q}^{(x)}_{1}]}^{(k)}(G^\bowtie)$. By
Fact~\ref{fact:bowtie-letter-bounded}, we have that
$\tilde{w} \in \patt$. It remains to show that $\tilde{w} \in
h^{-1}(L_{[\textsc{q}^{(s)}_1 X \textsc{q}^{(x)}_{1}]}(G^{\cap}))$, i.e. that $h(\tilde{w}) \in
L_{[\textsc{q}^{(s)}_1 X \textsc{q}^{(x)}_{1}]}(G^{\cap})$, which follows by Fact~\ref{fact:bowtie-cap}.
``$\supseteq$'' Let $\tilde{w} \in
h^{-1}(L_{[\textsc{q}^{(s)}_1 X \textsc{q}^{(x)}_{1}]}^{(k)}(G^{\cap})) \cap \patt$ be a word, hence \(\tilde{w} = a_1^{i_1}\ldots a_d^{i_d}\) for some \(i_1,\ldots,i_d\in\nats\). Then
$h(\tilde{w}) \in L_{[\textsc{q}^{(s)}_1 X \textsc{q}^{(x)}_{1}]}^{(k)}(G^{\cap})$ by Fact~\ref{fact:bowtie-cap} and we are done.

\vspace*{\baselineskip}\noindent
(\ref{item:xi}) Let $w = w_1^{i_1} \ldots w_d^{i_d} \in
L_{[\textsc{q}^{(s)}_1 X \textsc{q}^{(x)}_{1}]}^{(k)}(G^\cap)$ be a
word.  Then $G^\cap$ has a derivation $[\textsc{q}^{(s)}_1
X \textsc{q}^{(x)}_{1}] \xRightarrow[(k)]{}^* w$. By
Fact~\ref{fact:bowtie-cap}, also $G^\bowtie$ has a derivation
$[\textsc{q}^{(s)}_1 X \textsc{q}^{(x)}_{1}] \xRightarrow[(k)]{}^*
a_1^{i_1} \ldots a_d^{i_d}$. By the hypothesis $L_{[\textsc{q}^{(s)}_1
X \textsc{q}^{(x)}_{1}]}^{(k)}(G^\bowtie) \subseteq \hat{L}_{[\textsc{q}^{(s)}_1
X \textsc{q}^{(x)}_{1}]}(\Gamma, G^\bowtie)$, there exists a control
word $\gamma \in \Gamma$ such that $[\textsc{q}^{(s)}_1
X \textsc{q}^{(x)}_{1}] \xRightarrow[]{\gamma} a_1^{i_1} \ldots
a_d^{i_d}$ in $G^\bowtie$, and by Fact~\ref{fact:bowtie-cap}, we have
$[\textsc{q}^{(s)}_1
X \textsc{q}^{(x)}_{1}] \xRightarrow[]{\iota^{-1}(\gamma)}
w_1^{i_1} \ldots w_d^{i_d}$ in
$G^\cap$. Hence \(w \in \hat{L}_{[\textsc{q}^{(s)}_1
X \textsc{q}^{(x)}_{1}]}(\iota^{-1}(\Gamma), G^\cap)\).

\vspace*{\baselineskip}\noindent (\ref{item:bowtie-xi-complexity})
Given that each production $p^\bowtie \in \Delta^\bowtie$ is the image
  of a production $p^\cap \in \Delta^\cap$ via $\iota$, we have
  $\len{p^\bowtie} = \len{\iota(p^\cap)} \leq \len{p^\cap}$. Hence
  $\len{G^\bowtie} \leq \len{G^\cap}$. Now, each production
  $p^\cap \in \prod^\cap$ corresponds to a production $p$ of $G$, such
  that the nonterminals occurring on both sides of $p$ are decorated
  with at most $3$ nonterminals from $\Vars^\pat$. Since
  $\card{\Vars^\pat} = \len{\pat}$, we obtain that, for each
  production $p$ of $G$, $G^\cap$ has at most $\len{\pat}^3$
  productions of size $\len{p}$. Hence
  $\len{G^\bowtie} \leq \len{G^\cap} \leq \len{\pat}^3 \cdot \len{G}$,
  and $G^\bowtie$ can be constructed in time
  $\len{\pat}^3 \cdot \len{G}$.\qed
\end{proof}

\begin{remark}\label{rem:letter-bounded-inclusion}
  Given $G = \tuple{\Vars,\mathcal{A},\prod}$, $X \in \Vars$, and a
  strict \(d\)-letter-bounded expression $\patt = a_1^* \ldots a_d^*$,
  the check $L_X(G) \subseteq \patt$ can be decided in time
  $\mathcal{O}(\len{\patt} \cdot \len{G})$, by building a grammar
  $\overline{G^{\pat}_1}$ such
  that \(L_{\textsc{q}^{(1)}}(\overline{G^{\pat}_1})
  = \Sigma^* \setminus \patt\) (see
  Remark~\ref{rem:letter-bounded-complement}) and
  checking \(L_X(G) \cap
  L_{\textsc{q}^{(1)}}(\overline{G^{\pat}_1}) \stackrel{?}{=} \emptyset\).
  A similar argument shows that queries $L_X(G) \cap
  (\mathcal{A}^* \cdot
  a_s \cdot \mathcal{A}^*) \stackrel{?}{=} \emptyset,~1 \leq s \leq
	d$, can be answered in time $\mathcal{O}(\len{G})$ \cite[Section~5]{BEF+ipl}.
\end{remark}

\section{Other proofs}\label{sec:hard}

\begin{lemma}\label{lem:leibniz}
Given $G = \tuple{\Vars, \Sigma, \prod}$ and a $k$-index depth-first
step sequence $X\, Y \xRightarrow[\df{k}]{\gamma} w$, for two
nonterminals $X,Y \in \Vars$, $w \in \Sigma^*$, and
$\gamma \in \prod^*$.  There exist $w_1, w_2 \in
\Sigma^*$ such that $w_1\, w_2 = w$, and $\gamma_1, \gamma_2 \in \prod^*$ 
such that either one of the following holds:
\begin{compactenum}
\item $X \xRightarrow[\df{k-1}]{\gamma_1} w_1$ and $Y \xRightarrow[\df{k}]{\gamma_2} w_2$
and $\gamma=\gamma_1\, \gamma_2$, or
\item $X \xRightarrow[\df{k}]{\gamma_1} w_1$ and $Y \xRightarrow[\df{k-1}]{\gamma_2} w_2$
and $\gamma=\gamma_2\, \gamma_1$. 
\end{compactenum}
\end{lemma}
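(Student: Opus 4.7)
My plan is to exploit the depth-first policy to show that the derivation $XY \xRightarrow[\df{k}]{\gamma} w$ must first fully expand one of the two initial nonterminals before touching the other.

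I would start by examining the first step $(\gamma)_1 = (Z,v)/j_0$ in the sequence $w_0 = XY \Arrow{} w_1 \Arrow{}\cdots\Arrow{} w_n = w$. Since $w_0 = XY$, we have $j_0 \in \{1,2\}$, and by the symmetry between the two target conclusions it suffices to treat the case $j_0 = 1$, $Z = X$, $w_1 = v\cdot Y$. Note $f_0(1) = f_0(2) = 0$, so either choice is consistent with depth-first at step $0$. In $w_1$, every position of $v$ carries rank $f_1 = 1$, whereas the trailing $Y$ retains rank $f_1(|v|+1) = 0$.

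The key claim is that $Y$ is not rewritten until all descendants of $X$ have been reduced to terminals. I would prove, by induction on $m$, that as long as $w_m = u \cdot Y$ with $u \notin \Sigma^*$, every nonterminal position in $u$ has rank $f_m \geq 1$ while the final $Y$ has rank $0$; the depth-first maximality condition then forces $j_m \leq |u|$, so $Y$ is untouched and any production at step $m+1$ introduces new positions of rank $\geq 1$. Let $\ell$ be the least index with $w_\ell = w_1 \cdot Y$ and $w_1 \in \Sigma^*$; split $\gamma = \gamma_1 \cdot \gamma_2$ at position $\ell$ and set $w_2$ so that $w = w_1 \cdot w_2$. Removing the inert trailing $Y$ from $w_0,\ldots,w_\ell$ yields a step sequence $X \Arrow{\gamma_1} w_1$ in $G$, and removing the terminal prefix $w_1$ from $w_\ell,\ldots,w_n$ yields $Y \Arrow{\gamma_2} w_2$.

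It remains to check that both sub-sequences are depth-first and to bound their indices. For $\gamma_1$, the position indices and rank functions coincide with those of the full sequence on the non-$Y$ positions, so the depth-first maxima remain maxima; since the stripped $Y$ adds exactly one extra nonterminal to every intermediate word, the $k$-index bound of $\gamma$ becomes a $(k-1)$-index bound for $\gamma_1$. For $\gamma_2$, stripping the terminal prefix $w_1$ shifts positions by $|w_1|$ but does not change the nonterminal count, so $\gamma_2$ is $k$-index, and the rank values in the stripped sequence differ from those in the full one by a uniform translation (plus the special rank $0$ of $Y$ being preserved), which leaves the max-selection unchanged, hence depth-first. Together these yield case $1$ of the lemma; the symmetric case $j_0 = 2$ yields case $2$. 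The main delicate point I expect is the precise rank bookkeeping showing that depth-first is preserved under the two kinds of stripping, since the $f_m$ functions are defined globally on the full sequence and must be reconciled with the $f$-functions of the sub-derivations.
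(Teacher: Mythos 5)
Your proposal is correct and follows essentially the same route as the paper's proof: the depth-first policy forces the sequence to split as $X\,Y \xRightarrow[\df{k}]{\gamma_1} w_1\, Y \xRightarrow[\df{k}]{\gamma_2} w_1\, w_2$ (or symmetrically), after which erasing the inert trailing $Y$ lowers the index of the first part to $k-1$ and erasing the terminal prefix $w_1$ leaves the second part at index $k$. The only difference is one of detail: the paper asserts the split directly ``by the definition of a depth-first sequence,'' whereas you justify it with the explicit induction on the $f_m$ ranks and check that both strippings preserve depth-firstness, which is a faithful elaboration of the same argument rather than a different one.
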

\begin{proof}
The step sequence $X\, Y \xRightarrow[\df{k}]{\gamma} w$ has one of two
possible forms, by the definition of a depth-first sequence:
\begin{compactitem}
\item $X\, Y \xRightarrow[\df{k}]{\gamma_1} w_1\, Y \xRightarrow[\df{k}]{\gamma_2} w_1\, w_2$, or
\item $X\, Y \xRightarrow[\df{k}]{\gamma_2} X\, w_2 \xRightarrow[\df{k}]{\gamma_1} w_1\, w_2$, 
\end{compactitem}
for some words $w_1, w_2 \in \Sigma^*$ and control words
$\gamma_1, \gamma_2 \in \prod^*$. Let us consider the first case, the
second being symmetric. Since $X\, Y \xRightarrow[\df{k}]{\gamma_1}
w_1\, Y$ is a $k$-index step sequence, the sequence
$X \xRightarrow{\gamma_1} w_1$ obtained by erasing the $Y$
nonterminal from the last position in all steps of the sequence, is of index
$k-1$, i.e.\ $X \xRightarrow[\df{k-1}]{\gamma_1} w_1$. Also, since
$w_1\, Y \xRightarrow[\df{k}]{\gamma_2} w_1\, w_2$, we obtain
$Y \xRightarrow[\df{k}]{\gamma_2} w_2$, by erasing the first \(\len{w_1}\)
symbols in all steps of the sequence. Clearly, in this case we have
$\gamma=\gamma_1\, \gamma_2$.\qed
\end{proof}

\subsection{Proof of Lemma~\ref{fsa-dfk}}\label{app:fsa-dfk}

First, we formally define the notion of depth-first derivations by annotating
symbols occurring in every step with a positive integer called the \emph{rank}.
Intuitively, the rank assigns a priority between symbols in a word.
For a set \(S\) of symbols (e.g. the terminals and nonterminals) and a set \(I \subseteq \nats\),
we define \(S^{I} = \{ s^{\tuple{i}} \mid s\in S,\, i\in I\}\) and call
\(S^{I}\) a \emph{ranked alphabet}. We also sometimes write \(S^{\tuple{i}}\)
when \(I\) is a singleton. 
A \emph{ranked word} (r-word) is a word over a ranked alphabet.
Given a word \(w\) of length \(n\) and an \(n\)-dimensional
vector \(\boldsymbol{\alpha}\in\nat^n\), the \emph{ranked
word} \(w^{\boldsymbol{\alpha}}\) is the sequence
\({(w)_1}^{\tuple{(\boldsymbol{\alpha})_1}}\ldots {(w)_n}^{\tuple{(\boldsymbol{\alpha})_n}}\), 
in which the \(i\)th element of \(\boldsymbol{\alpha}\) annotates the \(i\)th symbol of \(w\). We also denote \(w^{ \rank{c} } =
{(w)_1}^{\tuple{c}} \ldots {(w)_{\len{w}}}^{\tuple{c}}\) as a
shorthand.
Let $G = \tuple{\Vars,\Sigma,\prod}$ be a grammar
and \(u
\Arrow{(Z,w)/j} v\) be a step, for a vector 
\(\boldsymbol{\alpha}\in\nat^{\len{u}}\), we define the \emph{ranked step} (r-step)
\(u^{\boldsymbol{\alpha}} \Arrow{(Z,w)/j} v^{\boldsymbol{\beta}}\) if and only if
\((u)_j = Z\) and 
\[v^{\boldsymbol{\beta}} = (u^{\boldsymbol{\alpha}})_1 \cdots
(u^{\boldsymbol{\alpha}})_{j-1} \, w^{\rank{m+1}} \,
(u^{\boldsymbol{\alpha}})_{j+1}\cdots
(u^{\boldsymbol{\alpha}})_{\len{u}} \] where each symbol in \(w\) has
rank \(m+1\) and 
\[m = \max\left(\set{
(\boldsymbol{\alpha})_i \mid \exists i\colon 1 \leq i \leq \len{u},
i\neq j, (u)_i \in \Vars} \cup \set{-1} \right)\] is the maximum among
the ranks of the nonterminals in \(u^{\boldsymbol{\alpha}}\), with
position \(j\) omitted\footnote{If $Z=(u)_j$ is the only non-terminal
in $u$, we have $m+1=-1+1=0$.}. An r-step is said to
be \emph{depth-first},
denoted \(u^{\boldsymbol{\alpha}} \xArrow{}{\textbf{df}}
v^{\boldsymbol{\beta}}\) if{}f the rank of the nonterminal at
position \(j\) where the rule applies is maximal,
i.e. $(\boldsymbol{\alpha})_j=m$.  For instance the transition
labelled \(\mathbf{p_2}\) in Fig.~\ref{fig:running-example} (d) is a
depth-first r-step.  A r-step sequence is said to be depth-first if
all of its r-steps are depth-first. Finally, an unranked step
sequence \(w_0 \Arrow{(\gamma)_1} w_1
\ldots w_{n-1} \Arrow{(\gamma)_{n}} w_n\)
is said to be depth-first,
written \(w_0 \xArrow{\gamma}{\mathbf{df}} w_n\), if{}f there exist
vectors \(\boldsymbol{\alpha}_1 \in \nats^{\len{w_1}}, \ldots,
\boldsymbol{\alpha}_n \in \nats^{\len{w_n}}\) such that 
\(w_0^{\rank{0}} \xArrow{(\gamma)_1}{\mathbf{df}}
w_1^{\boldsymbol{\alpha}_1} \ldots w_{n-1}^{\boldsymbol{\alpha}_{n-1}} 
\xArrow{(\gamma)_{n}}{\mathbf{df}} w_{n}^{\boldsymbol{\alpha}_{n}}\) holds.

Let \(\bdwords^{(k)}
= \{w^{\boldsymbol{\alpha}} \mid \exists  u^{\boldsymbol{\beta}}\colon u^{\boldsymbol{\beta}} = \proj{(w^{\boldsymbol{\alpha}})}{\Vars^{\nats}}, \len{u^{\boldsymbol{\beta}}} \leq  k, \boldsymbol{\beta} \text{ is contiguous}, \max_{i} (\boldsymbol{\beta})_i \leq k-1\} \) be the set of r-words such
that when deleting ranked terminals, the resulting word is no longer than \(k\) and has ranks between $0$ and $k-1$. 
It is routine to check that \(\bdwords^{(k)}\) is closed for the relation \(\xRightarrow[\df{k}]{}\).
For a
r-word \(w^{\boldsymbol{\alpha}}\in\bdwords^{(k)}\),
let \(\age{w^{\boldsymbol{\alpha}}}\) be the r-word
\( (\proj{w^{\boldsymbol{\alpha}}}{\Vars^{\tuple{0}}})\;
(\proj{w^{\boldsymbol{\alpha}}}{\Vars^{\tuple{1}}})
\ldots (\proj{w^{\boldsymbol{\alpha}}}{\Vars^{\tuple{k}}})\).
Intuitively, \(\age{w^{\boldsymbol{\alpha}}}\) projects out the
terminals of \(w\), and orders the remaining nonterminals in the
increasing order of their ranks. For instance, 
\(\age{a^{\tuple{1}}Y^{\tuple{1}}Z^{\tuple{0}}} = Z^{\tuple{0}} Y^{\tuple{1}}\). 
The $\age{.}$ operator is naturally lifted from r-words to sets of r-words.
Recall that we define the set \(Q\) of states of \(A^{\df{k}} = (Q,\prod,\rightarrow)\) as
\(Q = \{ w^{\boldsymbol{\alpha}} \mid w \in \Vars^{*}, \len{w}\leq k, \boldsymbol{\alpha} \text{ is contiguous},
(\boldsymbol{\alpha})_1 \leq \cdots \leq (\boldsymbol{\alpha})_{\len{w}} \} \).
It is routine to check that \(\age{\bdwords^{(k)}} = Q\) holds.
Now let us consider \(\rightarrow\) which we defined as follows.
Let \(q,q' \in Q\), \( (X,w)\in \Delta\) we have \(q \xrightarrow{(X,w)} q'\) if{}f 
\begin{compactitem}
\item \( q = u\, X^{\tuple{i}}\, v\) for some \(u,v\) and where \(i\) is the maximum rank in \(q\), and 
\item \( q'= u\, v\, (\proj{w}{\Vars})^{\rank{i'}} \)
where \(\len{u\, v\, (\proj{w}{\Vars})^{\rank{i'}}}\leq k\) and \( i' = 
\begin{cases}
  0 & \text{if } u\, v = \varepsilon\\
	i & \text{else if } \proj{(u\, v)}{\Vars^{\tuple{i}}} = \varepsilon\\
	i+1 & \text{else}
\end{cases}\)
\end{compactitem}
As \(q\in Q\), we find that \(q\in \age{\bdwords^{(k)}}\). 
Furthermore, it is an easy exercise to show that  
\(q \xrightarrow{(X,w)} q'\) if{}f there exists \(w^{\boldsymbol{\eta}} \in \bdwords^{(k)}\) 
such that \(q \xRightarrow[\df{k}]{(X,w)}  w^{\boldsymbol{\eta}}\) and \( \age{ w^{\boldsymbol{\eta}} } = q'\).
It follows that, we can equivalently write $A^{\df{k}}_G
= \tuple{ \age{\bdwords^{(k)}}, \prod, \rightarrow}$ for the labeled
graph the edge relation, is defined as:
\(u^{\boldsymbol{\alpha}} \xrightarrow{p} v^{\boldsymbol{\beta}}\)
if{}f
\(\exists w^{\boldsymbol{\eta}} \in \bdwords^{(k)}  \ldotp
u^{\boldsymbol{\alpha}} \xRightarrow[\df{k}]{p}
w^{\boldsymbol{\eta}} \land v^{\boldsymbol{\beta}}
= \age{w^{\boldsymbol{\eta}}}\). 

\begin{proof}[of Lemma~\ref{fsa-dfk}]
  \noindent ``$\Rightarrow$'' We shall prove the following more
  general
  statement. Let \(u^{\boldsymbol{\alpha}} \xRightarrow[\df{k}]{\gamma}
  w^{\boldsymbol{\beta}}\)
  where \(u^{\boldsymbol{\alpha}} \in \bdwords^{(k)}\) be
  a \(k\)-index depth-first r-step sequence. By induction
  on \(\len{\gamma}\geq 0\), we show the existence of a
  path \(\age{u^{\boldsymbol{\alpha}}}\xrightarrow{\gamma} \age{w^{\boldsymbol{\beta}}}\)
  in \(A^{\df{k}}\).  For the base case \(\len{\gamma}=0\), we
  have \(u^{\boldsymbol{\alpha}} = w^{\boldsymbol{\beta}}\) which
  yields \(\age{u^{\boldsymbol{\alpha}}}=\age{w^{\boldsymbol{\beta}}}\)
  and since \(u^{\boldsymbol{\alpha}}\in\bdwords^{(k)}\) the
  hypothesis shows
  that \(u^{\boldsymbol{\alpha}},w^{\boldsymbol{\beta}}\in\bdwords^{(k)}\),
  hence
  that \(\age{u^{\boldsymbol{\alpha}}},\age{w^{\boldsymbol{\beta}}}\in\age{\bdwords^{(k)}}\)
  and we are done. For the induction step \(\len{\gamma}>0\), let \(
  v^{\boldsymbol{\eta}} \xRightarrow[\df{k}]{p}
  w^{\boldsymbol{\beta}} \) be the last step of the sequence, for
  some \( p \in \prod\), i.e.\ \(\gamma=\sigma \cdot p\)
  with \(\sigma\in\prod^*\). By the induction
  hypothesis, \(A^{\df{k}}\) has a
  path \(\age{u^{\boldsymbol{\alpha}}} \xrightarrow{\sigma} \age{v^{\boldsymbol{\eta}}}\).
  Since \(\age{v^{\boldsymbol{\eta}}}, \age{w^{\boldsymbol{\beta}}} \in \age{\bdwords^{(k)}}\)
  and \( v^{\boldsymbol{\eta}} \xRightarrow[\df{k}]{p}
  w^{\boldsymbol{\beta}}\), we have
  that \( \age{v^{\boldsymbol{\eta}}} \xrightarrow{p} \age{w^{\boldsymbol{\beta}}} \)
  by definition of \(\rightarrow\), hence we obtain a
  path \( \age{u^{\boldsymbol{\alpha}}} \xrightarrow{\gamma} \age{w^{\boldsymbol{\beta}}}\).
	
  \noindent ``$\Leftarrow$'' We prove a more general statement. Let
  $U \xrightarrow{\gamma} W$ be a path in $A^{\df{k}}_G$, for some
  words \(U,W\in\age{\bdwords^{(k)}}\). We show by induction on
  $\len{\gamma}$ that there exist r-words $u^{\boldsymbol{\alpha}},
  w^{\boldsymbol{\beta}} \in \bdwords^{(k)}$, such
  that \(\age{u^{\boldsymbol{\alpha}}} =
  U\), \(\age{w^{\boldsymbol{\beta}}} = W\),
  and \(u^{\boldsymbol{\alpha}} \xRightarrow[\df{k}]{\gamma}
  w^{\boldsymbol{\alpha}}\). The base case $\len{\gamma}=0$ is
  trivial, because \(U=W\) and since \(U\in\age{\bdwords^{(k)}}\) then
  there exists \(u^{\boldsymbol{\alpha}}\in \bdwords^{(k)}\) such
  that \(\age{u^{\boldsymbol{\alpha}}}=U=W\) and we are done.  For the
  induction step $\len{\gamma} > 0$, let \(\gamma = \sigma \cdot p\),
  for some production \(p \in \prod\) and \(\sigma \in \prod^*\). By
  the induction hypothesis, there exist r-words
  $u^{\boldsymbol{\alpha}}, v^{\boldsymbol{\eta}} \in \bdwords^{(k)}$
  such
  that \(U=\age{u^{\boldsymbol{\alpha}}} \xrightarrow{\sigma} \age{v^{\boldsymbol{\eta}}} \xrightarrow{p}
  W\) is a path in \(A^{\df{k}}\),
  and \(u^{\boldsymbol{\alpha}} \xRightarrow[\df{k}]{\sigma}
  v^{\boldsymbol{\eta}}\) is a \(k\)-index r-step sequence. The
  definition of the edge relation in $A^{\df{k}}$
  and \(\age{v^{\boldsymbol{\eta}}} \xrightarrow{p} w\) shows
  that \(v^{\boldsymbol{\eta}} \xRightarrow[\df{k}]{p}
  w^{\boldsymbol{\beta}}\) for
  some \(w^{\boldsymbol{\beta}}\in\bdwords^{(k)}\) such
  that \(\age{w^{\boldsymbol{\beta}}} = W\). 

For the upper bound on the size of \(A^{\df{k}}\), recall that each vertex of
$A^{\df{k}}$ is a ranked word of length at most $k$, consisting of
non-terminals only, with ranks in the interval \([0,k-1]\). Moreover, the
productions of \(G\) do not produce more than \(2\) nonterminals at a time.
Hence, in every vertex of \(A^{\df{k}}\), at most \(2\) positions carry the
same rank.  Since the length of each vertex in \(Q\) is at most \(k\) and, for
each \(i\in[0,k-1]\), there are at most \(\card{\Vars}^2\) choices of
nonterminals with rank \(i\), we have $\len{A^{\df{k}}_G} \leq
\card{\Vars}^{2k} \leq \len{G}^{2k}$.  \qed
\end{proof}

\subsection{Proof of Lemma~\ref{lem:ginsbook-d}}

When $L_{X,Y}(G) \subseteq \patt$, because \(\patt=a_{1}^* \ldots
a_{s}^*\) is a strict \(s\)-letter-bounded expression with \(s\) a
fixed constant, for every step sequence $X \xRightarrow{\gamma}_G u \,
Y \, v$, we have \(u\, v
= \proj{\gamma}{a_{1}} \ldots \proj{\gamma}{a_{s}}\). Also remark
that \(u\, v = a_1^{(\boldsymbol{v})_1} \ldots a_s^{(\boldsymbol{v})_s}\) for some \(
\boldsymbol{v}\in\nats^s\), hence that \( (\boldsymbol{v})_{\ell}
= \len{\proj{\gamma}{a_\ell}}\) for each \(\ell=1,\ldots,s\).  
For convenience, given \(\gamma\in\prod^{*}\), we
denote \(\projpatt{\gamma}
= \proj{\gamma}{a_1} \ldots \proj{\gamma}{a_s}\).

We recall the definition of the labeled graph
\(A^{\df{k}}= \tuple{ Q, \prod, \rightarrow}\)
whose number of vertices we denote by \(N\). Due to the form of the
productions in \(G\), we can safely restrict \(Q\) to r-words with at
most $2$ nonterminals having the same rank,
hence \(N \leq \len{G}^{2k}\). We define $\cycles{}{q}$ is the set of
elementary cycles with $q \in Q$ as endpoints.

\begin{proposition}\label{prop:weighted-graph-decomposition}
Let $G = \tuple{\Vars, \Sigma, \prod}$ be a grammar, $X \in \Vars$ be
a nonterminal and $\patt = a_1^* \ldots a_s^*$ be a strict $s$-letter bounded
expression, for some $s \geq 0$. For any two vertices
$q,q' \in Q$ of $A^{\df{k}}$, and any path
$\pi \in \Pi(q,q')$, there exists a path $\pi' \in \Pi(q,q')$ such
that $\len{\pi} = \len{\pi'}$, $\projpatt{\omega(\pi)}
= \projpatt{\omega(\pi')}$ and $\pi'$ is of the form
$\varsigma_1 \cdot \theta_1 \cdots \varsigma_\ell \cdot \theta_\ell \cdot \varsigma_{\ell+1}$,
where \(\varsigma_1 \in \Pi(q,q_{1})\), \(\varsigma_{\ell+1} \in \Pi(q_{\ell},q')\) and 
\(\varsigma_{j} \in \Pi(q_{{j-1}},q_{{j}})\), for
each \(1< j \leq \ell\), are acyclic paths,
$\theta_1 \in \cyclestar{}{q_{1}}, \ldots, \theta_\ell \in \cyclestar{}{q_{\ell}}$
are cycles, and $\ell \leq \card{Q}$.
\end{proposition}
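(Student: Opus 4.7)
I propose to prove the statement by strong induction on $\len{\pi}$, strengthening the inductive hypothesis to require that the attachment points $q_1,\ldots,q_\ell$ are pairwise distinct vertices of $Q$; this strengthened form immediately yields $\ell \leq \card{Q}$.

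\textbf{Base case.} If no vertex is repeated in the vertex sequence of $\pi$, then $\pi$ is already acyclic and we take $\ell=0$ and $\pi' = \pi$.

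\textbf{Inductive step.} Otherwise, let $j$ be the smallest index such that $v_j$ has appeared at some earlier position in $\pi$; by minimality of $j$, the earlier position $i$ is unique and the subwalk $C := \pi[i..j]$ is an elementary cycle at $v := v_i = v_j$, while $\pi[0..i]$ is itself acyclic. Set $\pi^- := \pi[0..i]\cdot\pi[j..n]$; then $\len{\pi^-} < \len{\pi}$, so the induction hypothesis yields a rearrangement $\tilde{\pi} = \tilde{\varsigma}_1\tilde{\theta}_1\cdots\tilde{\varsigma}_{m+1}$ of $\pi^-$ with distinct attachment points $p_1,\ldots,p_m$. Since $v$ is visited by $\pi^-$, it also appears somewhere in $\tilde{\pi}$, and we reinsert $C$ by cases: \textbf{(a)} if $v = p_r$ for some $r$, concatenate $C$ onto $\tilde{\theta}_r$, leaving the attachment set unchanged; \textbf{(b)} if $v$ occurs on some $\tilde{\varsigma}_r$, split $\tilde{\varsigma}_r$ at the occurrence of $v$ and insert $C$ as a new cycle block with $v$ as a fresh attachment point; \textbf{(c)} if $v$ occurs only inside some elementary cycle $D$ of some $\tilde{\theta}_r$ at $p_r$, rotate $D$ as $D^a \cdot D^b$ with $D^a \in \Pi(p_r,v)$ and $D^b \in \Pi(v,p_r)$ both acyclic (being halves of an elementary cycle), group all other elementary cycles of $\tilde{\theta}_r$ together before $D^a$ (permissible since cycles at the same vertex commute in the prescribed form), and replace the $D$-block by $D^a \cdot C \cdot D^b$ with $v$ as a new attachment point; if the trailing segment $D^b\cdot\tilde{\varsigma}_{r+1}$ fails to be acyclic, reapply the induction to it. In every case, the only attachment point possibly introduced is the fresh vertex $v \notin \{p_1,\ldots,p_m\}$, preserving distinctness.

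\textbf{Verification.} By construction, $\pi'$ is a reordering of the multiset of edges traversed by $\pi$, so $\len{\pi'} = \len{\pi}$, and moreover $\projpatt{\omega(\pi')} = \projpatt{\omega(\pi)}$ since $\projpatt{\cdot}$ depends only on the multiplicity of each letter $a_1,\ldots,a_s$ emitted by the productions. Distinctness of the attachment points throughout the induction gives the required bound $\ell \leq \card{Q}$.

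\textbf{Main obstacle.} Case (c) is the delicate one: reinserting $C$ inside an existing cycle $D$ could either duplicate $p_r$ as an attachment (if $p_r$-cycles occur on both sides of the insertion) or create a non-acyclic continuation after absorbing $D^b$ into $\tilde{\varsigma}_{r+1}$. The former is handled by commuting all $p_r$-cycles to a single side of the insertion, and the latter by a nested inductive application to the spoiled continuation, each step of which introduces only fresh attachment points and thereby preserves the overall bound $\ell \leq \card{Q}$.
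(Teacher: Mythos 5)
Your overall strategy (peel off the first elementary cycle, apply the induction hypothesis, and reinsert) is reasonable, and cases (a) and (b) are sound. The proof breaks, however, exactly at the point you flag as the ``main obstacle'': in case (c) the claim that the nested inductive application to $D^b\cdot\tilde{\varsigma}_{r+1}$ ``introduces only fresh attachment points'' is unjustified, and it is false in general. The segment $D^b$ is a piece of an elementary cycle $D$ rooted at $p_r$, and nothing prevents $D$ from wandering through vertices that already serve as attachment points elsewhere in $\tilde{\pi}$ (for instance $p_{r+1}$, or some $p_{r'}$ with $r'\neq r$), or through interior vertices of other acyclic segments. Whenever $D^b$ and $\tilde{\varsigma}_{r+1}$ share such a vertex $w$, the nested decomposition of $D^b\cdot\tilde{\varsigma}_{r+1}$ will root a cycle block at $w$, duplicating an attachment point that already exists in the outer decomposition. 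Since your entire derivation of the bound $\ell \leq \card{Q}$ rests on pairwise distinctness of attachment points, a single such duplication (and they can accumulate across the recursion) voids the bound; the recursion gives you \emph{some} decomposition of the required shape, but with no control on $\ell$.

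The missing idea is the one the paper leans on when it invokes Lin's Lemma~7.3.2: duplicate attachment points can be \emph{merged}. Two cycle blocks rooted at the same vertex $v$, sitting at different occurrences of $v$ along the path, can be transported to a single occurrence of $v$; this keeps the edge multiset (hence the length) intact, keeps the result a valid path, and preserves $\projpatt{\omega(\cdot)}$ precisely because the projection is invariant under permutation of the productions --- this commutativity is the only property of the labels that the argument needs, which is why the paper can transfer Lin's integer-tuple version verbatim. After a merge, the two segments adjacent to the vacated block get concatenated and may cease to be acyclic, so one must re-extract cycles and re-merge, and a termination measure (e.g.\ the total length of the skeleton, which strictly decreases with every extraction) is needed to close the loop. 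Your draft contains neither the merging step nor the termination argument, so as written the induction does not go through; note also, as a minor point, that your case analysis tacitly uses a stronger induction hypothesis than the one you state (that $\tilde{\pi}$ is an edge-multiset permutation of $\pi^-$, so that $v$ still occurs in it) --- that strengthening is harmless but should be made explicit.
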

\begin{proof} 
The proof goes along the lines of that of Lemma~7.3.2 in Lin's PhD thesis \cite{ToThesis}. This
proof is carried on graphs labeled with integer tuples, and addition,
instead of concatenation. Since the only property of integer tuple
addition, used in the proof of \cite[Lemma~7.3.2]{ToThesis}, is
commutativity, it suffices to observe that $\projpatt{\omega(\pi)}
= \projpatt{\omega(\pi')}$, whenever $\omega(\pi)$ is a permutation of
$\omega(\pi')$.\qed
\end{proof}

\begin{proof}[of Lemma~\ref{lem:ginsbook-d}]
Given two step sequences $X \xRightarrow{\gamma}_G u \, Y \, v$,
$X \xRightarrow{\gamma'}_G u' \, Y \, v'$, the following are equivalent:
\begin{compactitem}
\item \( \len{\proj{\gamma}{a_{\ell}}} = \len{\proj{\gamma'}{a_{\ell}}} \) for all \(\ell=1,\ldots,s\),
\item \( \projpatt{\gamma} = \projpatt{\gamma'}\),
\item \( u\, v = u'\, v'\).
\end{compactitem}
Since \(L_{X,Y}(G) \subseteq \patt\) where \(\patt\) is a
strict \(s\)-letter bounded expression, for
every \(\pi \in \cycles{}{q}\) the induced word $a_1^{k_1} \ldots
a_s^{k_s} = \projpatt{\omega(\pi)}$ is such that: $\sum_{j=1}^s
k_{j} \leq 2N$, i.e.\ each production in $\prod$ issues at most $2$
symbols from $\{a_1,\ldots,a_s\}$, and each elementary cycle is of
length at most $N$. The nonnegative solutions of the inequation
$\sum_{j=1}^s k_{j} \leq 2N$ are solutions to the equation
$\sum_{j=1}^s k_{j} + y = 2N$, for a nonnegative slack variable
$y \geq 0$. Since the number of nonnegative solutions to the latter
equation\footnote{The number of nonnegative solutions of an equation
$n=x_1+\cdots+x_m$ is $\binom{m+n-1}{m-1}$.} is $\binom{s+2N}{s}$, we
have:
\begin{equation}\label{eq:binom}
	\card{ \{ \projpatt{\omega(\pi)} \mid \pi \in \cycles{}{q}\} } = \binom{s+2N}{s} = \mathcal{O}(N^s)\enspace .
\end{equation}
For each vertex \(q\), we are interested in a
set \(C_q \subseteq \cycles{}{q}\) such that \(\card{C_q}
= \mathcal{O}(N^{s})\) and,
moreover, for each \(\pi\in\cycles{}{q}\) there exists \(\pi' \in C_q\) such that
\(\projpatt{\omega(\pi)} = \projpatt{\omega(\pi')}\)
when
\(\Pi(X^{\tuple{0}},q) \neq \emptyset\) and \(\Pi(q, Y^{\tuple{0}}) \neq \emptyset\) holds.

For now we assume we have computed such sets \( \set{C_q}_{q \in
Q}\) (their effective computation will be described later).
We are now ready to define the bounded expression $\pattg$. Given a
finite set $\Gamma = \{\gamma_1,\ldots,\gamma_n\} \subseteq \prod^*$
of control words indexed following some total ordering (e.g.\ we assume a
total order $\prec$ on $\Vars \cup \mathcal{A}$, and define $(X_1,w_1)
\prec_\prod (X_2,w_2) \iff X_1\cdot w_1 \prec^{lex} X_2\cdot w_2$ in
the lexicographical extension of $\prec$, then extend $\prec_\prod$ to
a lexicographical order $\prec_\prod^{lex}$ on control words), we
define the bounded expression: 
\(\mathit{concat}(\Gamma) = \gamma_1^* \cdots \gamma_n^*\).
Let \(Q = \set{q_1, \ldots, q_N}\) be the set of
vertices of \(A^{\df{k}}\), taken in some order. We define the
set \( \set{B_i}_{i\geq 0}\) of bounded expressions as follows:
\begin{align*}
	B_0 & =  \mathit{concat}(\{ \omega(\pi) \mid \pi \in C_{q_1}\}) \cdots \mathit{concat}(\{ \omega(\pi) \mid \pi \in C_{q_N}\}) \\
B_1 & = \mathit{concat}(\prod)^{N-1} \cdot B_0 \cdot \mathit{concat}(\prod)^{N-1}\\
B_i & =  \mathit{concat}(\prod)^{N-1} \cdot B_0 \cdot B_{i-1}, \text{ for all } i \geq 2
\shortintertext{Finally, let:}
\pattg & = B_N\enspace.
\end{align*}
Let us now prove the language inclusion.

\vspace*{\baselineskip}\noindent 
It follows from Theorem~\ref{thm:luker}, that $L^{(k)}_{X,Y}(G)
= \hat{L}_{X,Y}(\Gamma_{X,Y}^{\df{k}}, G)$ for every \(X\in \Vars\),
\(Y\in\Vars\cup\set{\varepsilon}\) and \(k>0\). Hence we trivially have
$\hat{L}_{X,Y}(\pattg \cap \Gamma_{X,Y}^{\df{k}}, G) \subseteq
\hat{L}_{X,Y}(\Gamma_{X,Y}^{\df{k}}, G) = L^{(k)}_{X,Y}(G)$.  For the contrapositive
$L^{(k)}_{X,Y}(G) \subseteq \hat{L}_{X,Y}(\pattg \cap \Gamma_{X,Y}^{\df{k}},G)$, it
suffices to show the following: given a $k$-index depth first step sequence \(
X \xRightarrow[\df{k}]{\gamma} u \, Y \, v\), there exists a control word
\(\gamma' \in \pattg\) such that \(X \xRightarrow[\df{k}]{\gamma'} u'\, Y\,
v'\) and \(u\, v = u'\, v'\).

Because Lemma~\ref{fsa-dfk} shows that each
path \(\pi \in \Pi( X^{\tuple{0}}, Y^{\tuple{0}} ) \) corresponds to a
control word \(\omega(\pi)\) such
that \(X \xRightarrow[\df{k}]{\omega(\pi)} u\, Y\, v\), and because \( L_{X,Y}^{(k)}(G)\subseteq \patt\) where \(\patt\) is a
strict \(s\)-letter bounded expression, it suffices to show that
exists a path \(\rho \in \Pi( X^{\tuple{0}}, Y^{\tuple{0}} ) \) such
that \(\omega(\rho) \in \pattg\)
and \( \projpatt{\omega(\pi)} = \projpatt{\omega(\rho)}\).
We apply the result from Prop.~\ref{prop:weighted-graph-decomposition}
which shows that there exists a path
\(\rho \in \Pi( X^{\tuple{0}}, Y^{\tuple{0}} )\), such that $\len{\rho} = \len{\pi}$,
$\projpatt{\omega(\rho)} = \projpatt{\omega(\pi)}$ and $\rho$ is of
the form
$\varsigma_1 \cdot \theta_1 \cdots \varsigma_\ell \cdot \theta_\ell \cdot
\varsigma_{\ell+1}$, where 
\(\varsigma_1     \in \Pi(X^{\tuple{0}},q_{i_1})\), 
\(\varsigma_{\ell+1} \in \Pi(q_{i_\ell}, Y^{\tuple{0}})\), and 
\(\varsigma_{j}   \in \Pi(q_{i_{j-1}},q_{i_{j}})\) for each \(1< j \leq \ell\)
are acyclic paths,
$\theta_1 \in \cyclestar{}{q_{i_1}}, \ldots, \theta_\ell \in \cyclestar{}{q_{i_\ell}}$
are cycles, $q_{i_1}, \ldots, q_{i_{\ell}}$ are vertices, and
$\ell \leq \card{Q}$. Hence we conclude that
\begin{compactitem}
\item $\omega(\varsigma_j) \in \mathit{concat}(\prod)^{N-1}$, for all $1 \leq j \leq \ell+1$, 
\item for each cycle $\theta_j \in \cyclestar{}{q_{i_j}}$, consisting of a concatenation of 
several elementary cycles
$\theta^1_j, \ldots, \theta^{\ell_j}_j \in \cycles{}{q_{i_j}}$, the cycle
$\theta^{lex}_j$ obtained by a lexicographic reordering of
$\theta^1_j, \ldots, \theta^{\ell_j}_j$ (based on the lexicographic order
of their value in $\prod^*$) belongs to $B_0$, for all $1 \leq j \leq
\ell$.  Second, it is easy to see that the words produced by $\theta_j$
and $\theta^{lex}_j$ are the same, since the order of productions
labeling $\theta_j$ ($\theta^{lex}_j$) is not important.
\end{compactitem}
Let $\pi'$ be the path
$\varsigma_1 \cdot \theta^{lex}_1 \cdots \varsigma_\ell \cdot \theta^{lex}_\ell \cdot
\varsigma_{\ell+1}$. By Prop.~\ref{prop:weighted-graph-decomposition}, we have that 
$\projpatt{\omega(\pi)} = \projpatt{\omega(\pi')}$.
Moreover, $\omega(\pi') \in B_N = \pattg$.  Since $X \xRightarrow[\df{k}]{\omega(\pi)} u \,
Y \, v$ and $X
\xRightarrow[\df{k}]{\omega(\pi')} u' \, Y \, v'$ are step sequences
of $G$, the previous equality implies $u \, v = u' \, v'$.

Concerning the time needed to construct the bounded
expression \(\pattg\), the main ingredient in the previous, is the
definition of the sets of
cycles \(\set{C_q}_{q\in Q}\), such that $\card{C_q}
= \mathcal{O}(N^s)$ and, moreover, for
each \(\pi\in\cycles{}{q}\) there exists \(\pi' \in C_q\) such that
\(\projpatt{\omega(\pi)}= \projpatt{\omega(\pi')}\)
when
\(\Pi(X^{\tuple{0}},q) \neq \emptyset\) and \(\Pi(q, Y^{\tuple{0}}) \neq \emptyset\) holds.
Below we describe the construction of such sets.

Define \(\mathit{Val} = \{a_1^{\ell_1} \ldots
a_s^{\ell_s} \in \patt \mid \sum_{j=1}^s \ell_{j} \leq 2N\}\).
Using previous arguments (i.e. equation (\ref{eq:binom})), it is routine to
check that \(\card{\mathit{Val}} = \mathcal{O}(N^{s})\). Consider the
labeled graph $\mathcal{H} = \tuple{V, \prod, \arrow{}{}}$, defined upon 
\(A^{\df{k}}\), where:
\begin{compactitem}
\item $V = Q \times \mathit{Val}$, and
\item \(\tuple{q',a_1^{i_1}\ldots a_s^{i_s}} \arrow{(Z,z)}{}
	\tuple{q'', a_1^{j_1}\ldots a_s^{j_s}}\) if{}f \(q' \arrow{(Z,z)}{A^{\df{k}}} q''\) 
        and \(a^{j_{\ell}}_{\ell} = a^{i_\ell}_{\ell}\cdot \proj{z}{a_{\ell}}\) for each \(\ell\)
\end{compactitem}

First, observe that the number of vertices in this graph is \(\card{V}
\leq N^{2k} \cdot \binom{s+2N}{s} = \len{G}^{\mathcal{O}(k)}\). 
Second, it is routine to check (by induction on the length of a path)
 that given a
 path \(\pi \in \Pi_{\mathcal{H}}(\tuple{q,\varepsilon}, \tuple{q,a_1^{i_1}\ldots a_s^{i_s}})\)
 for some \(i_1,\ldots,i_s \in\nats\) we have \( \projpatt{\omega(\pi)} = a_1^{i_1}\ldots a_s^{i_s}\). Next, for
 each \(q\in Q\) define the set \(\mathcal{P}_{q}\)
 of paths of \(\mathcal{H}\) consisting for
 each \(a_1^{i_1}\ldots a_s^{i_s}\in\mathit{Val}\) of a single path (one with the least
 number of edges) from \(\tuple{q,\varepsilon}\) to \(\tuple{q,a_1^{i_1}\ldots a_s^{i_s}}\).
 By definition of \(\mathit{Val}\), we have
 that \(\card{\mathcal{P}_q} = \card{\mathit{Val}}
 = \mathcal{O}(N^s)\) and,
 moreover, for each \(\rho \in \cycles{}{q}\) (\(\rho\) is a path of \(A^{\df{k}}\)) there exists a path 
 \(\pi\in\mathcal{P}_q\) such that \( 
 \projpatt{\omega(\rho)} = \projpatt{\omega(\pi)} = a_1^{i_1} \ldots a_s^{i_s} \)
 where \( \tuple{q,\varepsilon} \) and \( \tuple{q, a_1^{i_1} \ldots a_s^{i_s}} \) are the endpoints of \(\pi\). 

Hence, we define $C_q$ to be the set of cycles in $A^{\df{k}}$ corresponding to the paths in
$\mathcal{P}_q$.  The latter can be computed applying Dijkstra's single source
shortest path algorithm on $\mathcal{H}$, with source vertex
$\tuple{q, \varepsilon}$, and assuming that the distance between adjacent
vertices is always 1. The running time of the Dijkstra's algorithm is
$\mathcal{O}(\card{V}^2) = \len{G}^{\mathcal{O}(k)}$. Upon
termination, one can reconstruct a shortest path $\pi$ from
$\tuple{q, \varepsilon}$ to each vertex $\tuple{q, a_1^{i_1}\ldots a_s^{i_s}}$, and add the
corresponding cycle of $A^{\df{k}}$ to $C_q$. Since there are at most
$\len{G}^{\mathcal{O}(k)}$ vertices $\tuple{q, a_1^{i_1}\ldots a_s^{i_s}}$ in $V$, and
building a shortest path for each such vertex takes at most
$\len{G}^{\mathcal{O}(k)}$ time, we can populate the set $C_q$ in time
$\len{G}^{\mathcal{O}(k)}$. Once the sets $C_q$ are built, it remains
to compute the bounded expressions $\mathit{concat}(\{ \omega(\pi) \mid \pi \in C_{q}\})$,
$\mathit{concat}(\Delta)^{N-1}$ and $B_0, \ldots, B_N$.  As shown
below, they are all computable in time $\len{G}^{\mathcal{O}(k)}$.

Algorithm~\ref{alg:constant-control-set} gives the construction
of \(\pattg\). An upper bound on the time needed for
building \(\pattg\) can be derived by a close analysis of the running
time of Algorithm~\ref{alg:constant-control-set}. The input to the
algorithm is a grammar $G$, a strict $s$-letter bounded expression
$\patt$ and an integer $k > 0$. First
(lines~\ref{ln:h-start}--\ref{ln:h-end}) the algorithm builds the
$\mathcal{H}$ graph, which takes time $\len{G}^{\mathcal{O}(k)}$. The
loop on (lines~\ref{ln:b0-start}--\ref{ln:b0-end}) computes, for each
vertex $q \in Q$, and each \(s\)-dimensional vector
$\boldsymbol{v} \in \mathit{Val}$, an elementary path from
$\tuple{q,\varepsilon}$ to $\tuple{q,a_1^{(\boldsymbol{v})_1}\ldots a_s^{(\boldsymbol{v})_s}}$ in
$\mathcal{H}$.  For each $q$, this set is kept in a variable
$C_q$ (line~\ref{ln:cq}). The variable $B_0$ at the end of
the loop contains the expression
$\mathit{concat}(\{ \omega(\pi) \mid \pi \in \mathcal{P}_{q_1}\}) \cdots \mathit{concat}(\{ \omega(\pi) \mid \pi \in \mathcal{P}_{q_N}\})$,
Since both $\card{Q} = \len{G}^{\mathcal{O}(k)}$
and $\card{\mathit{Val}} = \len{G}^{\mathcal{O}(k)}$, the loop at
(lines~\ref{ln:b0-start}--\ref{ln:b0-end}) takes time
$\len{G}^{\mathcal{O}(k)}$ as well.

The remaining part of the algorithm computes first an
over-approximation of $\mathit{concat}(\prod)^{N-1}$
(lines~\ref{ln:c-start}--\ref{ln:c-end}) in the variable $C$---observe
that the algorithm computes $\mathit{concat}(\prod)^{\len{G}^{2k}}-1$
instead of $\mathit{concat}(\prod)^{N-1}$. Finally, the control set
$\pattg$ with the needed property is produced by
$\len{G}^{2k} \geq N$ repeated concatenations of the bounded
expression $C \cdot B_0$, at lines
(\ref{ln:pat-start}--\ref{ln:pat-end}). Since both loops take time at
most $\len{G}^{2k}$, we conclude that Algorithm~\ref{alg:constant-control-set} 
runs in time $\len{G}^{\mathcal{O}(k)}$.\qed
\end{proof}

\subsection{Proof of Lemma~\ref{lem:ginsbook-surgery}}

A grammar \(G\) is said to be \emph{reduced}
for \(X\) if{}f $L_{X,Y}(G) \neq \emptyset$
and \(L_{Y}(G)\neq \emptyset\), for every \(Y \in \Vars\), \(X\neq
Y\). A grammar can be reduced in polynomial time, by eliminating
unreachable and unproductive
nonterminals \cite[Lemma~1.4.4]{ginsburg}.

\begin{proof}[of Lemma~\ref{lem:ginsbook-surgery}]
We start by proving a series of five facts.  
\begin{compactenum}[\upshape(\itshape i\upshape)]
\item\label{item:easyfact1} First, no production of \(G\) has the form
  \( (Y,v) \), where \(Y\in\Varsi\) and \(v\) contains a symbol of
  \(\Varse\).  By contradiction, assume such a production exists where
  \(Z \in \Varse\) is a nonterminal occurring in \(v\).  Because \(Z
  \in \Varse\), \(a_1\) occurs in some word of \(L_{Z}(G)\) and
  \(a_d\) occurs in some word of \(L_Z(G)\).  On the other hand, we
  have that either no word of \(L_{Y}(G)\) contains \(a_1\) or no
  word of \(L_{Y}(G)\) contains \(a_d\), since \(Y\in\Varsi\). Because
  \(G\) is reduced, we have \( \set{u \mid v
    \xRightarrow{}^* u} \neq \emptyset \). We reach a contradiction,
  since \( \{u \mid Y \xRightarrow{(Y,v)} v \xRightarrow{}^* u\} \)
  contains a word in which \(a_1\) occurs and a word in which \(a_d\)
  occurs, because \(Z\) occurs in \(v\).

  \item\label{item:easyfact2} Define \(Q(u,v)\) to be the following proposition:
\begin{align*}
\set{u'\in (\Vars\cup\mathcal{A})^* \mid u\Rightarrow^* u' } &\subseteq ( \set{a_1} \cup \Varsi)^*\\
\shortintertext{\textbf{and}}
\set{v'\in (\Vars\cup\mathcal{A})^* \mid v\Rightarrow^* v' } &\subseteq ( \set{a_d} \cup \Varsi)^*\enspace.
\end{align*}
We show that \( Q(u,v)\) holds if \(X_i \Rightarrow^* u\, X_j\, v\)
with \(X_i,X_j\in\Varse\). By contradiction, assume that there exists
\(u'\) such that \(u \xRightarrow{}^* u'\) and \(u' \notin ( \set{a_1}
\cup \Varsi)^*\) (a similar argument holds for \(v\)).
Then either \begin{inparaenum}[\upshape(\itshape a\upshape)]
\item \(u'\) contains a symbol \(a_\ell\), for \(\ell>1\) or 
\item \(u'\) contains a nonterminal \(Z \in \Varse\).
\end{inparaenum}
Because \(G\) is reduced, we have \(\set{u' \mid u \xRightarrow{}^* u'}
\neq \emptyset\). In either case (a) or (b), there exists a step
sequence \(u' \Rightarrow^* u_1\, a_\ell\, u_2 \in \mathcal{A}^* \) such
that \(\ell>1\). Since \(X_j\in\Varse\), we have that \(X_j\, v
\Rightarrow^* a_1\, u_3 \in \mathcal{A}^*\), hence that \(
X_i\Rightarrow^* u_1\, a_\ell\, u_2\, a_1\, u_3 \) and finally that
\(L_X(G) \nsubseteq \patt\), since \(G\) is reduced, a contradiction.

\item\label{item:easyfact3} For every step sequence \(X_j
  \Rightarrow^* x\), where \(X_j\in\Varse\), \(x\) cannot be of the
  form \(u_1\, X_d\, u_2\, X_e\, u_3\) where \(X_d,X_e\in \Varse\).
  In fact, take the decomposition \(u = u_1 \) and \( v = u_2\, X_e\,
  u_3 \) (the case \(u = u_1\, X_d\, u_2\) and \( v = u_3\) yields the
  same result). Because \( (\ref{item:easyfact2}) \) applies, we find
	that \( Q(u,v)\) holds but \(v\notin (\set{a_d}\cup \mathcal{A} \cup\Varsi)^*\), hence a contradiction.

\item\label{item:easyfact4} If \( X \xRightarrow{\gamma}_G u\, X_i\,
  v\) is a step sequence of \(G\), for some \(X_i\in\Varse\),
  \(\gamma\in\prod^*\) then \( X \xRightarrow{\gamma}_{G^\sharp} u\,
  X_i\, v\) is also a step sequence of \(G^\sharp\). The proof goes by
  induction on \(n=\len{\gamma}\). Let \( X = w_0
	\xRightarrow{(\gamma)_{1}}_G w_1 \cdots w_{n-1} \xRightarrow{(\gamma)_{n}}_G
  w_n = u\, X_i\, v\).  If \(n=0\) then \(\gamma=\varepsilon\),
  \(X=X_i\in\Varse\) and \(u=v=\varepsilon\), which trivially yields a
  step sequence of \(G^\sharp\). For the inductive case, because of \(
	(\ref{item:easyfact1}) \) we find that, necessarily, \( (w_{n-1})_{\ell} \in \Varse \) for some \(\ell\).  We thus can apply the
  induction hypothesis onto \( X \xRightarrow{(\gamma)_{1}\ldots
    (\gamma)_{n-1}}_G w_{n-1}\) and conclude that \( X
  \xRightarrow{(\gamma)_{1}\ldots (\gamma)_{n-1}}_{G^\sharp}
	w_{n-1}\). Next, since \(w_{n-1}\xRightarrow{(\gamma)_n} w_{n}\) it cannot be the case that
	\(w_{n-1}\xRightarrow{ (\gamma)_n / p} w_{n}\) where \(p\neq \ell\) and \( (\gamma)_n=(Y,t) \) with \(Y\in \Varse\) for otherwise
	\(X \Rightarrow^*_G w_{n-1}\) contradicts \((\ref{item:easyfact3})\) (recall that both \( (w_{n-1})_{\ell} \) and \( X \) belong to \(\Varse\)).
	Thus we have \((\gamma)_n\in\prod^\sharp\),
  hence \( w_{n-1} \xRightarrow{(\gamma)_n}_{G^\sharp} w_n\), and
  finally \( X \xRightarrow{\gamma}_{G^\sharp} u\, X_i\, v\).
\item\label{item:easyfact5} If \(L_1,L_2 \subseteq \patt\) and
  \(L_1\cdot L_2 \subseteq a_{\ell}^* \ldots a_r^*\), for some \(1
  \leq \ell \leq r \leq d\), then there exists \(\ell \leq q \leq r\) such that \(L_1 \subseteq a_{\ell}^* \ldots a_q^*\) and \(L_2
  \subseteq a_{q}^* \ldots a_{r}^*\). Assume, by contradiction, that
  there is no such \(q\).  Then there exist words \(w_1 =
  a_{\ell}^{i_{\ell}}\ldots a_{r}^{i_r}\in L_1\) and \( w_2 =
  a_{\ell}^{j_{\ell}}\ldots a_{r}^{j_r} \in L_2\), two positions
  \(p_1,p_2\) such that \(\ell\leq p_2 < p_1 \leq r\) such that \(
  i_{p_1}\neq 0 \), \( j_{p_2}\neq 0 \). Because all \(a_i\) are
  distinct, we conclude that \(w_1 \cdot w_2 \notin a_{\ell}^* \ldots
  a_r^*\), hence a contradiction.
\end{compactenum}
We continue with the proof of the five items of the lemma:
\begin{compactenum}[1.]
\item The derivation $X \xRightarrow[\df{k}]{\gamma} w$, where \(\len{\gamma} = n\), has a unique
  corresponding r-step sequence $X^{\tuple{0}}=w_0^{\boldsymbol{\alpha}_0}
  \xRightarrow{ (\gamma)_1 } w_1^{\boldsymbol{\alpha}_1} \ldots
  \xRightarrow{ (\gamma)_n } w_n^{\boldsymbol{\alpha}_{n}}=w^{\boldsymbol{\alpha}_{n}}$.  Now, we define a
  \emph{parent} relationship in that step sequence, denoted
  \(\triangleleft\), between r-annotated nonterminals: $Y^{\tuple{a}}
  \mathbin{\triangleleft} Z^{\tuple{b}}$
  if{}f there exists a step in the sequence that rewrites
  \(Y^{\tuple{a}}\) to \(Z^{\tuple{b}}\), that is \(u^{\boldsymbol{\alpha}} \xRightarrow{(Y,t)/j} v^{\boldsymbol{\beta}}\)
	where \( (u^{\boldsymbol{\alpha}})_{j} = Y^{\tuple{a}} \), and \( (v^{\boldsymbol{\beta}})_{\ell} = Z^{\tuple{b}} \) for some \(j\leq \ell \leq j-1+\len{t}\).

  Let $(\gamma)_{\ell_p} = (X_{i_p},a\, y\, b\, z)$ be the last occurrence, in
  \(\gamma\), of a production with head $X_{i_p} \in \Varse$.  Notice
  that such an occurrence always exists since \(X \in \Varse\) and
	moreover we have that \(a,b\in \mathcal{A} \cup \set{\varepsilon}\), \(y,z \in \Varsi \cup 
  \set{\varepsilon}\). In fact, since \(\gamma\) is a derivation, if
  \(y\in\Varse\) or \(z\in\Varse\) then \( (\gamma)_{\ell_p} \) would
  clearly not be the last such occurrence. Let $X=X_{i_0}^{\tuple{r_0}}
  \mathbin{\triangleleft} X_{i_1}^{\tuple{r_1}} \mathbin{\triangleleft}
  \cdots \mathbin{\triangleleft} X_{i_p}^{\tuple{r_p}}$ be the
  sequence of ranked ancestors of $X_{i_p}$ in the r-step
  sequence, and \((\gamma)_{\ell_j} = (X_{i_j},a\, y_{m_j}\, b\, X_{i_{j+1}}) \in \prod\) (or, symmetrically
\( (\gamma)_{\ell_j} = (X_{i_j}, a\, X_{i_{j+1}}\, b\, z_{m_j}) \in \prod \)),
	for some \(a,b\in \mathcal{A}\cup\set{\varepsilon}\), $z_{m_j}, y_{m_j} \in
	\Vars \cup \{\varepsilon\}$, be the productions introducing these
	nonterminals, for all $0 \leq j < p$.

  If $y_{m_j} \in \Vars$, let $\overline{\gamma}_j$ be the subword of
  $\gamma$ corresponding to the derivation $y_{m_j}
  \xRightarrow{\overline{\gamma}_j} w_{m_j}$, for some $w_{m_j} \in
	\mathcal{A}^* $. Notice that no \(X_{i_{\ell}}\) has $y_{m_j}$ for ancestor, and that $y_{m_j}
  \xRightarrow{\overline{\gamma}_j} w_{m_j}$ must be a depth-first
	derivation because $X \xRightarrow{\gamma} w$ is. Otherwise, if $y_{m_j} =
	\varepsilon$, let $\overline{\gamma}_j = \varepsilon$. Let $\gamma^\sharp =
	(\gamma)_{\ell_0} \cdot \overline{\gamma}_0 \cdot
  (\gamma)_{\ell_1} \cdot \overline{\gamma}_1 \cdots
  (\gamma)_{\ell_{p{-}1}} \cdot \overline{\gamma}_{p{-}1}$. Observe
  that, since each $y_{m_j} \xRightarrow{\overline{\gamma}_j} w_{m_j}$
  is a depth-first derivation, we have $X_{i_{j+1}}^{\tuple{b}}
  y_{m_j}^{\tuple{b}} \xRightarrow{\overline{\gamma}_j}
  X_{i_{j+1}}^{\tuple{b}} w_{m_j}^{\boldsymbol{\alpha}}$ (or with
  \(X_{i_{j+1}}\) and \(y_{m_j}\) swapped) is a depth-first step
  sequence because \( y_{m_j} \) and \( X_{i_{j+1}} \) have the same
  rank \(b\).  Clearly, $\gamma^\sharp$ corresponds to a valid
  step sequence of $G$ which, moreover, is depth first, since whenever
  $ (\gamma)_{\ell_j} $ fires, \(X_{i_j}\) is the only nonterminal left
  (and whose rank is therefore maximal). It follows from \(
  (\ref{item:easyfact4})\) that because \(X
  \xRightarrow{\gamma^\sharp}_G u \, X_{i_p} \, v\) holds and
  \(X, X_{i_p}\in \Varse\) then \(X
  \xRightarrow{\gamma^\sharp}_{G^\sharp} u \, X_{i_p} \, v\)
  holds (notice the use of \(G^\sharp\) instead of \(G\)).  Moreover,
  the definition of \(\gamma^\sharp\) shows that $X
  \xRightarrow{\gamma^\sharp}_{G^\sharp} u \, X_{i_p} \, v$ is a
  depth-first step sequence and $u, v \in \mathcal{A}^* $.

  Since $X \xRightarrow{\gamma}_G w$ is a $k$-index derivation, each
  step sequence \(y_{m_j} \xRightarrow{\overline{\gamma_j}} w_{m_j}\)
  are of index at most $k$.  Therefore the index of each step sequence
  \(X_{i_{j+1}} y_{m_j} \xRightarrow{\overline{\gamma_j}} X_{i_{j+1}}
  w_{m_j}\) (or in reverse order) is at most \(k+1\). Also, when each
  $ (\gamma)_{\ell_j} $ fires, \(X_{i_j}\) is the only nonterminal left
  and so the index of the step is at most $2$. Therefore we find that
  \(X \xRightarrow[(k+1)]{\gamma^\sharp} u \, X_{i_p} \, v\),
  and finally that \( X \xRightarrow[\df{k+1}]{\gamma^\sharp} u\,
  X_{i_p}\, v\) in $G^\sharp$.
  
  \item Assume that $y,z \in \Varsi$ (the cases $y = \varepsilon$ or $z = \varepsilon$
    are similar).  Since \(\gamma\) of length \(n\) induces a
    $k$-index depth first derivation, we have that
		$y\, z \xRightarrow[\df{k}]{(\gamma)_{\ell_{p}+1} \ldots (\gamma)_n} u_y\, u_z \in \mathcal{A}^*$ can be split
		into two derivations of \(G\) as follows: $y \xRightarrow[\df{k_y}]{\gamma_{y}} u_y$ and $z \xRightarrow[\df{k_z}]{\gamma_{z}} u_z$ such that
		\(\max(k_z,k_y)\leq k\) and \(\min(k_z,k_y)\leq k-1 \) (see Lem.~\ref{lem:leibniz} for a proof).
		Assume \(k_y \leq k-1\), the other case being symmetric.
		Since the only production in
		$(\gamma)_{\ell_{p}}\cdots (\gamma)_n$ whose left hand
    side is a nonterminal from $\Varse$ is $ (\gamma)_{\ell_p} =(X_{i_p},
    a\, y\, b\, z)$, which, moreover, occurs only in the first position, we have
		that $\gamma_y \in \Gamma_y^{\df{k-1}}(G_{i_p,aybz})$ and $\gamma_z \in
    \Gamma_z^{\df{k}}(G_{i_p,aybz})$, by the definition of $G_{i_p,aybz}$.

	\item It suffices to notice that \(\gamma^\sharp\cdot (\gamma)_{\ell_p}\cdots (\gamma)_n\) results from reordering the productions of \(\gamma\)
    and that reordering the productions of \(\gamma\) result into a step sequence producing the same word \(w = a_1^{i_1}\ldots
    a_d^{i_d}\) since \(L_X(G)\subseteq \patt\) where \(\patt\) is a strict \(d\)-letter bounded expression.
		That the resulting derivation has index \(k\) and is depth-first follow easily from \((\ref{item1:ginsbook-surgery})\) and  \((\ref{item2:ginsbook-surgery})\).

  \item Given that \(\prod^\sharp \subseteq \prod\) we find that \( X
    \xRightarrow{}^*_{G^\sharp} u\, X_{i_p}\, v \) implies \( X
    \xRightarrow{}^*_{G} u\, X_{i_p}\, v \), hence \( Q(u,v) \) holds by
    \( (\ref{item:easyfact2}) \) and \( X, X_{i_p} \in \Varse \). By the
    definition of \( Q(u,v) \), we have: 
    \[\begin{array}{rcl}
    \set{u' \in (\Vars \cup \mathcal{A})^* \mid u\xRightarrow{}^* u' } & \subseteq & ( \set{a_1}
    \cup \Varsi)^* ~\mbox{and} \\ 
    \set{v' \in (\Vars\cup\mathcal{A})^* \mid v \xRightarrow{}^* v' } & \subseteq & 
    ( \set{a_d} \cup \Varsi)^*
    \end{array}\] 
    Since \(G\) is reduced, $\set{u' \in \mathcal{A}^*
      \mid u \xRightarrow{}^* u' } \neq \emptyset$ and $\set{v' \in
      \mathcal{A}^* \mid v \xRightarrow{}^* v' } \neq \emptyset$.  But
    because $X_{i_p} \in \Varse$, it must be the case that $\set{u' \in
      \mathcal{A}^* \mid u \xRightarrow{}^* u' } \subseteq a_1^*$ and
    $\set{v' \in \mathcal{A}^* \mid v \xRightarrow{}^* v' } \subseteq
    a_d^*$, otherwise we would contradict the fact that
    \(L_{X}(G)\subseteq\patt\).

  \item Since $X \xRightarrow{\gamma^\sharp}_G u \, X_{i_p} \, v
    \xRightarrow{(X_{i_p}, a\, y\, b\, z)}_G u \, a\, y \, b\, z \, v$ and $G$
    is reduced, we have that $\set{u' \in \mathcal{A}^*
      \mid u \xRightarrow{}^*_G u'} \cdot a \cdot L_y(G) \cdot b \cdot L_z(G) \cdot
    \set{v' \in \mathcal{A}^* \mid v \xRightarrow{}^*_G v'} \subseteq
    L_X(G) \subseteq \patt$, and thus $L_y(G) \cdot L_z(G) \subseteq
    \patt$. We consider only the case $y,z \in \Varsi$---the cases
    $y=\varepsilon$ or $z=\varepsilon$ use similar arguments, and are left as an
    easy exercise. Hence, our proof falls into 4 cases:
    \begin{compactenum}[\upshape(\itshape a\upshape)]
    \item \(L_y(G_{i,a\, y\, b\, z})\cap (a_1\cdot\mathcal{A}^*) = \emptyset \)
      and \( L_z(G_{i,a\, y\, b\, z}) \cap (a_1\cdot\mathcal{A}^*) = \emptyset \).
      Thus \( L_y(G_{i,a\, y\, b\, z})\cdot L_z(G_{i,a\, y\, b\, z}) \subseteq a_2^* \ldots
			a_d^*\). Then fact {\upshape(\itshape\ref{item:easyfact5}\upshape)} for \(\ell=2\) and
      \(r=d\) concludes this case.
    \item \(L_y(G_{i,a\, y\, b\, z})\cap (\mathcal{A}^*\cdot a_d) = \emptyset \)
      and \( L_z(G_{i,a\, y\, b\, z}) \cap (\mathcal{A}^*\cdot a_d) = \emptyset
      \).  Thus \(L_y(G_{i,a\, y\, b\, z}) \cdot L_z(G_{i,a\, y\, b\, z}) \subseteq a_1^*
			\ldots a_{d-1}^*\). Then fact {\upshape(\itshape\ref{item:easyfact5}\upshape)} for
      \(\ell=1\) and \(r=d-1\) concludes this case.
    \item \(L_y(G_{i,a\, y\, b\, z})\cap (\mathcal{A}^*\cdot a_d) = \emptyset \)
      and \( L_z(G_{i,a\, y\, b\, z}) \cap (a_1\cdot \mathcal{A}^*) = \emptyset
      \). Thus we have \( L_y(G_{i,a\, y\, b\, z}) \subseteq a_1^* \ldots
      a_{d-1}^* \) and \( L_z(G_{i,a\, y\, b\, z}) \subseteq a_2^* \ldots
			a_{d}^*\).  By the fact {\upshape(\itshape\ref{item:easyfact5}\upshape)} (with \(\ell=1\),
      \(r=d\)) there exists \(q\), \(1\leq q\leq d\) such that
      \(L_y(G_{i,a\, y\, b\, z}) \subseteq a_1^* \ldots a_q^*\) and \(L_z(G_{i,a\, y\, b\, z})
      \subseteq a_q^* \ldots a_d^*\). Next we show \( 1 < q < d \)
      holds.  In fact, assume the inclusions hold for \(q=1\). Then
      they also hold for \(q=2\) since \(L_z(G_{i,a\, y\, b\, z}) \subseteq a_2^*
      \ldots a_d^*\).  A similar reasoning holds when \(q=d\) since
      \(L_y(G_{i,a\, y\, b\, z}) \subseteq a_1^* \ldots a_{d-1}^* \).
    \item \(L_y(G_{i,a\, y\, b\, z})\cap (a_1\cdot \mathcal{A}^*) = \emptyset \)
      and \( L_z(G_{i,a\, y\, b\, z}) \cap (\mathcal{A}^*\cdot a_d) = \emptyset
      \).  We first observe that it cannot be the case that
      \(L_y(G_{i,a\, y\, b\, z})\) contains some word where \(a_d\) occurs and
      \(L_{z}(G_{i,a\, y\, b\, z})\) contains some word where \(a_1\) occurs for
      otherwise concatenating those two words shows \(
      L_y(G_{i,a\, y\, b\, z})\cdot L_z(G_{i,a\, y\, b\, z}) \nsubseteq a_1^* \ldots
      a_d^*\). This leaves us with three cases:
      \begin{inparaenum}[\upshape(\itshape a\upshape)]
      \item If \(L_y(G_{i,a\, y\, b\, z}) \cap (\mathcal{A}^*\cdot a_d) \neq \emptyset\)
        we find that \(L_z(G_{i,a\, y\, b\, z}) \subseteq a_d^*\), hence that \(
        L_y(G_{i,a\, y\, b\, z}) \subseteq a_2^* \ldots a_{d}^* \) since
        \(L_y(G_{i,a\, y\, b\, z}) \cap (a_1\cdot\mathcal{A}^*)=\emptyset\).
      \item If \(L_z(G_{i,a\, y\, b\, z}) \cap (a_1\cdot \mathcal{A}^*) \neq
        \emptyset\) we find that \(L_y(G_{i,a\, y\, b\, z}) \subseteq a_1^*\),
        hence that \( L_z(G_{i,a\, y\, b\, z}) \subseteq a_1^* \ldots a_{d-1}^*\)
        since \( L_z(G_{i,a\, y\, b\, z}) \cap (\mathcal{A}^*\cdot a_d) =
        \emptyset\).
      \item Then \(L_y(G_{i,a\, y\, b\, z}) \cap (\mathcal{A}^* \cdot a_d ) =
        \emptyset\) and \( L_z(G_{i,a\, y\, b\, z}) \cap (a_1\cdot \mathcal{A}^*)
        = \emptyset\). Hence \( L_y(G_{i,a\, y\, b\, z})\cdot L_z(G_{i,a\, y\, b\, z})
        \subseteq a_2^* \ldots a_{d-1}^*\) and by the fact
				{\upshape(\itshape\ref{item:easyfact5}\upshape)} for \(\ell=2\) and \(r=d-1\) there
        exists \(1 < q < d\) such that \(L_y(G_{i,a\, y\, b\, z}) \subseteq a_2^*
        \ldots a_q^* \) and \(L_z(G_{i,a\, y\, b\, z}) \subseteq a_q^* \ldots
        a_{d-1}^*\).
	\end{inparaenum}\qed
\end{compactenum}
\end{compactenum} 
\end{proof} 

\subsection{Proof of Theorem~\ref{thm:letter-bounded-control-set}}
\begin{proof}[of Theorem~\ref{thm:letter-bounded-control-set}]
We prove the theorem by induction on $d > 0$. If $d=1,2$, we obtain
$\pattg$ from Lemma~\ref{lem:ginsbook-d}, and time needed to compute
$\pattg$, using Algorithm \ref{alg:constant-control-set}, is
$\len{G}^{\mathcal{O}(k)}$. Moreover, we have $L_X^{(k)}(G) =
\hat{L}_X(\pattg \cap \Gamma_{X}^{\df{k}}, G) \subseteq
\hat{L}_X(\pattg \cap \Gamma_{X}^{\df{k+1}}, G)$.

For the induction step, assume $d \geq 3$. W.l.o.g. we assume that $G$
is reduced for $X$, and that $a_1^* \ldots a_d^*$ is the minimal
bounded expression such that $L_X(G) \subseteq a_1^* \ldots
a_d^*$. Consider the partition $\Varse \cup \Varsi = \Vars$ and
$\Varse \cap \Varsi = \emptyset$, defined in the previous. Since $G$
is reduced for $X$, then $X \in \Varse$. Define 
\[\prod_{\mathit{pivot}} =
\set{(X_i,a\, y\, b\, z)\in\prod \mid X_i \in \Varse ~\mbox{and}~a,b \in \mathcal{A}\cup\set{\varepsilon},\ y,z \in \Varsi \cup \set{\varepsilon}}\enspace .\]

By Lemma~\ref{lem:ginsbook-d}, for each $X_i \in \Vars$, such that
$L_{X,X_i}(G) \subseteq a_1^*a_d^*$, there exists a bounded expression
$\Gamma_{1,d}^{X,X_i}$ such that
$L_{X,X_i}^{(k+1)} = \hat{L}_{X,X_i}(\Gamma_{1,d}^{X,X_i} \cap \Gamma_{X,X_i}^{\df{k+1}},
G)$. Moreover, by the induction hypothesis, for each $\ell, m, r$ such
that $1 \leq \ell \leq m \leq r \leq d$, $m - \ell < d - 1$ and $r - m
< d - 1$, and for each $Y,Z \in \Vars$ such that $L_Y(G) \subseteq
a_\ell^* \ldots a_m^*$ and $L_Z(G) \subseteq a_m^* \ldots a_r^*$,
there exist two sets $\mathcal{S}^Y_{\ell \ldots m}, \mathcal{S}^Z_{m
\ldots r}$ of bounded expressions over \(\Delta_{i,aybz}\) such that
$L_Y^{(k)}(G) \subseteq
\hat{L}_Y(\bigcup\mathcal{S}^Y_{\ell \ldots m} \cap \Gamma_{Y}^{\df{k+1}}, G)$ and
$L_Z^{(k)}(G) \subseteq \hat{L}_Z(\bigcup\mathcal{S}^Z_{m \ldots r} \cap
\Gamma_{Z}^{\df{k+1}}, G)$. We extend this notation to \(\varepsilon\), and assume that 
\(\mathcal{S}^{\varepsilon}_{i \ldots j} = \set{\varepsilon}\). We define: 
\begin{align*}
\mbox{IH} & = \{(\ell,m,r) \mid 1 \leq \ell \leq m \leq r \leq d,~m - \ell < d - 1 \land r - m < d - 1\} \\
\mathcal{S}_\patt & = \{ \Gamma_{1,d}^{X,X_i} \cdot (X_i,a\, y\, b\, z)^* \cdot \Gamma' \cdot \Gamma'' 
\mid (X_i,a\, y\, b\, z) \in \prod_{\mathit{pivot}} \land \\
&\quad\quad L_{X,X_i}(G) \subseteq a_1^* a_d^* \land \Gamma'\in\mathcal{S}^y_{\ell \ldots m} \land \Gamma''\in\mathcal{S}^z_{m \ldots r} \land (\ell,m,r) \in IH\}
\end{align*}

First, let us prove that
$L_X^{(k)}(G) \subseteq \hat{L}_X(\bigcup\mathcal{S}_\patt \cap
\Gamma_{X}^{\df{k+1}}, G)$. Let $w \in L_X^{(k)}(G)$ be a word, and $X
\xArrow{\gamma}{\df{k}} w$ be a $k$-index depth first derivation
of $w$ in $G$. Since $w \in L_X^{(k)}(G)$, such a derivation is
guaranteed to exist. By Lemma~\ref{lem:ginsbook-surgery}, there
exists $(X_i, a\, y\, b\, z) \in \prod_{pivot}$, and $\gamma^\sharp \in
(\prod^\sharp)^*$, $\gamma_{y},\gamma_{z} \in (\Delta_{i,aybz})^*$, such
that \(\gamma^\sharp \cdot (X_i,a\, y\, b\, z) \cdot \gamma_y \cdot \gamma_z\) is
a permutation of \(\gamma\), and:
\begin{compactitem}
	\item \(X \xArrow{\gamma^\sharp}{\df{k+1}} u\, X_i\, v\) is a
		step sequence of \({G^\sharp}\) with \(u, v\in \mathcal{A}^*\);

	\item \(y \xArrow{\gamma_{y}}{\df{k_y}} u_y\) and \(z
		\xArrow{\gamma_{z}}{\df{k_z}} u_z\) are derivations of
		\(G_{i,aybz}\) (hence  \(u_y\, u_z \in \mathcal{A}^* \)), 
                \(\max(k_y,k_z)\leq k\) and \(\min(k_y,k_z)\leq k-1\);

	\item \(X \xArrow{\gamma^\sharp \cdot (X_i,aybz) \cdot \gamma_y \cdot \gamma_z}{\df{k+1}} w\) 
              is a derivation of \({G^\sharp}\) if \(y \xArrow{\gamma_y}{\df{k-1}} u_y\) is a derivation of \({G_{i,aybz}}\); 
        
        \item \(X \xArrow{\gamma^\sharp \cdot (X_i,aybz) \cdot \gamma_z \cdot \gamma_y}{\df{k+1}} w\) 
              is a derivation of \({G^\sharp}\) if \(z \xArrow{\gamma_z}{\df{k-1}} u_z\) is a derivation of \({G_{i,aybz}}\); 

	\item \(L_{X,X_i}(G^\sharp) \subseteq a_1^*a_d^*\);

	\item \(L_{y}(G_{i,aybz}) \subseteq a_{\ell}^* \ldots a_{m}^*\) if \(y \in \Varsi\); 
              \(L_{z}(G_{i,aybz}) \subseteq a_{m}^* \ldots a_{r}^*\) if \(z \in \Varsi\), with
		\(1\leq \ell \leq m \leq r \leq d\), such that \( m-\ell < d - 1\)
		and \( r-m < d - 1\).
\end{compactitem}
Let us consider the case where $y,z \in \Vars$ (the other cases of
$y=\varepsilon$ or $z=\varepsilon$ being similar, are left to the reader). 
We also assume \(k_y \leq k-1\) the other case being symmetric.

Therefore, by the induction hypothesis there exist bounded expressions
\(\Gamma' \in \mathcal{S}^y_{\ell \ldots m}\) and \(\Gamma'' \in \mathcal{S}^z_{m \ldots r}\) 
such that
$y \xArrow{\gamma'}{\df{k_y+1}} u_y$ and $z
\xArrow{\gamma''}{\df{k_z+1}} u_z$, for some control words 
\(\gamma'\in\Gamma'\) and \(\gamma''\in\Gamma''\). 
If $L_{X,X_i}(G^\sharp) \subseteq a_1^*a_d^*$, by Lemma~\ref{lem:ginsbook-d}, 
there exists a control word $\gamma^\sharp \in \Gamma_{1,d}^{X,X_i}$ such that 
\(X \xArrow{\gamma^\sharp}{\df{k+1}} u \, X_i \, v\) is a
\((k+1)\)-index depth first step sequence in \(G^\sharp\). 
It follows that:
\[
X \xArrow{\gamma^\sharp}{\df{k+1}} u \, X_i \, v 
\xArrow{(X_i,aybz)}{} u\, a\, y\, b\, z\, v
\xArrow{\gamma'}{\df{k_y+2}} u\, a\, u_y\, b\, z\, v
\xArrow{\gamma''}{\df{k_z+1}} u\, a\, u_y\, b\, u_z\, v = w\enspace .
\]
Observe that \( u\, a\, y\, b\, z\, v \xArrow{\gamma'}{\df{k_y+2}}
u\, a\, u_y\, b\, z\, v\)
because \(a,b,u,v \in \mathcal{A}^*\), \(z \in \Vars\) and \(
y \xArrow{\gamma'}{\df{k_y+1}} u_y\).  Since \(k_y\leq k-1\)
and \(k_z\leq k\), we find that \(k_y+2 \leq k+1\) and \(k_z+1\leq
k+1\), respectively.  Hence the overall index of the foregoing
derivation with control word \( (\gamma^\sharp\,
(X_i,aybz)\, \gamma'\, \gamma'')\) is at most \(k+1\).
Since it is also a depth-first derivation, we finally find
that \(w\in \hat{L}_{X}(\bigcup\mathcal{S}_\patt\cap \Gamma_X^{\df{k+1}}, G)\), i.e.\ 
\(L^{(k)}_X(G) \subseteq \hat{L}_{X}(\bigcup\mathcal{S}_\patt\cap \Gamma_X^{\df{k+1}}, G)\).

In the following, we address the time complexity of the construction
of $\mathcal{S}_\patt$, and of each bounded expression
$\Gamma\in\mathcal{S}_\patt$. We refer to
Algorithm~\ref{alg:bounded-control-set} in the following. Notice first
that both the \Call{minimizeExpression}{}
and \Call{partitionNonterminals}{} functions take time
$\mathcal{O}(\len{G})$, because emptiness of the intersection between
a context-free grammar and a finite automaton of constant size is linear in
the size of the grammar~\cite[Section~5]{BEF+ipl}. Moreover, the inclusion check on
(line~\ref{line:inclusion}) is possible also in time
$\mathcal{O}(\len{G})$ (see Remark\ref{rem:letter-bounded-inclusion}). By
Lemma~\ref{lem:ginsbook-d}, a call to
$\Call{ConstantBoundedControlSet}{G,\pat,k}$ will take time
$\len{G}^{\mathcal{O}(k)}$. Lemma~\ref{lem:ginsbook-surgery} shows
that the sizes of the bounded expression considered at
lines~\ref{line:rec-y} and \ref{line:rec-z}, in a recursive call, sum
up to the size of the bounded expression for the current
call. Thus the total number of recursive calls is at most \(d\).  We
thus let $T(d)$ denote the time needed for the top-level call of the
function $\Call{LetterBoundedControlSet}{G,X,a_1^* \ldots a_d^*,k}$ to
complete. Since the loop on
(lines~\ref{line:for-begin}--\ref{line:for-end}) will be taken at most
$\card{\prod}\leq\len{G}$ times, we obtain:
\[ T(d) = \len{G}^{\mathcal{O}(k)} + \len{G}( \mathcal{O}(\len{G}) + 2\, T(d{-}1))\]
where $2\, T(d{-}1)$ is the time needed for the two recursive calls at
lines \ref{line:rec-y} and \ref{line:rec-z} to complete.
Because \(T(0)=\mathcal{O}(\len{G}) + \len{G}^{\mathcal{O}(k)} \), we
find that $T(d) = \len{G}^{\mathcal{O}(k)+d}$ .

Finally, the time needed to build each bounded
expression \(\Gamma\in\mathcal{S}_\patt\) can be evaluated by
observing that each such expression is uniquely determined by a
sequence \(\sigma\in\prod^*\) of productions of \(G\) that are
successively chosen at line \ref{line:for-begin}. Let us consider now
a slightly modified version of Algorithm \ref{alg:bounded-control-set}
that is guided by a sequence \(\sigma\in\prod^*\) received in input
--- the function \(\Call{LetterBoundedControlSet}{G,X,a_s^* \ldots
a_t^*,k,\sigma}\) receives an extra parameter and returns also the
suffix of \(\sigma\) that remains after processing the first
production on \(\sigma\), i.e.\ the recursive calls at
lines \ref{line:rec-y} and \ref{line:rec-z} have returned.  Since the
sum of sizes of the bounded expressions for these recursive
calls is at most \(t-s\), by Lemma~\ref{lem:ginsbook-surgery}, we
obtain that, in total, Algorithm \ref{alg:bounded-control-set}
initiates at most \(d\) calls to
\(\Call{LetterBoundedControlSet}{}\). We recall also that the
prefix of each call (before making recursive calls) takes
time \(\mathcal{O}(\len{G})
+ \len{G}^{\mathcal{O}(k)}\). Since \(L_X(G) \subseteq \patt\),
assuming that \(\patt\) is minimal, we
have \(\len{\patt} \leq \len{G}\). Hence, the time needed to compute a
bounded expression \(\Gamma\in\mathcal{S}_\patt\) is bounded
by: \[d \cdot (\mathcal{O}(\len{G})
+ \len{G}^{\mathcal{O}(k)}) \leq \len{G}\cdot (\mathcal{O}(\len{G})
+ \len{G}^{\mathcal{O}(k)}) = \len{G}^{\mathcal{O}(k)}\enspace.\]\qed
\end{proof}

\subsection{Proof of Lemma~\ref{lem:optimality}}
\begin{proof}[of Lemma~\ref{lem:optimality}]
Given \(k > 0\), consider the
following grammar: 
\[G = \tuple{ \{ X_i \mid 0\leq i\leq k \}, \{ a\}, \{ X_i \rightarrow
X_{i-1}\, X_{i-1} \mid 1\leq i \leq k \}\cup \{ X_0 \rightarrow a \}
}\enspace .\] Notice that \(L_{X_k}(G) = \{ a^{2^k} \} \subseteq a^*\)
and \(\len{G} = \mathcal{O}(k)\). Moreover, every depth-first
derivation of \(G\) has index \(k+1\).

For each \(i \in \set{1, \ldots, n}\), let \(p_i\) be the
production \(X_i \rightarrow X_{i-1}\, X_{i-1}\) of \(G_n\), and
let \(p_0\) be \(X_0 \rightarrow a\). It is easy to see that, because
the derivation is depth-first, the control word \(\gamma\)
generating \(a^{2^k}\) from \(X_k\) is unique. Now suppose that there
exists \(\Gamma = w_1^* \ldots w_d^*\) such that \(\gamma =
w_1^{i_1} \ldots w_d^{i_d}\), for some \(i_1, \ldots, i_d \geq
0\). Next we show that, for all \(j=1,\ldots,d\) we must
have \(i_j \leq 2\).

We first make this crucial observation, since the derivation tree is
binary and its traversal is depth-first, we have that for
every \(p_i\), every three consecutive occurrences
\(\ell_1<\ell_2<\ell_3\) of \(p_i\)---\( (\gamma)_{\ell_1}=(\gamma)_{\ell_2}=(\gamma)_{\ell_3}=p_i\)---implies that there exists a position
\(\ell\) between \(\ell_1\) and \(\ell_3\) such that \( (\gamma)_{\ell} = p_{i+1} \).
Otherwise that would imply that the derivation tree has a node \(X_{i+1}\) with three \(X_{i}\) children;
or that the tree was not traversed in depth-first.

Take an arbitrary \(w_j\) in \(\Gamma\) and let \(g\) be the greatest
index of a production occurring in \(w_j\).  The number \(i_j\) of
repetitions of \(w_j\) cannot be greater than two for
otherwise \(p_g\) contradicts the previous fact.  So this concludes
that no \(i_j\) can be larger than \(2\).

Now, since the only string of \(L_{X^k}(G)\) has length \(2^k\) and that
no rule produces more than one terminal then
necessarily \(\len{\gamma} \geq 2^k\). So we show that
\(\len{\Gamma}\) has to be at least \(2^{k-1}\). By contradiction, suppose
\(\len{\Gamma}\leq (2^{k-1}-1)\), then since in order to capture \(\gamma\) no
word of \(\Gamma\) can occur more than twice, the longest control word that
\(\Gamma\) can capture is \(2 \cdot (2^{k-1}-1) = 2^{k}-2\) which is shorter
than \(2^k=\len{\gamma}\), hence a contradiction.\qed
\end{proof}

\subsection{Proof of Theorem \ref{thm:fo-reachability}}\label{app:fo-reachability}

\begin{proof}[of Theorem \ref{thm:fo-reachability}]
The \textsc{Np}-hard lower bound is by reduction from the Positive
Integer Linear Programming (PILP) problem, which is known to
be \textsc{Np}-complete \cite[Corollary 18.1a]{schrijver}.  Consider
the following instance of PILP, with variables $k_1, \ldots, k_m$
ranging over positive integers:

  \[\left\{\begin{array}{lcl}
  a_{11} \cdot k_1 + \ldots + a_{m1} \cdot k_m + c_1 & \leq & 0 \\
  & \cdots & \\
  a_{1n} \cdot k_1 + \ldots + a_{mn} \cdot k_m + c_n & \leq & 0 
  \end{array}\right.\]
  and denote $\vec{a}_i = \langle a_{i1}, \ldots, a_{in} \rangle \in \zed^n$,
  for all $i = 1, \ldots, m$, and $\vec{c} = \langle c_1, \ldots,
  c_n \rangle \in \zed^n$. Let $\vec{x}=\set{x_1,\ldots,x_n}$ be a set of integer variables.
  Consider the program $\mathcal{P}_{\mathrm{PILP}}=\tuple{G,X_0,\sem{.}}$, where
  $G = \tuple{\Vars,\Sigma,\prod}$:
  \begin{compactitem}
  \item $\Vars=\set{X_0,\ldots,X_{m+1}}$,
  \item $\Sigma=\set{\tau_i \mid i=0,\ldots,m+1} \cup \set{\lambda_i \mid i=0,\ldots,m}$,
  \item $\prod = \set{X_i \rightarrow \tau_i\, X_{i+1} \mid i=0,\ldots,m} \cup 
  \set{X_i \rightarrow \lambda_i\, X_i \mid i=1,\ldots,m} \cup 
  \set{X_{m+1} \rightarrow \tau_{m+1}}$, 
  \item the semantics of the words \(w\in L_{X_0}(G)\) is defined by the following relations:
  \[\begin{array}{lcll}
  \rho_{\tau_0} & \equiv & \vec{x}'=0 \\
  \rho_{\tau_i} & \equiv & \vec{x}'=\vec{x} & \quad \mbox{for all $i=1,\ldots,m-1$} \\
  \rho_{\lambda_i} & \equiv & \vec{x}'=\vec{x}+\vec{a}_i &\quad \mbox{for all $i=1,\ldots,m$} \\ 
  \rho_{\tau_m} & \equiv & \vec{x}'=\vec{x}+\vec{c} \\ 
  \rho_{\tau_{m+1}} & \equiv & \vec{x} \leq \vec{0} 
  \end{array}\]
  \end{compactitem}
  Let
  $\patt_{\mathrm{PILP}}=\tau_0^* \lambda_1^* \tau_1^* \ldots \lambda_m^* \tau_m^* \tau_{m+1}^*$
  be a bounded expression. It is immediate to check that the PILP
  problem has a solution if and only
  if \(\foreach(\mathcal{P}_{\mathrm{PILP}}, \patt_{\mathrm{PILP}})\)
  holds.  This settles the \textsc{Np}-hard lower bound for the class
  of fo-reachability problems.

  We show next that the class of fo-reachability
  problems \(\foreach(\mathcal{P}, \pat)\) is included
  in \textsc{Nexptime}. Let \(\mathcal{P} = \tuple{G,I,\sem{.}}\) be a
  given program, where \(G=\tuple{\Vars,\Sigma,\prod}\) is its
  underlying grammar, and \(\pat = w_1^* \ldots w_d^*\) a bounded
  expression. By Lemma \ref{lem:intersection}, there exists a
  grammar \(G^\cap = \tuple{\Vars^\cap,\Sigma,\prod^\cap}\) such
  that: \[\bigcup_{1\leq s\leq x\leq d} L_{[\textsc{q}^{(s)}_1
  I \textsc{q}^{(x)}_{1}]}(G^{\cap}) = L_I(G) \cap \pat\enspace.\]
  Moreover, we have that $\len{G^\cap}
  = \mathcal{O}(\len{\pat}^3 \cdot \len{G})$.
  Let \(\mathcal{P}_{s,x}=\tuple{G^\cap, [\textsc{q}^{(s)}_1
  I \textsc{q}^{(x)}_{1}], \sem{.}}\) be a program, for each \(1 \leq
  s \leq x \leq d\). Since the alphabets of \(G\) and \(G^\cap\)
  coincide, the mapping of symbols to octagonal relations is the same
  for \(G\) and \(G^\cap\), hence: \[\bigcup_{1\leq s\leq x\leq
  d} \sem{\mathcal{P}_{s,x}} = \sem{\mathcal{P}}_\pat\enspace.\]
  Then \(\sem{P}_\pat \neq \emptyset\) if and only
  if \(\sem{\mathcal{P}_{s,x}} \neq \emptyset\), for some \(1 \leq
  s \leq x \leq d\). We have reduced the original
  problem \(\foreach(\mathcal{P},\pat)\)
  to \(\mathcal{O}(\len{\pat}^2)\) reachability problems, of
  size \(\mathcal{O}(\len{\pat}^3 \cdot \len{G})\) each. In the
  following we fix \(1 \leq s \leq x \leq d\), focus w.l.o.g on the
  problem \(\foreach(\mathcal{P}_{s,x},\pat)\) and we denote by \(X =
  [\textsc{q}^{(s)}_1 I \textsc{q}^{(x)}_{1}]\) in the rest of this
  proof.

  Let \(\mathcal{A} = \set{a_1,\ldots,a_d}\) be an alphabet disjoint
  from \(\Sigma\) and \(\patt=a_1^* \ldots a_d^*\) be a strict
  letter-bounded expression, such that \(\pat=h(\patt)\),
  where \(h: \mathcal{A} \rightarrow \Sigma^*\) is the
  homomorphism \(h(a_i) = w_i\), for all \(i=1,\ldots,d\). By
  Lemma \ref{lem:interface} there exists a
  grammar \(G^\bowtie=\tuple{\Vars^\cap, \mathcal{A}, \prod^\bowtie}\)
  such that, for every \(k >
  0\): \begin{compactenum} 

  \item \(L_X^{(k)}(G^{\bowtie}) =
        h^{-1}(L_X^{(k)}(G^\cap)) \cap \patt\),

  \item for each \(\Gamma \subseteq \left(\prod^\bowtie\right)^*\),
  such that \(L_X^{(k)}(G^\bowtie) \subseteq \hat{L}_X(\Gamma,
  G^\bowtie)\), we
  have \(L_X^{(k)}(G^\cap) \subseteq \hat{L}_X(\iota^{-1}(\Gamma),
  G^\cap)\).  \end{compactenum} Moreover, we have \(\len{G^\bowtie}
  = \mathcal{O}(\len{\pat}^3 \cdot \len{G})\).
  Since \(L_X^{(k)}(G^{\bowtie}) \subseteq \patt\), by
  Theorem \ref{thm:letter-bounded-control-set}, there exists a
  set \(\mathcal{S}_\patt\) of bounded expressions
  over \(\prod^\bowtie\) such
  that: \[L_X^{(k)}(G^\bowtie) \subseteq \hat{L}_X\left(\bigcup\mathcal{S}_\patt \cap \Gamma_X^{\df{k+1}}(G^\bowtie),
  G^\bowtie\right)\enspace.\] Hence, by Lemma \ref{lem:interface}, we
  obtain: \[L_X^{(k)}(G^\cap) \subseteq \hat{L}_X\left(\iota^{-1}\left(\bigcup\mathcal{S}_\patt\right) \cap \Gamma_X^{\df{k+1}}(G^\cap),
  G^\cap\right)\enspace.\] We used the fact
  that \(\iota^{-1}(\Gamma_X^{\df{k+1}}(G^\bowtie))
  = \Gamma_X^{\df{k+1}}(G^\cap)\).
  Because \(L_X(G^\cap) \subseteq \pat\), there
  exists \(K=\mathcal{O}(\len{G^\cap})\) such that \(L_X(G^\cap) =
  L_X^{(K)}(G^\cap)\) as Theorem~\ref{thm:luker} shows.
  Hence \(K=\mathcal{O}(\len{\pat}^3 \cdot \len{G})\) as well. We
  obtain the following: \[L_X(G^\cap) \subseteq
  L_X^{(K)}(G^\cap) \subseteq \hat{L}_X\left(\iota^{-1}\left(\bigcup\mathcal{S}_\patt\right) \cap \Gamma_X^{\df{K+1}}(G^\cap),
  G^\cap\right) \subseteq L_X(G^\cap)\] thus, \(L_X(G^\cap)
  = \hat{L}_X\left(\iota^{-1}\left(\bigcup\mathcal{S}_\patt\right) \cap \Gamma_X^{\df{K+1}}(G^\cap),
  G^\cap\right)\). Assume that \(\mathcal{S}_\patt
  = \set{\Gamma_1, \ldots, \Gamma_m}\), for some \(m>0\), and
  denote \(\iota^{-1}(\Gamma_i)\) by \(\widetilde{\Gamma}_i\). We have
  that, for each derivation \(X \xRightarrow[\df{k+1}]{\gamma} w\)
  of \(G^\cap\), \(\sem{w}=\emptyset\)
  if{}f \(\sem{\gamma}=\emptyset\) \cite[Lemma 2]{gik13}.  As a
  result, \(\sem{\mathcal{P}_{s,x}} \neq \emptyset\) if{}f there
  exists \(i=1,\ldots,m\)
  and \(\gamma \in \widetilde{\Gamma_i} \cap \Gamma_X^{\df{k+1}}(G^\cap)\),
  such that \(\sem{\gamma}\neq\emptyset\). By
  Theorem \ref{thm:letter-bounded-control-set}, each set \(\Gamma_i\)
  can be constructed in time: \[\len{G^\bowtie}^{\mathcal{O}(K)} =
  (\len{\pat}^3 \cdot \len{G})^{\mathcal{O}(K)} =
  (\len{\pat}^3 \cdot \len{G})^{\mathcal{O}(\len{\pat}^3 \cdot \len{G})}
  = 2^{\mathcal{O}(\len{\pat}^3 \cdot \len{G} \cdot (\log \len{\pat}
  + \log \len{G}))}\enspace.\] We have used the
  facts \(\len{G^\bowtie} = \mathcal{O}(\len{\pat}^3 \cdot \len{G})\)
  and \(K = \mathcal{O}(\len{\pat}^3 \cdot \len{G})\).

  By Lemma \ref{fsa-dfk}, there exists a finite
  automaton \(A^{\df{K+1}}_{G^\cap}\) that recognizes the
  language \(\Gamma_X^{\df{K+1}}(G^\cap)\). Equivalently, we consider
  a grammar \(\mathcal{G}^{\df{K+1}}\), such
  that \(L_{X^{\tuple{0}}}(\mathcal{G}^{\df{K+1}})
  = \Gamma_X^{\df{K+1}}(G^\cap)\), where \(X^{\tuple{0}}\) is the
  ranked nonterminal corresponding to the initial state
  of \(A^{\df{K+1}}_{G^\cap}\) in
  Lemma \ref{fsa-dfk}. Let \(\mathcal{Q}
  = \tuple{\mathcal{G}^{\df{K+1}},X^{\tuple{0}},\sem{.}}\) be the
  program associated
  with \(\mathcal{G}^{\df{K+1}}\). If \(\mathcal{P}\) was assumed to
  be an octagonal program, then so is \(\mathcal{Q}\).

  The problem \(\foreach(\mathcal{P}_{s,x},\pat)\) is thus equivalent
  to the finite set of
  problems \(\foreach(\mathcal{Q},\widetilde{\Gamma}_i)\),
  for \(i=1,\ldots,m\). The size of \(\mathcal{G}^{\df{K+1}}\)
  is \[\len{\mathcal{G}^{\df{K+1}}} = \len{G^\cap}^{\mathcal{O}(K)} =
  (\len{\pat}^3 \cdot \len{G})^{\mathcal{O}(K)} =
  2^{\mathcal{O}(\len{\pat}^3 \cdot \len{G} \cdot (\log \len{\pat}
  + \log \len{G}))}\enspace.\] Hence the size of the input to each
  problem \(\foreach(\mathcal{Q},\widetilde{\Gamma}_i)\)
  is \(2^{\mathcal{O}(\len{\pat}^3 \cdot \len{G} \cdot
  (\log \len{\pat} + \log \len{G}))}\). Since \(\mathcal{Q}\) is a
  procedure-less octagonal program, and each such problem can be
  solved in \textsc{Nptime} \cite[Theorem 10]{bik14}, this provides
  a \textsc{Nexptime} decision procedure for the
  problem \(\foreach(\mathcal{P}_{s,x},\pat)\).

  We are left with proving that the \(\foreach(\mathcal{P},\pat)\)
  problem is in \textsc{Np}, when \(\sem{P}=\sem{P}^{(k)}\), for a
  constant \(k>0\). To this end, we define a
  grammar \(G_k=\tuple{\Vars \times \set{0,\bar{0},\ldots,k,\bar{k}}, \Sigma, \prod_k}\)
  such that \(L_X(G)^{(k)} = L_{(X,k)}(G_k)\) \cite[Definition
  3.1]{ls13}. Using the fact that, for each production
  $(Z,w)\in\prod$, there are at most two nonterminals in $w$, we
  establish that $\len{G_k}\leq 3k\len{G} + k(k+1)$, hence
  $\len{G_k}=\mathcal{O}(k^2 \cdot\len{G})$. 

  The corresponding program is \(\mathcal{P}_k
  = \tuple{G_k,(I,k),\sem{.}}\). By applying the reduction above, we
  obtain a set of
  problems \(\foreach(\mathcal{Q}_k,\widetilde{\Gamma}_i)\), each of
  which of size \((\len{\pat}^3 \cdot \len{G_k})^{\mathcal{O}(k)} =
  (\len{\pat}^3 \cdot (k^2 \cdot\len{G}))^{\mathcal{O}(k)}\). Since $k$ is
  constant, we can solve this problem in \textsc{Nptime}, using
  an \textsc{Np} procedure \cite[Theorem 10]{bik14}. Since
  the \textsc{Np}-hard lower bound was proved above, the problem
  is \textsc{Np}-complete. \qed
\end{proof}

\end{document}